\newtheorem{theorem}{Theorem}[section]
\newtheorem{lemma}[theorem]{Lemma}
\newtheorem{definition}[theorem]{Definition}
\newtheorem{corollary}[theorem]{Corollary}
\newcommand{\Nat}{{\mathbb N}}
\newcommand{\moto}{\to_{\fimp}}
\newtheorem{examples}{Example}
\newcommand{\undots}[1]{#1_{\!\!\!\!\!\dots}\,\!}
\newcommand{\unline}[1]{#1\!\!\!\!\underline{\,\,\,\,}}
\newcommand{\biggap}{\quad \quad \quad}
\newcommand{\Context}{\Gamma}
\newcommand{\ContextEmpty}{\emptyset}
\newcommand{\ContextCat}[2]{#1 , #2}
\newcommand{\TypeVar}{\alpha}
\newcommand{\ValType}{\mathsf{A}}
\newcommand{\ValTypeT}{\mathsf{B}}
\newcommand{\ComType}{\mathsf{\unline{C}}}
\newcommand{\ComTypeT}{\mathsf{\unline{D}}}
\newcommand{\GenType}{\undots{\mathsf{E}}}
\newcommand{\GenTypeT}{\undots{\mathsf{F}}}
\newcommand{\ThunkType}[1]{\mathsf{U} \, #1}
\newcommand{\UnitType}{\mathsf{1}}
\newcommand{\NatType}{\mathsf{N}}
\newcommand{\SumType}[3]{\mathsf{\Sigma}_{#1 \in #2} #3}
\newcommand{\PairType}[2]{#1 \times #2}
\newcommand{\ForceType}[1]{\mathsf{F} #1}
\newcommand{\FunctType}[2]{#1 \to #2}
\newcommand{\ProdType}[3]{\Pi_{#1 \in #2} #3}
\newcommand{\Vjudge}[3]{#1 \vdash^v #2 : #3}
\newcommand{\Cjudge}[3]{#1 \vdash^c #2 : #3}
\newcommand{\derivation}[2]{\frac{#1}{#2}}
\newcommand{\termvar}{x}	\newcommand{\termvart}{y}
\newcommand{\ValTerm}{V}	\newcommand{\ValTermT}{W}
\newcommand{\ValTermD}{L}
\newcommand{\ComTerm}{\underline{M}} 	\newcommand{\ComTermT}{\underline{N}}
\newcommand{\ComTermD}{\underline{K}}
\newcommand{\GenTerm}{\undots{P}} 	\newcommand{\GenTermT}{\undots{R}} 
\newcommand{\GenTermD}{\undots{Q}}
\newcommand{\unit}{*}
\newcommand{\zero}{\mathsf{Z}}
\newcommand{\succe}[1]{\mathsf{S}(#1)}
\newcommand{\caseN}[4]{\mathsf{case} \,\, #1 \,\, \mathsf{in} \,\, \{#2, \succe{#3} \Rightarrow #4\}}
\newcommand{\valseq}[3]{\mathsf{let} \,\, #1 \,\, \mathsf{be} \,\, #2 \,.\, #3}
\newcommand{\return}[1]{\mathsf{return}(#1)}
\newcommand{\comseq}[3]{#1 \,\, \mathsf{to} \,\, #2 \,.\, #3}
\newcommand{\lambt}[2]{\lambda #1 \,.\, #2}
\newcommand{\appt}[2]{#1 \cdot #2}
\newcommand{\thunk}[1]{\mathsf{thunk}(#1)}
\newcommand{\force}[1]{\mathsf{force}(#1)}
\newcommand{\pair}[2]{(#1,#2)}
\newcommand{\patmat}[2]{\mathsf{pm} \,\, #1 \,\, \mathsf{as} \,\, #2}
\newcommand{\prodlam}[2]{\langle #1 \mid #2 \rangle}
\newcommand{\fix}[1]{\mathsf{fix}(#1)}
\newcommand{\reduct}[2]{#1 \; \rightsquigarrow \; #2}
\newcommand{\Stack}{S}	\newcommand{\Stackt}{Z}
\newcommand{\StackEmpty}{\varepsilon}
\newcommand{\StackReduct}[2]{#1 \; \rightarrowtail \; #2}
\newcommand{\Terms}[1]{\mathit{Terms}(#1)}
\newcommand{\EfOp}{\mathsf{op}}
\newcommand{\Trees}[1]{\mathit{T}_{\Sigma}(#1)}
\newcommand{\Treei}[2]{\mathit{Trees}_{#1}(#2)}
\newcommand{\numeral}[1]{\overline{#1}}
\newcommand{\Predicate}{\Theta}
\newcommand{\formulas}[1]{\mathit{Form}(#1)}
\newcommand{\Quanti}{\mathbb{A}}
\newcommand{\Bool}{\mathbb{B}}
\newcommand{\True}{\textbf{\textit{T}}}
\newcommand{\False}{\textit{\textbf{F}}}
\newcommand{\denote}[1]{\llbracket #1 \rrbracket}
\newcommand{\Qrelator}[1]{\QObser(#1)}
\newcommand{\arbsim}{\mathcal{R}}
\newcommand{\Qmodels}[2]{(#1 \models #2)}
\newcommand{\PrePos}{\sqsubseteq^+}
\newcommand{\PreGen}{\sqsubseteq}
\newcommand{\fimp}{\trianglelefteq}
\newcommand{\fpmi}{\trianglerighteq}
\newcommand{\RightSet}[2]{#1 \uparrow\!\! (#2)}
\newcommand{\prebas}{\preceq}
\newcommand{\Bord}{\preccurlyeq}
\newcommand{\modbel}{\curlyeqprec}
\newcommand{\QBS}{\textsf{QBS}}
\newcommand{\EquForm}[1]{\{#1\}}
\newcommand{\ThunkForm}[1]{\langle #1 \rangle}
\newcommand{\PairForm}[2]{(#1,#2)}
\newcommand{\FunctForm}[2]{(#1 \mapsto #2)}
\newcommand{\ModalForm}[2]{#1 (#2)}
\newcommand{\DisForm}[1]{\bigvee #1}
\newcommand{\ConForm}[1]{\bigwedge #1}
\newcommand{\NegForm}[1]{\neg (#1)}
\newcommand{\StepForm}[2]{#1_{\fpmi #2}}
\newcommand{\StabForm}[1]{\kappa_{#1}}
\newcommand{\FirForm}[1]{\pi_1 #1}
\newcommand{\SecForm}[1]{\pi_2 #1}
\newcommand{\SetVar}{D}
\newcommand{\PosLog}{\mathcal{V}^{+}}
\newcommand{\GenLog}{\mathcal{V}}
\newcommand{\LogPre}[1]{\sqsubseteq_{#1}}
\newcommand{\QObser}{\mathcal{Q}}
\newcommand{\qmodal}{q}
\newcommand{\ComExt}[1]{\widehat{#1}}
\newcommand{\HowClo}[1]{(#1)^{\bullet}}
\newcommand{\OpeExt}[1]{(#1)^{\circ}}
\newcommand{\cphi}{\unline{\phi}}
\newcommand{\cpsi}{\unline{\psi}}
\newcommand{\gphi}{\undots{\phi}}
\newcommand{\gpsi}{\undots{\psi}}
\newcommand{\Dord}{\,\Trees{\fimp}\,}
\newcommand{\Tord}{\leq}
\newcommand{\EffCost}[2]{\mathsf{cost}_{#1}(#2)}
\newcommand{\EfCost}[1]{\mathsf{cost}_{#1}}
\newcommand{\enumeratext}[1]{
}
  \title{Quantitative Logics for Equivalence of Effectful Programs} \author{Niels Voorneveld\footnote{Email:
		Niels.Voorneveld@fmf.uni-lj.si}
	\footnote{This material is based upon work supported by the Air Force Office of Scientific Research under award number FA9550-17-1-0326.
		This project has received funding from the European Union’s Horizon 2020 research and innovation
		programme under the Marie Skłodowska-Curie grant agreement No 731143.}\\
	University of Ljubljana\\
	Ljubljana, Slovenia
}
\begin{document}
\maketitle

\begin{abstract} 
	In order to reason about effects, we can define quantitative formulas to describe behavioural aspects of effectful programs. These formulas can for example express probabilities that (or sets of correct starting states for which) a program satisfies a property. 
	Fundamental to this approach is the notion of quantitative modality, which is used to lift a property on values to a property on computations.
	Taking all formulas together, we say that two terms are equivalent if they satisfy all formulas to the same quantitative degree.
	Under sufficient conditions on the quantitative modalities, this equivalence is equal to a notion of Abramsky's applicative bisimilarity, and is moreover a congruence.
	We investigate these results in the context of Levy's call-by-push-value with general recursion and algebraic effects.
	In particular, the results apply to (combinations of) nondeterministic choice, probabilistic choice, and global store.
\end{abstract}
%

\section{Introduction}\label{section:introduction}
There are many notions of program equivalence for languages with effects. 
In this paper, we explore the notion of \emph{behavioural equivalence}, which states that programs may be considered \emph{behaviourally equivalent} if they satisfy the same behavioural properties. 
This can be made rigorous by defining a logic, where each formula $\phi$ denotes a certain behavioural property. 
We write $\Qmodels{\GenTerm}{\phi}$ to express the satisfaction of formula $\phi$ by term $\GenTerm$, which is usually given by a Boolean truth value (true or false).
Two terms $\GenTerm$ and $\GenTermT$ are said to be behaviourally equivalent if they satisfy the same formulas.
Such an approach is taken in for example \cite{Henessy85}.

In particular, we use this method to define equivalence for a language with \emph{algebraic effects} in the sense of Plotkin and Power \cite{effect}. Effects can be seen as aspects of computation which involves interaction with the world `outside' the environment in which the program runs. They include: exceptions, nondeterminism, probabilistic choice, global store, input/output, cost, etc. The examples given have common ground in the work of Moggi \cite{monad}, and can moreover be expressed by specific effect triggering operations making them `algebraic' in nature.
In the presence of such algebraic effects, computation terms need not simply reduce to a single \emph{terminal term} (that is a \emph{value}), they may also invoke effects on the way. Following \cite{effect,op_meta}, we consider a computation term to evaluate to an \emph{effect tree}, whose nodes are effect operators and leaves are terminal terms. The paper \cite{modal} introduced modalities that lift boolean properties of values to boolean properties of the trees modelling their computations. 
See \cite{Pitts91,Moggi94,Pretnar08} for alternative ways in which logics can be used to describe properties of effects.

The use of a Boolean logic does however not readily adapt to several examples of effects, for example the combination of probability and nondeterminism.
The literature on compositional program verification shows the usefulness of quantitative (e.g. real-number valued) program logics for verifying programs with probabilistic behaviour, possibly in combination with nondeterminism \cite{Kozen1985,McIver2004}.
The paper \cite{modal} develops a general Boolean-valued framework which, although featuring many examples, does not apply to the combination of probability and nondeterminism.

This paper provides a general framework for \emph{quantitative} logics for expressing behavioural properties of programs with effects, generalising the Boolean-valued framework from \cite{modal}.
We consider a quantitative (quantity-valued) satisfaction relation `$\models$', where $\Qmodels{\GenTerm}{\phi}$ is given by an element from a quantitative truth space $\Quanti$ (a \emph{degree} of satisfaction). 
This allows us to ask open questions about programs, like  ``\emph{What is the probability that ...}" or  ``\emph{What are the correct global starting states for ...}".
We define equivalence by stating that programs $\GenTerm$ and $\GenTermT$ are equivalent, if for any formula $\phi$ we have $\Qmodels{\GenTerm}{\phi} = \Qmodels{\GenTermT}{\phi}$ ($\GenTerm$ satisfies $\phi$ precisely as much as $\GenTermT$ does).
A key feature of the logic is the use of \emph{quantitative modalities} to lift quantitative properties on value types to quantitative properties on computation types.

As in \cite{modal}, we are able to establish that the behavioural equivalence defined as above is a \emph{congruence}, as long as suitable properties on the quantitative modalities are satisfied. These properties require notions of \emph{monotonicity}, \emph{continuity}, and a notion of preservation over sequencing called \emph{decomposability}. As in \cite{modal}, the congruence is established by proving that given one of the properties (leaf-monotonicity), our behavioural equivalence is equal to an effect-sensitive 
notion of Abramsky's \emph{applicative bisimilarity} \cite{Abramsky90,Relational}. Given further properties on the modalities, this relation can be proven to be compatible using Howe's method \cite{How}.

The main contribution of this paper is the generalisation of \cite{modal}, and the corresponding generalised results.
This goes through smoothly, though there are some subtleties like what to take as primitive in a quantitative setting. In particular, we will see the necessity of a threshold operation. 
The other main contributions are the examples illustrating the quantitative approach.
Some examples such as the combination of nondeterminism with probabilistic choice, or with global store, do not fit into the Boolean-valued framework of \cite{modal}, but do work here\footnote{The combination of global store and nondeterminism is possible in the framework of \cite{modal}, of one only considers \emph{angelic} (helpful) nondeterminism. The problem is with general (neutral) or demonic (antagonistic) nondeterminism, combined with global store.}$\!$. But there are also examples, such as probability, global store, and cost, whose treatment is more natural in our quantitative setting, even though they also fit in the framework of \cite{modal}. 

As a vehicle of our investigation we use Levy's call-by-push-value (CBPV) \cite{CBPV,LevyThesis}, together with general recursion and the aforementioned algebraic effects. As such, it generalises \cite{modal} in a second way by using call-by-push-value to incorporate both call-by-name (CBN) and call-by-value (CBV) evaluation strategies. This is significant, since once either divergence or effects are present, the distinction between the reduction strategies becomes vital.
For example, if we take some probabilistic choice $\textsf{por}$ signifying a fair coin flip, we have that `$\textsf{por}(\lambt{x}{\numeral{0}}, \lambt{x}{\numeral{1}}) \equiv \lambt{x}{\textsf{por}(\numeral{0}, \numeral{1})}$' holds  in CBN, but not in CBV.
So it is interesting to consider CBPV, as it expresses both these behaviours.
The distinction is expressed in the difference between \emph{production}-types $\ForceType{\ValType}$ 
where one explicitly observes effects, and types like $\FunctType{\ValType}{\ComType}$ 
where the observation of effects is postponed to a later moment.
As such, this language is an ideal backdrop for studying effects.

In Section \ref{section:operational} we give the operational semantics of the language, starting with the effect-free version and working towards our treatment of algebraic effects. In Section \ref{section:logic} we present our quantitative logic, introducing quantitative modalities to deal with the observation of effects. In Section \ref{section:equivalence} we look at the resulting behavioural equivalence and the properties that establish the congruence property (or \emph{compatibility} in its technical form). In Section \ref{section:bisimilarity} we relate this equivalence to applicative (bi)similarities by defining a \emph{relator} using our modalities. This then allows us to adapt a Howe's method proof of compatibility from \cite{Relational,modal} for this equivalence. We finish in Section \ref{section:conclusion} with some discussions.
\section{Operational semantics}\label{section:operational}
We use a simply-typed call-by-push-value functional language as in \cite{LevyThesis,CBPV}, together with general recursion and a ground type for natural numbers, making it a call-by-push-value variant of PCF \cite{PCF}. To this, we add algebraic-effect-triggering operators in the sense of Plotkin and Power \cite{effect}. We first focus on the effect-free part of the language, as we want to consider effects independently of the underlying language.

\subsection{The language}
We give a brief overview of the language and its semantics. The types are divided into two flavours, \emph{Value types} and \emph{Computation types}. Value types contain value terms that are \emph{passive}, they don't compute anything on their own. Computation types contain computation terms which are \emph{active}, which means they either return something to or ask something of the environment.

Value types $\ValType, \ValTypeT$ and computation types $\ComType, \ComTypeT$ are given by:
$$\ValType, \ValTypeT ::= \ThunkType{\ComType} \mid \UnitType \mid \NatType \mid \SumType{i}{I}{\ValType_i} \mid \PairType{\ValType}{\ValType} \quad \quad \quad \quad \quad \quad \ComType, \ComTypeT ::= \ForceType{\ValType} \mid \FunctType{\ValType}{\ComType} \mid \ProdType{i}{I}{\ComType_i}$$

\noindent
where $I$ is any \emph{finite} indexing set. By asserting finiteness of $I$ in the case of product types, the number of program terms is kept countable (a property which will have benefits later on in the formulation of the logic).
The type $\ThunkType{\ComType}$ is a thunk type, which consists of terms which are \emph{frozen}. These terms were initially computation terms but are made inactive by packaging them into a \emph{thunk}.
The type $\NatType$ is the type of natural numbers, containing the non-negative integers. With this type, we can program any computable function on the natural numbers as in PCF \cite{PCF}.
The type $\ForceType{\ValType}$ is a \emph{producer} type, which actively evaluates and \emph{returns} values of type $\ValType$ to the current environment. As was stated, this is the type at which we can observe effects. 
The type $\FunctType{\ValType}{\ComType}$ is a type of functions, which is a computation type since its terms are actively awaiting input.

We have a countably-infinite collection of term variables $x$, and term \emph{contexts}: \quad
$\Context ::= \ContextEmpty \mid \ContextCat{\Context}{x : \ValType}$.

\noindent
Note that contexts only contain Value types, meaning that like in call-by-value, we can only ever substitute value terms. This is no loss of generality, as we can simulate substituting computation terms by packaging them into a thunk.
The terms of the language are as follows:
\begin{align*}
	\textbf{Value} & \textbf{ terms: } \ValTerm, \ValTermT ::= \unit \mid \zero \mid \succe{\ValTerm} \mid x \mid \thunk{\ComTerm} \mid \pair{i}{\ValTerm} \mid \pair{\ValTerm}{\ValTermT}\\
	\textbf{Computation} & \textbf{ terms: } \ComTerm, \ComTermT ::= \caseN{\ValTerm}{\ComTerm}{x}{\ComTermT} \mid \valseq{x}{\ValTerm}{\ComTerm} \mid \return{\ValTerm} \mid \comseq{x}{\ComTerm}{\ComTermT} \mid \force{\ValTerm} \\
	    \mid \lambt{x:\ValType}{\ComTerm} & \mid  \appt{\ComTerm}{\ValTerm} \mid \patmat{\ValTerm}{\{\dots,(i.x)\,.\,\ComTerm_i,\dots\}}
	 \mid \patmat{\ValTerm}{(x,y)\,.\,\ComTerm} \mid \prodlam{\ComTerm_i}{i \in I} \mid \appt{\ComTerm}{i} \mid \fix{\ComTerm}
\end{align*}
\noindent
We underline terms ($\ComTerm$) and types ($\ComType$) when they are computation terms and computation types respectively. We will also use $\GenType, \GenTypeT$ and $\GenTerm, \GenTermT$ to denote general types and their terms, e.g. they could be either value or computation types/terms.
Following \cite{LevyThesis}, their typing rules are given in Fig.\,\ref{fig:typing}. We distinguish two typing judgements, $\vdash^v$ and $\vdash^c$, for value and computation terms respectively. We write $\Terms{\GenType}$ for the set of closed terms of type $\GenType$. Note the addition of the fixpoint operator $\fix{-}$ which has been added to allow for general recursion and hence divergence. We write $\numeral{n} : \NatType$ for the numeral representing the $n$-th natural number.

\begin{figure}{\small
	$$
	\derivation{ }
	{\Vjudge{\Context}{\unit}{\UnitType}}
	\biggap
	\derivation{ }
	{\Vjudge{\Context}{\zero}{\NatType}}
	\biggap
	\derivation{\Vjudge{\Context}{\ValTerm}{\NatType}}
	{\Vjudge{\Context}{\succe{\ValTerm}}{\NatType}}
	\biggap
	\derivation{\Vjudge{\Context}{\ValTerm}{\NatType} \biggap \Cjudge{\Context}{\ComTerm}{\ComType} \biggap 
		\Cjudge{\ContextCat{\Context}{\termvar : \NatType}}{\ComTermT}{\ComType}}
	{\Cjudge{\Context}{\caseN{\ValTerm}{\ComTerm}{\termvar}{\ComTermT}}{\ComType}}$$
	$$\derivation{ }
	{\Vjudge{\ContextCat{\ContextCat{\Context}{\termvar : \ValType}}{\Context'}}{\termvar}{\ValType}} 
	\quad
	\derivation{\Vjudge{\Context}{\ValTerm}{\ValType} \biggap \Cjudge{\ContextCat{\Context}{\termvar : \ValType}}{\ComTerm}{\ComType}}
	{\Cjudge{\Context}{\valseq{\termvar}{\ValTerm}{\ComTerm}}{\ComType}}
	\quad
	\derivation{\Vjudge{\Context}{\ValTerm}{\ValType}}
	{\Vjudge{\Context}{\return{\ValTerm}}{\ForceType{\ValType}}} 
	\quad
	\derivation{\Cjudge{\Context}{\ComTerm}{\ForceType{\ValType}} \biggap \Cjudge{\ContextCat{\Context}{\termvar : \ValType}}{\ComTermT}{\ComType}}
	{\Cjudge{\Context}{\comseq{\ComTerm}{\termvar}{\ComTermT}}{\ComType}}$$
	$$\derivation{\Cjudge{\Context}{\ComTerm}{\ComType}}
	{\Vjudge{\Context}{\thunk{\ComTerm}}{\ThunkType{\ComType}}} 
	\biggap
	\derivation{\Vjudge{\Context}{\ValTerm}{\ThunkType{\ComType}}}
	{\Cjudge{\Context}{\force{\ValTerm}}{\ComType}}
	\biggap
	\derivation{\Cjudge{\ContextCat{\Context}{\termvar : \ValType}}{\ComTerm}{\ComType}}
	{\Cjudge{\Context}{\lambt{\termvar:\ValType}{\ComTerm}}{\FunctType{\ValType}{\ComType}}}
	\biggap
	\derivation{\Vjudge{\Context}{\ValTerm}{\ValType} \biggap \Cjudge{\Context}{\ComTerm}{\FunctType{\ValType}{\ComType}}}
	{\Cjudge{\Context}{\appt{\ComTerm}{\ValTerm}}{\ComType}}
	$$
	$$
	\derivation{\Vjudge{\Context}{\ValTerm}{\ValType_j}}
	{\Vjudge{\Context}{\pair{j}{\ValTerm}}{\SumType{i}{I}{\ValType_i}}}j \in I
	\quad
	\derivation{\Vjudge{\Context}{\ValTerm}{\SumType{i}{I}{\ValType_i}} \biggap \Cjudge{\ContextCat{\Context}{\termvar : \ValType_i}}{\ComTerm_i}{\ComType} \ \ \text{for each} \ \ i \in I}
	{\Cjudge{\Context}{\patmat{\ValTerm}{\{\dots,(i.\termvar)\,.\,\ComTerm_i,\dots\}}}{\ComType}}
	\quad
	\derivation{\Vjudge{\Context}{\ValTerm}{\ValType} \biggap \Vjudge{\Context}{\ValTerm'}{\ValTypeT}}
	{\Vjudge{\Context}{\pair{\ValTerm}{\ValTerm'}}{\PairType{\ValType}{\ValTypeT}}}
	$$
	$$
	\derivation{\Vjudge{\Context}{\ValTerm}{\PairType{\ValType}{\ValTypeT}} \biggap \Cjudge{\ContextCat{\ContextCat{\Context}{\termvar : \ValType}}{\termvart : \ValTypeT}}{\ComTerm}{\ComType}}
	{\Cjudge{\Context}{\patmat{\ValTerm}{\pair{\termvar}{\termvart}\,.\,\ComTerm}}{\ComType}}
	\quad\,\,
	\derivation{\Cjudge{\Context}{\ComTerm_i}{\ComType_i} \ \ \text{for each} \ \ i \in I}
	{\Cjudge{\Context}{\prodlam{\ComTerm_i}{i \in I}}{\ProdType{i}{I}{\ComType_i}}}
	\quad\,\,
	\derivation{\Cjudge{\Context}{\ComTerm}{\ProdType{i}{I}{\ComType_i}}}
	{\Cjudge{\Context}{\appt{\ComTerm}{j}}{\ComType_j}}
	\quad\,\,
	\derivation{\Cjudge{\Context}{\ComTerm}{\FunctType{\ThunkType{\ComType}}{\ComType}}}
	{\Cjudge{\Context}{\fix{\ComTerm}}{\ComType}}$$
}
	\caption{Typing rules}
	\label{fig:typing}
\end{figure}

\subsection{Semantics}
We give the semantics of this language by specifying a reduction strategy for computation terms in the style of a CK-machine \cite{CK}. 
We distinguish a special class of computation terms, called \emph{terminal terms}, which will not reduce further. They consist of: $\return{\ValTerm} : \ForceType{\ValType}$, $\lambt{\termvar:\ValType}{\ComTerm} : \FunctType{\ValType}{\ComType}$, and $\prodlam{\ComTerm_i}{i \in I} : \ProdType{i}{I}{\ComTerm_i}$.

We first give the rules for terms we can directly reduce. 
We denote these using relation symbol $\reduct{}{}$:

\begin{multicols}{2}
\begin{enumerate}
	\item $\reduct{\caseN{\zero}{\ComTerm}{\termvar}{\ComTermT}}{\ComTerm}$.
	\item $\reduct{\caseN{\succe{\ValTerm}}{\ComTerm}{\termvar}{\ComTermT}}{\ComTermT[\ValTerm/\termvar]}$.
	\item $\reduct{\valseq{\termvar}{\ValTerm}{\ComTerm}}{\ComTerm[\ValTerm/\termvar]}$.
	\item $\reduct{\force{\thunk{\ComTerm}}}{\ComTerm}$.
	\item $\reduct{\patmat{\pair{j}{\ValTerm}}{\{\dots,(i.\termvar)\,.\,\ComTerm_i,\dots\}}}{\ComTerm_j[\ValTerm/\termvar]}$.
	\item $\reduct{\patmat{\pair{\ValTerm}{\ValTerm'}}{\pair{\termvar}{\termvart}\,.\,\ComTerm}}{\ComTerm[\ValTerm/\termvar,\ValTerm'/\termvart]}$.
	\item $\reduct{\fix{\ComTerm}}{\appt{\ComTerm}{\thunk{\fix{\ComTerm}}}}$.
\end{enumerate}
\end{multicols}

The behaviour of the other non-terminal computation terms; $\comseq{\ComTerm}{\termvar}{\ComTermT}$, $\appt{\ComTerm}{\ValTerm}$ and $\appt{\ComTerm}{i}$, is implemented using a system of \emph{stacks} defined recursively: \quad $\Stack, \Stackt ::= \StackEmpty \mid \Stack \circ \comseq{(-)}{\termvar}{\ComTerm} \mid \Stack \circ \ValTerm \mid \Stack \circ j$.

\noindent
We write $\Stack\{\ComTerm\}$ for the computation resulting from applying $\Stack$ to $\ComTerm$, which can be seen as evaluating the program $\ComTerm$ within the environment $\Stack$.

$\StackEmpty\{\ComTerm\} := \ComTerm$ \biggap \biggap \biggap $(\Stack \circ \comseq{(-)}{\termvar}{\ComTermT})\{\ComTerm\} := \Stack\{\comseq{\ComTerm}{\termvar}{\ComTermT}\}$

$(\Stack \circ \ValTerm)\{\ComTerm\} := \Stack\{\appt{\ComTerm}{\ValTerm}\}$ \biggap
$(\Stack \circ i)\{\ComTerm\} := \Stack\{\appt{\ComTerm}{i}\}$

\noindent
Whenever one encounters a computation of which one needs to first evaluate a subterm, one unfolds the continuation into the Stack and focusses on evaluating that subterm.
This method is given by the stack reduction relation $\StackReduct{}{}$
in the following way:
\begin{multicols}{2}
\begin{enumerate}
	\item If $\reduct{\ComTerm}{\ComTermT}$, then $\StackReduct{(\Stack,\ComTerm)}{(\Stack,\ComTermT)}$.
	\item $\StackReduct{(\Stack,\comseq{\ComTerm}{\termvar}{\ComTermT})}
	{(\Stack \circ \comseq{(-)}{\termvar}{\ComTermT}, \ComTerm)}$.
	\item $\StackReduct{(\Stack \circ \comseq{(-)}{\termvar}{\ComTermT}, \return{\ValTerm})}
	{(\Stack,\ComTermT[\ValTerm/\termvar])}$.
	\item $\StackReduct{(\Stack,\appt{\ComTerm}{\ValTerm})}
	{(\Stack \circ \ValTerm, \ComTerm)}$.
	\item $\StackReduct{(\Stack \circ \ValTerm, \lambt{\termvar:\ValType}{\ComTerm})}
	{(\Stack, \ComTerm[\ValTerm / \termvar])}$.
	\item $\StackReduct{(\Stack,\appt{\ComTerm}{j})}
	{(\Stack \circ j, \ComTerm)}$.
	\item $\StackReduct{(\Stack \circ j, \prodlam{\ComTerm_i}{i \in I})}
	{(\Stack, \ComTerm_j)}$.
\end{enumerate}
\end{multicols}

\subsection{Adding algebraic effect operators}

We add algebraic effects in the style of \cite{op_meta}, given by specific effect operators.
We use a type variable $\TypeVar$ for computation types.  
Effects are given by operators of the following arities (like in \cite{op_meta,Plotkin:2009}):

$\TypeVar^n \to \TypeVar \mid \NatType \times \TypeVar^n \to \TypeVar \mid \TypeVar^{\NatType} \to \TypeVar$.

\noindent
For each effect under consideration, we bundle together effect operators pertaining that effect in a set called an \emph{effect signature} $\Sigma$. Given such a signature, new computation terms can be constructed according to the typing rules in Fig. \ref{fig:effect_typing}.

\begin{figure}[H]{\small
	$$
	\derivation{\forall 1 \leq i \leq n.(\Cjudge{\Context}
		{\ComTerm_i}{\ComType})}
	{\Cjudge{\Context}{\EfOp (\ComTerm_1,\dots,\ComTerm_n)}{\ComType}} ~ \EfOp : \TypeVar^n \to \TypeVar
	\biggap \biggap
	\derivation{\Cjudge{\Context, x : \NatType}
		{\ComTerm}{\ComType}}
	{\Cjudge{\Context}{\EfOp (x \,.\, \ComTerm)}{\ComType}} ~ \EfOp : \TypeVar^{\NatType} \to \TypeVar
	$$
	$$
	\derivation{\Vjudge{\Context}{\ValTerm}{\NatType} \biggap \forall 1 \leq i \leq n.(\Cjudge{\Context}
		{\ComTerm_i}{\ComType})}
	{\Cjudge{\Context}{\EfOp (\ValTerm, \ComTerm_1,\dots,\ComTerm_n)}{\ComType}} ~ \EfOp : \NatType \times \TypeVar^n \to \TypeVar
	$$}
	\vspace{-5mm}
	\caption{Effect typing rules}
	\label{fig:effect_typing}
\end{figure}

We look at the running examples of this paper.
\begin{examples}[Probabilistic choice]\label{example:prob}
	The effect of probability is implemented by the signature $\Sigma_{p} := {\{\mathsf{por} : \alpha^2 \to \alpha\}}$, where we have a single binary operator for fair probabilistic choice. The computation $\mathsf{por}(\ComTerm,\ComTermT)$ will have a $\frac{1}{2}$ probability of evaluating $\ComTerm$, and $\frac{1}{2}$ probability of evaluating $\ComTermT$.
\end{examples}
\begin{examples}[Global store]\label{example:GS}
	In the case of global variables for natural numbers, we define a signature $\Sigma_{g} := \bigcup_{l \in L} {\{\textsf{lookup}_l : \alpha^{\NatType} \rightarrow \alpha, \textsf{update}_l : \NatType \times \alpha \to \alpha\}}$, where we have a set of locations $L$ for storing natural numbers. The computation $\textsf{lookup}_l(x \,.\, M)$ looks up the number at location $l$ and substitutes it for $x$ in $M$. The computation $\textsf{update}_l(\overline{n}, M)$ stores $n$ at $l$ and then continues with the computation $M$. 
\end{examples}
\begin{examples}[Probabilistic choice and global store]\label{example:prob+GS}
	We will also consider the combination of the previous two examples, probabilistic choice with global store, given by effect signature $\Sigma_{pg} := \Sigma_p \cup \Sigma_g$.
\end{examples}
\begin{examples}[Cost]
	If we want to keep track of costs of an evaluation, we take the signature $\Sigma_c := {\{\EfCost{c} : \alpha \to \alpha \mid c \in C \}}$, where we have a countable set of real-valued costs $C$. The computation $\EffCost{c}{M}$ assigns a cost of $c$ to the evaluation of $M$. This cost can represent a time delay or some other resource.
\end{examples}
\begin{examples}[Combinations with nondeterminism]\label{example:probdet}
	We consider a binary operator $\textsf{nor}: \alpha^2 \rightarrow \alpha$ for nondeterministic choice, which contrary to probabilistic choice is entirely unpredictable. One interpretation is to consider it under the control of some external agent or scheduler (e.g. a compiler), which one may wish to model as being cooperative (\emph{angelic}), antagonistic (\emph{demonic}), or \emph{neutral}.
	We will consider nondeterminism and it's operator in combination with any one of the previous three examples. The resulting signatures are named $\Sigma_{pn}$, $\Sigma_{gn}$, $\Sigma_{gpn}$, and $\Sigma_{cn}$ respectively.
\end{examples}
\begin{examples}[Combinations with error]\label{example:error}
	Lastly, given some set of error messages $E$, we consider adding error raising effect operations $\{\mathsf{raise}_e: \alpha^0 \to \alpha \mid e \in E\}$ to the language, where $\mathsf{raise}_e()$ stops the evaluation of a term, and displays message $e$. There is no continuation possible afterwards.
\end{examples}

In the presence of such effects, the evaluation of a computation term might halt when encountering an algebraic effect operator. We broaden the semantics, where a computation term now evaluates to an \emph{effect tree}, a coinductively generated term using operations from our effect signature $\Sigma$ together with terminal terms and a symbol for divergence $\bot$. This idea appears in \cite{effect}, but here we adapt the formulation from \cite{op_meta} to call-by-push-value. 

We define the notion of an \emph{effect tree} over any set $X$, where $X$ can be thought of as a set of terminal terms.
\begin{definition}\label{def:tree}
	An \emph{effect tree} (henceforth \emph{tree}) over a set $X$, determined by a signature $\Sigma$ of effect operations,  is a labelled and possibly infinite depth tree whose nodes have the possible forms given below.
	\begin{enumerate}
		\item A leaf node labelled $\bot$ (the symbol for nontermination/divergence).
		\item A leaf node labelled $x$ where $x \in X$.
		\item A node labelled `$\EfOp$' with children $t_1,\dots, t_n$, when $\EfOp\in \Sigma$ has arity $\alpha^n \rightarrow \alpha$.
		\item A node labelled `$\EfOp$' with children $t_0,t_1,\dots$, when $\EfOp\in \Sigma$ has arity $\alpha^{\NatType} \rightarrow \alpha$.		
		\item A node labelled `$\EfOp_m$' where $m \in \Nat$ with children $t_1,\dots, t_n$, when $\EfOp\in \Sigma$ has arity $\NatType \times \alpha^n \rightarrow \alpha$.
	\end{enumerate}
\end{definition}

\noindent
The set of trees over set $X$ and signature $\Sigma$ is denoted $\Trees{X}$. We can equip this set with a partial order $\Tord$, where $t \Tord r$ if $r$ can be constructed from $t$ by pruning (possibly infinitely many) subtrees and labelling the pruning points with $\bot$.
Moreover, the preorder is \emph{$\omega$-complete}, so each ascending chain of trees $t_0 \Tord t_1 \Tord \dots$ has a least upper bound $\sqcup_n t_n$.

For any $x \in X$, we denote $\eta(x) \in \Trees{X}$ for the tree which only consists of one leaf labelled $x$. We also have a map $\mu : \Trees{\Trees{X}} \to \Trees{X}$ which flattens a tree of trees into one tree, by transforming the leaves (which are trees) into subtrees.

For each computation type $\ComType$ we define the evaluation map $|-| : \Terms{\ComType} \to \Trees{\Terms{\ComType}}$, which returns a tree, whose leaves are either labelled with $\bot$ or labelled with a terminal term of type $\ComType$. We define this inductively by constructing for each $n \in \Nat$ the $n$-th approximation of the tree.
\begin{enumerate}
	\item $|\Stack, \ComTerm|_0 := \bot$
	\item $|\StackEmpty, \ComTerm|_{n+1} := \eta(\ComTerm)$ if $\ComTerm$ is a terminal computation term.
	\item $|\Stack, \ComTerm|_{n+1} := |\Stack', \ComTerm'|_n$ if $\StackReduct{(\Stack, \ComTerm)}{(\Stack', \ComTerm')}$.
	\item $|\Stack, \EfOp(\ComTerm_1,\dots,\ComTerm_m)|_{n+1} := \EfOp\{|\Stack, \ComTerm_1|_n, \dots, |\Stack, \ComTerm_m|_n\}$ if $\EfOp : \TypeVar^m \to \TypeVar$.
	\item $|\Stack, \EfOp(x \,.\, \ComTerm)|_{n+1} := \EfOp\{m \mapsto |\Stack, \ComTerm[\numeral{m}/x]|\}$  if $\EfOp : \TypeVar^{\NatType} \to \TypeVar$.
	\item $|\Stack, \EfOp(\numeral{k}, \ComTerm_1,\dots,\ComTerm_m)|_{n+1} := \EfOp_k\{|\Stack, \ComTerm_1|_n, \dots, |\Stack, \ComTerm_m|_n\}$ if $\EfOp : \NatType \times \TypeVar^m \to \TypeVar$.
\end{enumerate}
Using this, we define $|M| := \bigsqcup_n |\StackEmpty, M|_n$. We view $|M|$ as an operational semantics of $M$ in which $M$ is reduced to its (possibly) observable computational behaviours, namely the tree of effect operations potentially performed in the evaluation of $M$. See Figure \ref{fig:trees} for two examples of effect trees.

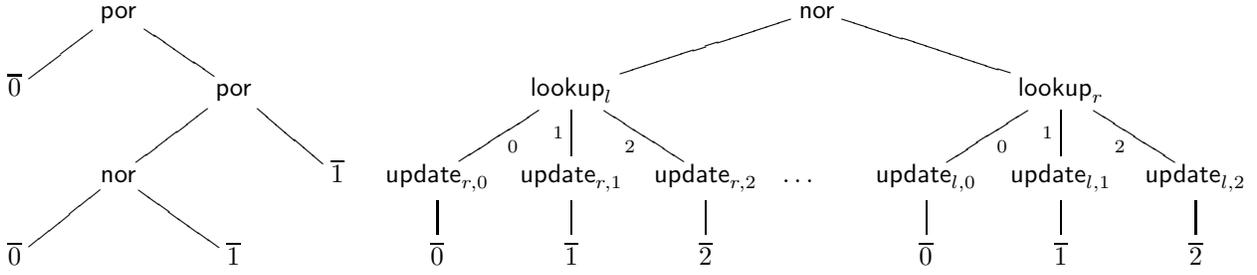
\begin{figure}
	$\xymatrix@R=1.5em{
		& \textsf{por} \ar@{-}[ld] \ar@{-}[rd] & & 
		& & & & \textsf{nor} \ar@{-}[lld] \ar@{-}[rrd] & & &
		\\
		\overline{0} & & \textsf{por} \ar@{-}[ld] \ar@{-}[rd] & 
		& & \textsf{lookup}_l \ar@{-}[ld]^0 \ar@{-}[d]_1 \ar@{-}[rd]_2 & & & & \textsf{lookup}_r \ar@{-}[ld]^0 \ar@{-}[d]_1 \ar@{-}[rd]_2 &
		\\
		& \textsf{nor} \ar@{-}[ld] \ar@{-}[rd] & & \overline{1} 
		& \hspace{-5mm}\textsf{update}_{r,0}\hspace{-5mm} \ar@{-}[d] & \hspace{-5mm}\textsf{update}_{r,1}\hspace{-5mm} \ar@{-}[d] & \hspace{-5mm}\textsf{update}_{r,2}\hspace{-5mm} \ar@{-}[d] & \hspace{-5mm}\dots & \hspace{-5mm}\textsf{update}_{l,0}\hspace{-5mm} \ar@{-}[d] & \hspace{-5mm}\textsf{update}_{l,1}\hspace{-5mm} \ar@{-}[d] & \hspace{-5mm}\textsf{update}_{l,2}\hspace{-5mm} \ar@{-}[d]
		\\
		\overline{0} & & \overline{1} &  
		& \overline{0} & \overline{1} & \overline{2} & & \overline{0} & \overline{1} & \overline{2}
		\\
	}$
	\caption{Tree examples: The unpredictable coin and the nondeterministic copier.}
	\label{fig:trees}
\end{figure}

These trees are still quite syntactic, and may contain lots of unobservable information irrelevant to the real-world behaviour of programs.
In the next section, we will set up the quantitative logic which will extract from such trees only the relevant information, using quantitative modalities.

\section{Quantitative Logic}\label{section:logic}
We define a quantitative logic expressing \emph{behavioural properties} of terms. Each type has a set of formulas, which can be satisfied by terms of that type to varying \emph{degrees of satisfaction}. These degrees of satisfaction are given by truth values from a complete lattice.

A \emph{countably complete lattice} is a set $\Quanti$ with a partial order $\fimp$, where for each subset $X \subseteq \Quanti$ there is a least upper bound $\sup(X)$ and a greatest lower bound $\inf(X)$. In particular, we define $\True := \sup(\Quanti) = \inf(\emptyset)$ as the completely true value, and $\False := \inf(\Quanti) = \sup(\emptyset)$ as the completely false value.

We also equip this space with a notion of \emph{negation} or involution, which is a bijective map $\neg : \Quanti \to \Quanti$ such that $\forall a \in \Quanti, \neg(\neg a) = a$ and $\forall a, b \in \Quanti, (a \fimp b) \Leftrightarrow (\neg b \fimp \neg a)$.
We will use the words involution and negation interchangeably.
Given the conditions of an involution, it holds that $\neg \True = \False$ and $\neg \False = \True$. \footnote{Results about the positive behavioural equivalence, and the positive logic, do not need negation. So a subset of results in this paper are still valid without it.}
Examples of complete lattices with involution/negation used in this paper are:
\begin{enumerate}
	\item The \emph{Booleans} $\Bool := \{\True, \False\}$, where $\False \fimp \True$. Negation swaps the elements.
	\item The \emph{real number interval} $[0,1]$, with the usual order. Here $\True = 1$, $\False = 0$, and $\neg x := 1-x$.
	\item The powerset $\mathcal{P}(X)$ over some set $X$, whose order is given by inclusion $\subseteq$, so $\True = X$ and $\False = \emptyset$. Negation is given by the complement, where $\neg A := X-A = \{x \in X \mid x \notin A\}$.
	\item For $A$ a complete lattice and $X$ a set,  the function space $A^X$ with point-wise order is a complete lattice.
	\item The \emph{infinite interval} $[0,\infty]$ with reversed order. Here, $\True = 0$, $\False = \infty$, and $\neg x := 1/x$.
\end{enumerate}

We construct a logic for our language in order to define a behavioural preorder. For each type $\GenType$, value or computation, we have a set of formulas $\formulas{\GenType}$. Greek letters $\phi, \psi, \dots$ are used for formulas over value types, underlined Greek letters $\cphi, \cpsi, \dots$ for formulas over computation types, and underdotted Greek letters $\gphi, \gpsi, \dots$ for formulas over any type. We are aiming to define a quantitative relation $\Qmodels{\GenTerm}{\gphi}$ to denote the element of $\Quanti$ which describes the degree to which the term $\GenTerm$ satisfies the formula $\gphi$ (e.g. this may describe the probability of satisfaction or the amount of time needed for satisfaction).
We choose the formulas according to the following two design criteria, as in \cite{modal}.

Firstly, we design our logic to only contain \emph{behaviourally meaningful} formulas. This means we only want to test properties that are in some sense observable by users and/or other programs. For example, for the natural numbers type $\NatType$ we have a formula $\EquForm{n}$ which checks whether a term is equal to the numeral $\numeral{n}$.
For function types we have formulas of the form $\FunctType{\ValTerm}{\cphi}$ which tests a program on a specific input $\ValTerm$, and checks how much the resulting term satisfies $\cphi$.

Secondly, we desire our logic to be as \emph{expressive} as possible. To this end, we add countable disjunction (suprema) $\bigvee$ and conjunction (infima) $\bigwedge$ over formulas, together with negation $\neg$. Moreover, we add two natural quantity-specific primitives: a threshold operation and constants. Both such operations are used frequently (albeit implicitly) in practical examples of quantitative verification, e.g. in \cite{McIver2004}.

\subsection{Quantitative modalities}
Fundamental to the design of the logic is how we interpret algebraic effects. In CBPV, effects are observed in producer types $\ForceType{\ValType}$. In order to formulate \emph{observable} properties of $\ForceType{\ValType}$-terms in our logic, we include a set of \emph{quantitative modalities} which lift formulas on $\ValType$ to formulas on $\ForceType{\ValType}$.
We bundle our a selection of quantitative modalities together in a set $\QObser$. 

Each modality $\qmodal \in \QObser$ denotes a function $\denote{\qmodal} : \Trees{\Quanti} \to \Quanti$, which is used to translate a \emph{tree of truths} into a singular truth value. Given a quantitative predicate $\Predicate : X \to \Quanti$ on a set $X$, we can use a modality $q$ to lift it to a quantitative predicate $\ModalForm{\qmodal}{\Predicate}: \Trees{X} \to \Quanti$ as follows. For $t \in \Trees{X}$, we write $t[\Predicate]$ for the tree in $\Trees{\Quanti}$ where each non-$\bot$ leaf $x \in X$ is replaced by the value $\Predicate(x)$. Then $\ModalForm{\qmodal}{\Predicate}(t) = \denote{q}(t[\Predicate])$.

In the examples, we will define the denotation of a modality $q$ by giving for each $n \in \Nat$ an approximation $\denote{q}_n$.
These will follow the rules: $\denote{q}_0(t) = \False$, $\denote{q}_n(\bot) = \False$, and $\denote{q}_{n+1}(\eta(a)) = a$, and effect specific rules given in the examples below.
Given these approximations, the denotation $\denote{q}(t)$ is given by $\sup\{\denote{q}_n(t) \mid n \in \Nat\}$.
\setcounter{examples}{0}
\begin{examples}[Probabilistic choice]
	We use as quantitative truth space the real number interval $\Quanti := [0,1]$ with $\fimp \,:=\, \leq$, which denote probabilities\footnote{One could alternatively consider $[0,\infty]$ as the value set.}. We take a single modality $\mathsf{E}$ for expectation, where $\Qmodels{\ComTerm}{\ModalForm{\mathsf{E}}{\Predicate}}$ gives the expected value of $\Qmodels{\ValTerm}{\Predicate}$ given the probabilistic distribution $\ComTerm$ induces on its return values $\ValTerm$.
	This is achieved by giving $\mathsf{E}$ the denotation $\denote{\mathsf{E}} : \Treei{\Sigma_p}{\Quanti} \to \Quanti$ which sends a tree of real numbers to the expected real number, where the approximation of the denotation is given by:
	$\denote{\mathsf{E}}_{n+1}(\textsf{p-or}(t,r)) = (\denote{\mathsf{E}}_n(t)+\denote{\mathsf{E}}_n(r))/2$.
\end{examples}
\begin{examples}[Global store]
	Given a set of locations $L$, we have a set of \emph{states} $S := L \to \Nat$. Our set of truth values is given by the powerset $\Quanti := \mathcal{P}(S)$ with $\fimp \,:=\, \subseteq$.
	We have a single modality $\mathsf{G}$, where $\Qmodels{\ComTerm}{\ModalForm{\mathsf{G}}{\Predicate}}$ gives the set of starting states for which $\ComTerm$ terminates with a value $\ValTerm$ such that the end state is contained in $\Qmodels{\ValTerm}{\Predicate}$.
	We define this formally with the following rules:
	$\denote{\mathsf{G}}_{n+1}(\textsf{lookup}_l(t_0, t_1, \dots)) = {\{s \in S \mid s \in \denote{\mathsf{G}}_{\max(0,n-s(l))}(t_{s(l)})\}}$ and
	$\denote{\mathsf{G}}_{n+1}(\textsf{update}_{l,m}(t)) = \{s \mid s[l := m] \in \denote{\mathsf{G}}_{n}(t)\}$.
\end{examples}
\begin{examples}[Probabilistic choice and global store]
	For this combination of effects, we take as truth space the functions $\Quanti := [0,1]^S$ with point-wise order, where $S$ is the set of global states and $[0,1]$ the lattice of probabilities with standard order. Intuitively, this space assigns to each starting state a probability that a property is satisfied. We define a single modality $\mathsf{EG}$ which, for each state $s \in S$, is given by the following rules: $\denote{EG}_{n+1}(\mathsf{p-or}(t,r))(s) := (\denote{EG}_n(t)(s) + \denote{EG}_n(r)(s))/2$, $\denote{\mathsf{EG}}_{n+1}(\textsf{lookup}_l(t_0, t_1, \dots))(s) = \denote{\mathsf{EG}}_{\max(0,n-s(l))}(t_{s(l)})(s)$, and
	$\denote{\mathsf{EG}}_{n+1}(\textsf{update}_{l,m}(t))(s) = \denote{\mathsf{EG}}_{n}(t)(s[l := m])$.
\end{examples}
\begin{examples}[Cost]
	We use the infinite real number interval $\Quanti := [0,\infty]$ with $\fimp \,:=\, \geq$ denoting an abstract notion of \emph{cost} (e.g. time). 
	Trees are just branches in this example.
	We have a single modality $\mathsf{C}$, where $\Qmodels{\ComTerm}{\ModalForm{\mathsf{C}}{\Predicate}}$ is the cost it takes for $\ComTerm$ to evaluate plus the cost given by $\Qmodels{\ValTerm}{\Predicate}$, where $\ValTerm$ is the value $\ComTerm$ evaluates to.
	The definition of $\denote{\mathsf{C}} : \Treei{\Sigma_d}{\Quanti} \to \Quanti$ is such that $\denote{\mathsf{C}}_{n+1}(\EffCost{c}{t}) = c+\denote{\mathsf{C}}_n(t)$. Note that for any tree $t$ either infinite or with leaf $\bot$, we have $\denote{\mathsf{C}}(t) = \infty$.
	This reflects the idea that a diverging computation will exceed any possible finite cost.
\end{examples}
\begin{examples}[Combinations with nondeterminism]
	To add nondeterminism to any of the previous examples, we keep their truth space $\Quanti \in \{[0,1], \mathcal{P}(S), [0,1]^X, [0,\infty]\}$, and extend the definition of their modality $q \in \{\mathsf{E}, \mathsf{G}, \mathsf{EG}, \mathsf{C}\}$ in two ways, creating an optimistic modality $q_{\lozenge}$ and a pessimistic modality $q_{\Box}$ such that:
	
	$\denote{q_{\lozenge}}_{n+1}(\textsf{n-or}(t,r)) = \denote{q_{\lozenge}}_n(t) \vee \denote{q_{\lozenge}}_n(r) \biggap \biggap \denote{q_{\Box}}_{n+1}(\textsf{n-or}(t,r)) = \denote{q_{\Box}}_n(t) \wedge \denote{q_{\Box}}_n(r)$.
	
	For the combination with probability, we can see the nondeterministic choice as being controlled by some external agent, which chooses a \emph{strategy} for resolving the nondeterministic choice nodes, like in a Markov decision processes. 
	$\mathsf{E}_{\lozenge}$ finds the optimal strategy to get the best expectation, whereas $\mathsf{E}_{\Box}$ finds the worst strategy.
	Similarly, $\mathsf{C}_{\lozenge}$ will search for the minimum possible execution cost, while $\mathsf{C}_{\Box}$ will look for the maximum cost.
	
	For instance, if the denotation $|\ComTerm|$ of a term $\ComTerm$ of type $\ForceType{\NatType}$ is given by the first tree in Fig. \ref{fig:trees}, then $\Qmodels{\ComTerm}{\ModalForm{\mathsf{E}_{\lozenge}}{\EquForm{1}}} = 1/2$ and $\Qmodels{\ComTerm}{\ModalForm{\mathsf{E}_{\Box}}{\EquForm{1}}} = 1/4$.
	If $|\ComTermT|$ is given by the second tree from Fig. \ref{fig:trees}, then $\Qmodels{\ComTermT}{\ModalForm{\mathsf{G}_{\lozenge}}{\EquForm{0}}} = \{s \in S \mid s(l) = 0 \vee s(r) = 0\}$ and $\Qmodels{\ComTermT}{\ModalForm{\mathsf{G}_{\Box}}{\EquForm{0}}} = \{s \in S \mid s(l) = 0 = s(r)\}$.
\end{examples}

\begin{examples}[Combinations with error]
	There are two ways of defining combinations with error messages, akin to the sum and tensor approach of combining effects from e.g.~\cite{Hyland:2006}.
	Let $\Sigma$, $\Quanti$ and $\QObser$ be the signature, truth space, and quantitative modalities of the effects to which we want to add error messages from a set $E$.
	Given a modality $q \in \QObser$ and some function $f: E \to \Quanti$, assigning to each message a value, we define a new modality $q_f$ which, besides inheriting the rules from $q$, follows the rule $\denote{q_f}_{n+1}(\mathsf{raise}_e) = f(e)$.
	We define two new sets of modalities for this combination, $\QObser^+ := \{q_f \mid q \in \QObser, f: E \to \{\False, \True\}\}$ and $\QObser^{\times} := \{q_f \mid q \in \QObser, f: E \to \Quanti\}$, each giving a different interpretation of error.
	
	E.g. in the presence of global store (Example \ref{example:GS}), the modalities from $\QObser^+$ are not able to observe the final global state when an error message has been raised, whereas some modalities from $\QObser^{\times}$ can. For instance, for $e \in E$ and $f: E \to \Quanti$ such that $f(e) := \{s[l:=1] \mid s \in S\}$, it holds that $\mathsf{G}_f$ is in $\QObser^{\times}$ but not in $\QObser^+)$. Moreover, $(\textsf{update}_l(\overline{1}, \textsf{raise}_e()) \models (\ModalForm{\mathsf{G}_f}{\top})) = \True$ whereas $(\textsf{update}_l(\overline{0}, \textsf{raise}_e()) \models \ModalForm{\mathsf{G}_f}{\top}) = \False$. Those two terms are however not distinguishable by any modality from $\QObser^+$.
\end{examples}

All the Boolean-valued examples of modalities for effects in \cite{modal}, can also be accommodated in our quantitative setting by taking $\Quanti := \{\True,\False\}$. These include for instance Input/Output.

\subsection{Formulation of the logic}
We write $\formulas{\GenType}$ for the set of formulas over type $\GenType$, which is defined by induction on the structure of $\GenType$.
Fig. \ref{fig:formula} gives the inductive rules for generating these formulas. We have modality formulas $\ModalForm{q}{\phi}$, constant formulas $\StabForm{a}$, and step formulas $\StepForm{\gphi}{a}$. Note that conjunctions and disjunctions (i.e., meets and joins) are taken over countable sets of formulas only.
\begin{figure}
	{
		$$
		\derivation{n \in \Nat}
		{\EquForm{n} \in \formulas{\NatType}}
		\biggap
		\derivation{\cphi \in \formulas{\ComType}}
		{\ThunkForm{\cphi} \in \formulas{\ThunkType{\ComType}}}
		\biggap
		\derivation{\phi \in \formulas{\ValType_j}}
		{\PairForm{j}{\phi} \in \formulas{\SumType{i}{I}{\ValType_i}}}
		\biggap
		\derivation{\phi \in \formulas{\ValType}}
		{\FirForm{\phi} \in \formulas{\PairType{\ValType}{\ValTypeT}}}
		$$
		$$
		\derivation{\phi \in \formulas{\ValTypeT}}
		{\SecForm{\phi} \in \formulas{\PairType{\ValType}{\ValTypeT}}}
		\biggap
		\derivation{\ValTerm \in \Terms{\ValType} \biggap \cphi \in \formulas{\ComType}}
		{\FunctForm{\ValTerm}{\cphi} \in \formulas{\FunctType{\ValType}{\ComType}}}
		\biggap
		\derivation{j \in I \biggap \cphi \in \formulas{\ComType_j}}
		{\FunctForm{j}{\cphi} \in \formulas{\ProdType{i}{I}{\ComType_i}}}
		$$
		$$
		\derivation{\qmodal \in \QObser \biggap \phi \in \formulas{\ValType}}
		{\ModalForm{\qmodal}{\phi} \in \formulas{\ForceType{\ValType}}}
		\biggap
		\derivation{X \subseteq_{\text{countable}} \formulas{\GenType}}
		{\DisForm{X} \in \formulas{\GenType}}
		\biggap
		\derivation{X \subseteq_{\text{countable}} \formulas{\GenType}}
		{\ConForm{X} \in \formulas{\GenType}}
		$$
		$$
		\derivation{\gphi \in \formulas{\GenType} \biggap a \in \Quanti}
		{\StepForm{\gphi}{a} \in \formulas{\GenType}}
		\biggap
		\derivation{a \in \Quanti}
		{\StabForm{a} \in \formulas{\GenType}}
		\biggap
		\derivation{\gphi \in \formulas{\GenType}}
		{\neg \gphi \in \formulas{\GenType}}
		$$
	}
	\caption{Formula constructors}
	\label{fig:formula}
\end{figure}

We define a quantitative satisfaction relation $\models : \Terms{\GenType} \times \formulas{\GenType} \to \Quanti$, thus for $\GenTerm \in \Terms{\GenType}$ and $\phi \in \formulas{\GenType}$, the \emph{satisfaction} statement $\Qmodels{\GenTerm}{\phi}$ denotes an element of $\Quanti$.
Satisfaction of the formulas is defined inductively by the following rules:

\medskip
\begin{tabular}{r l r l}
$\Qmodels{\ValTerm}{\EquForm{n}}$ & $:= \,\,\,\begin{cases} \True & \text{if } \ValTerm = \numeral{n}\\ \False & \text{otherwise} \end{cases} \biggap$ & $\Qmodels{\pair{i}{\ValTerm}}{\PairForm{j}{\phi}}$ & $:= \,\,\,\begin{cases} \Qmodels{\ValTerm}{\phi} & \text{ if } i=j.\\
	\False & \text{ otherwise.}
\end{cases}$\\
$\Qmodels{\ValTerm}{\ThunkForm{\cphi}}$ & $:= \,\,\,\Qmodels{\force{\ValTerm}}{\cphi}$ &
$\Qmodels{\pair{\ValTerm_1}{\ValTerm_2}}{\FirForm{\phi}}$ & $:= \,\,\, \Qmodels{\ValTerm_1}{\phi}$.\\
$\Qmodels{\pair{\ValTerm_1}{\ValTerm_2}}{\SecForm{\phi}}$ & $:= \,\,\, \Qmodels{\ValTerm_2}{\phi}$. &
$\Qmodels{\ComTerm}{\FunctForm{\ValTerm}{\cphi}}$ & $:= \,\,\, \Qmodels{\appt{\ComTerm}{\ValTerm}}{\cphi}$.\\
$\Qmodels{\ComTerm}{\FunctForm{j}{\cphi}}$ & $:= \,\,\, \Qmodels{\appt{\ComTerm}{j}}{\cphi}$. & $\Qmodels{\ComTerm}{\ModalForm{\qmodal}{\phi}} $ & $:=\,\,\,  \denote{\qmodal}(|\ComTerm|[\phi])$
\end{tabular}
\medskip

\noindent
The modality formula $\ModalForm{q}{\phi}$ is particularly important, as it expresses how the quantitative modalities are used to observe effects. 
The last couple of satisfaction rules are for formula constructors occurring at each type. 
\begin{align*}
\Qmodels{\GenTerm}{\DisForm{X}} := \sup\{\Qmodels{\GenTerm}{\gphi} \mid \gphi \in X\}.
\biggap \Qmodels{\GenTerm}{\ConForm{X}} := \inf\{\Qmodels{\GenTerm}{\gphi} \mid \gphi \in X\}.\\
\Qmodels{\GenTerm}{\StepForm{\gphi}{a}} := \begin{cases} \True & \text{if } \Qmodels{\GenTerm}{\gphi} \fpmi a. 
\biggap \biggap \biggap \Qmodels{\GenTerm}{\StabForm{a}} := a.\\ 
\False & \text{otherwise.} 
\biggap \biggap \biggap \Qmodels{\GenTerm}{\NegForm{\gphi}} := \neg \Qmodels{\GenTerm}{\gphi}.
\end{cases}
\end{align*}
All formulas together form the \emph{general logic} $\GenLog$. We distinguish a specific fragment of $\GenLog$, the \emph{positive logic} $\PosLog$ excluding all formulas which use $\NegForm{}$. The logic $\PosLog$ can be interpreted without giving an involution on $\Quanti$.

We end this section by looking at some interesting properties we can construct using the logic, illustrating the expressibility of the logic.
In case of global store (Example \ref{example:GS}), we can construct formulas in the style of Hoare logic.
For instance, taking two subsets $P,Q \in \Quanti = \mathcal{P}(S)$ of global states, the statement $\ComTerm \models \StepForm{\ModalForm{\mathsf{G}}{\StabForm{Q}}}{P}$ will give $\True$, precisely if, when starting the execution of $\ComTerm$ with a state from $P$, the execution will terminate with a state from $Q$. 
As another example, in case of global store with probability (Example \ref{example:prob+GS}), where $\Quanti := [0,1]^S$, we can construct, given a formula $\cphi$ and a distribution of states $\mu \in [0,1]^S$, a formula $\Sigma_{\mu}(\cphi)$ such that $\Qmodels{\ComTerm}{\Sigma_{\mu}(\cphi)}(s) = \min(1,\sum_{s \in S} \mu(s) \cdot\Qmodels{\ComTerm}{\cphi}(s))$. Then $\Qmodels{\ComTerm}{\Sigma_{\mu}(\ModalForm{\mathsf{EG}}{\StabForm{\True}})}$ expresses the probability of termination of $\ComTerm$, given that the starting state is sampled from $\mu$.
In the same vein, we can look at the combination of probability and nondeterminism (Example \ref{example:prob} and \ref{example:probdet}), where $\Qmodels{\ComTerm}{\bigvee_{a,b \in [0,1]}(\StepForm{\ModalForm{\mathsf{E}_{\lozenge}}{\StabForm{\True}}}{a} \wedge \StepForm{\ModalForm{\mathsf{E}_{\Box}}{\StabForm{\True}}}{b} \wedge \StabForm{(a+b)/2})}$ expresses the probability that $\ComTerm$ terminates, given that the agent/scheduler in control of nondeterministic choice is sampled from a distribution of which 50\%  is helpful and 50\% is antagonistic.

\section{Behavioural equivalence}\label{section:equivalence}
We can define a behavioural preorder for any sub-collection of formulas $\mathcal{L}$.
\begin{definition}
	For any fragment of the logic $\mathcal{L} \subseteq \GenLog$, the \emph{logical preorder} $\PreGen_{\mathcal{L}}$ is given by:
	
	$\forall \GenTerm, \GenTermT \in \Terms{\GenType}, \, \GenTerm \PreGen_{\mathcal{L}} \GenTermT \iff (\forall \gphi \in \formulas{\GenType}, \Qmodels{\GenTerm}{\gphi} \fimp \Qmodels{\GenTermT}{\gphi})$
	
	\noindent
	The \emph{logical equivalence} $\equiv_{\mathcal{L}}$ is given by $\PreGen_{\mathcal{L}} \cap \sqsupseteq_{\mathcal{L}}$.
\end{definition}

The \emph{general behavioural preorder} $\PreGen$ is the logical preorder $\PreGen_{\GenLog}$, whereas the \emph{positive behavioural preorder} $\PrePos$ is the logical preorder $\PreGen_{\PosLog}$. We denote $\equiv$ and $\equiv^+$ for the logical equivalences $\equiv_{\GenLog}$ and $\equiv_{\PosLog}$ respectively (the behavioural equivalences). These closed relations can be extended to relations on open terms by using the \emph{open extension} (where two open terms are related if they are related for any substitution of variables).

A \emph{basic} formula is a non-constant formula (not necessarily atomic) which on the top level does not have conjunction $\bigwedge$, disjunction $\bigvee$, negation $\neg$, constant formula $\StabForm{a}$ or step-construction $\StepForm{(-)}{a}$. 
It is not difficult to see that both $\PreGen$ and $\PrePos$ are completely determined by basic formulas.
Note that since $\PosLog \subseteq \GenLog$, it holds that $(\PreGen) \subseteq (\PrePos)$ and $(\equiv) \subseteq (\equiv^+)$.

\begin{lemma}\label{lemma:symmetric}
	The general behavioural preorder $\PreGen$ is symmetric, so $(\PreGen) = (\equiv)$.
\end{lemma}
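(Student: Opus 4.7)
The plan is to exploit the negation operator $\neg$, which is available in the general logic $\GenLog$ but not in the positive fragment $\PosLog$. The key observation is that the involution $\neg$ on $\Quanti$ reverses the order: by definition, $a \fimp b$ iff $\neg b \fimp \neg a$. So any comparison of satisfaction degrees at $\gphi$ can be flipped by looking instead at $\neg \gphi$.

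Concretely, suppose $\GenTerm \PreGen \GenTermT$, and fix an arbitrary formula $\gphi \in \formulas{\GenType}$; I want to show $\Qmodels{\GenTermT}{\gphi} \fimp \Qmodels{\GenTerm}{\gphi}$. Since $\NegForm{\gphi}$ is also in $\formulas{\GenType}$ by the negation constructor in Figure \ref{fig:formula}, the hypothesis gives $\Qmodels{\GenTerm}{\NegForm{\gphi}} \fimp \Qmodels{\GenTermT}{\NegForm{\gphi}}$. Unfolding the satisfaction clause for negation yields $\neg \Qmodels{\GenTerm}{\gphi} \fimp \neg \Qmodels{\GenTermT}{\gphi}$, and applying the order-reversing property of $\neg$ gives $\Qmodels{\GenTermT}{\gphi} \fimp \Qmodels{\GenTerm}{\gphi}$, as required. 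Hence $\GenTermT \PreGen \GenTerm$, so $\PreGen$ is symmetric and thus coincides with $\equiv = \PreGen \cap \sqsupseteq$.

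I do not expect any genuine obstacle here, since the argument is forced by the definitions of negation, involution, and $\PreGen$. The only mildly subtle point is ensuring that $\NegForm{\gphi}$ really is a legal formula at every type $\GenType$ (value or computation), but this is immediate from the typing rule for $\neg$ in Figure \ref{fig:formula}, which is fully general in $\GenType$.
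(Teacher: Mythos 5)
Your proof is correct and is essentially identical to the paper's own argument: both apply the hypothesis to $\NegForm{\gphi}$, unfold the satisfaction clause for negation, and use the order-reversing property of the involution to conclude $\Qmodels{\GenTermT}{\gphi} \fimp \Qmodels{\GenTerm}{\gphi}$. No issues.
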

\begin{proof}
	Assume $\GenTerm \PreGen \GenTermT$, then for any formula $\gphi$ we have $\neg \Qmodels{\GenTerm}{\gphi} = \Qmodels{\GenTerm}{\NegForm{\gphi}} \fimp \Qmodels{\GenTermT}{\NegForm{\gphi}} = \neg \Qmodels{\GenTermT}{\gphi}$. Hence $\Qmodels{\GenTermT}{\gphi} \fimp \Qmodels{\GenTerm}{\gphi}$ for any $\gphi$, so $\GenTermT \PreGen \GenTerm$. 
\end{proof}

\begin{lemma}\label{lemma:char_formula}
	For any fragment $\mathcal{L}$ of the logic $\GenLog$ closed under countable conjunctions, it holds that for any term $\GenTerm : \GenType$ there is a formulas $\chi^{\GenTerm}$ s.t.: If $\GenTerm \PreGen_{\mathcal{L}} \GenTermT$ then $\Qmodels{\GenTermT}{\chi^{\GenTerm}} = \True$, else $\Qmodels{\GenTermT}{\chi^{\GenTerm}} = \False$.
\end{lemma}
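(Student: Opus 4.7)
The plan is to use the step-formula constructor to flatten each quantitative satisfaction value into a Boolean check, and then combine the checks via countable conjunction. The key observation I would use is that for any formula $\gphi \in \formulas{\GenType}$ and any term $\GenTerm$, the formula $\StepForm{\gphi}{\Qmodels{\GenTerm}{\gphi}}$ evaluates on any $\GenTermT$ to $\True$ precisely when $\Qmodels{\GenTermT}{\gphi} \fpmi \Qmodels{\GenTerm}{\gphi}$, equivalently when $\Qmodels{\GenTerm}{\gphi} \fimp \Qmodels{\GenTermT}{\gphi}$. Choosing a countable subset $\mathcal{B}(\GenType) \subseteq \mathcal{L}(\GenType)$ that is still determining for $\PreGen_{\mathcal{L}}$, I would then set
$$\chi^{\GenTerm} \fdefi \bigwedge_{\gphi \in \mathcal{B}(\GenType)} \StepForm{\gphi}{\Qmodels{\GenTerm}{\gphi}}.$$
Since every step-formula conjunct is Boolean-valued, so is this countable conjunction, and it lies in $\GenLog$ (and in $\mathcal{L}$ whenever $\mathcal{L}$ itself contains step formulas, using the hypothesised closure under countable conjunctions).

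For the verification I would unfold both directions. In the forward direction, the assumption $\GenTerm \PreGen_{\mathcal{L}} \GenTermT$ forces $\Qmodels{\GenTerm}{\gphi} \fimp \Qmodels{\GenTermT}{\gphi}$ for every $\gphi \in \mathcal{B}(\GenType)$, so each conjunct is $\True$ and hence $\chi^{\GenTerm}$ is $\True$ as well. In the other direction, $\GenTerm \not\PreGen_{\mathcal{L}} \GenTermT$ means some formula of $\mathcal{L}(\GenType)$ witnesses the failure, and by the determining property of $\mathcal{B}(\GenType)$ one may take such a witness from $\mathcal{B}(\GenType)$; the corresponding conjunct is then $\False$, which forces the whole conjunction to $\False$.

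The main obstacle I expect is exhibiting the countable determining set $\mathcal{B}(\GenType)$. The remark preceding the lemma already reduces the problem to basic formulas, and by the finiteness hypothesis on the indexing sets $I$ for product types the language contains only countably many closed terms. The subtlety that remains is that basic formulas may nest the quantitative constants $\StabForm{a}$ for $a$ ranging over the possibly uncountable lattice $\Quanti$; my plan is to handle this by restricting those nested constants to a fixed countable $\fimp$-dense subset of $\Quanti$, arguing that the satisfaction of a formula depends on its internal constants only through their concrete lattice values, so this restriction loses no discriminating power for the preorder. Turning the informal sentence before the lemma, that $\PreGen$ and $\PrePos$ are completely determined by basic formulas, into such a rigorous countable reduction is where I expect the real technical work of the proof to lie.
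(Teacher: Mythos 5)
Your skeleton is right -- step-formulas $\StepForm{\gphi}{\Qmodels{\GenTerm}{\gphi}}$ to Booleanize each comparison, then a countable conjunction -- and your computation of what each conjunct evaluates to is correct. But the load-bearing step, producing a \emph{countable} family of formulas that determines $\PreGen_{\mathcal{L}}$, is exactly where your proposal has a genuine gap, and the route you sketch for closing it does not work. First, restricting to basic formulas does not help with cardinality: basicness only constrains the top-level constructor, and a basic formula such as $\ModalForm{q}{\phi}$ or $\FunctForm{\ValTerm}{\cphi}$ may still contain arbitrary nested constants $\StabForm{a}$ and thresholds $\StepForm{(-)}{a}$ with $a$ ranging over $\Quanti$, so basic formulas are as uncountable as $\Quanti$ is. Second, $\Quanti$ is an arbitrary (countably) complete lattice with involution, so a countable $\fimp$-dense subset need not exist; and even when one does, your claim that replacing constants by dense approximants ``loses no discriminating power'' fails precisely because the step constructor $\StepForm{\gphi}{a}$ is discontinuous in $a$: its truth value can flip at the exact threshold, so no approximation argument goes through. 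You correctly identify this as ``where the real technical work lies,'' but that work is not completable along the lines you propose.

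The paper avoids the issue entirely by indexing the conjunction over \emph{terms} rather than over a determining set of formulas. For each $\GenTermT$ with $\GenTerm \not\PreGen_{\mathcal{L}} \GenTermT$ one picks a single witness $\psi^{\GenTermT} \in \mathcal{L}$ with $\Qmodels{\GenTerm}{\psi^{\GenTermT}} \not\fimp \Qmodels{\GenTermT}{\psi^{\GenTermT}}$, and sets $\chi^{\GenTerm} := \ConForm{\{\StepForm{\psi^{\GenTermT}}{\Qmodels{\GenTerm}{\psi^{\GenTermT}}} \mid \GenTerm \not\PreGen_{\mathcal{L}} \GenTermT\}}$. Countability is then immediate from the countability of closed terms (which the paper arranged by insisting the indexing sets $I$ be finite), not from any countability property of the logic or of $\Quanti$. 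The verification is the same as yours: if $\GenTerm \PreGen_{\mathcal{L}} \GenTermT$ every conjunct evaluates to $\True$ on $\GenTermT$, and otherwise the conjunct chosen for $\GenTermT$ itself evaluates to $\False$. If you reorganize your argument so that your ``countable determining set'' is simply the set of chosen witnesses, one per offending term, your proof becomes the paper's; as written, it rests on an unproved and, in general, false reduction.
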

\begin{proof}
	For any $\GenTermT : \GenType$ such that $\GenTerm \not\PreGen_{\mathcal{L}} \GenTermT$ we can find a formula $\psi^{\GenTermT}$ such that $\Qmodels{\GenTerm}{\psi^{\GenTermT}} \not\fimp \Qmodels{\GenTermT}{\psi^{\GenTermT}}$. We choose such a formula for each $\GenTermT$ as above, and define $X := \{\StepForm{\psi^{\GenTermT}}{\Qmodels{\GenTerm}{\psi^{\GenTermT}}} \mid \GenTermT : \GenType, \GenTerm \not\PreGen_{\mathcal{L}} \GenTermT\}$, which is countable since there are countably many terms. Then $\chi^{\GenTerm} := \ConForm{X}$ has the desired properties.
\end{proof}

\begin{lemma}\label{lem:pre_char}
	For $\mathcal{L} \in \{\GenLog, \PosLog\}$, we have the following characterisations of the logical preorder $\mathcal{R} :=\, \LogPre{\mathcal{L}}$:
	\begin{multicols}{2}
		\begin{enumerate}
			\item $\ValTerm \mathcal{R}_{\NatType} \ValTermT \iff V=W$.
			\item $\ComTerm \mathcal{R}_{\ThunkType{\ComType}} \ComTermT \iff \force{\ComTerm} \,\mathcal{R}_{\ComType}\, \force{\ComTermT}$.
			\item $\pair{j}{\ValTerm} \mathcal{R}_{\SumType{i}{I}{\ValType_i}} \pair{k}{\ValTermT} \iff (j = k) \wedge \ValTerm \mathcal{R}_{\ValType_j} \ValTermT$.
			\item $\pair{\ValTerm}{\ValTerm'} \mathcal{R}_{\PairType{\ValType}{\ValTypeT}} \pair{\ValTermT}{\ValTermT'} \Leftrightarrow \ValTerm \mathcal{R}_{\ValType} \ValTermT \wedge \ValTerm' \mathcal{R}_{\ValTypeT} \ValTermT'$.
			\item $\ComTerm \mathcal{R}_{\FunctType{\ValType}{\ComType}} \ComTermT \iff \forall V : \ValType, \,\,\appt{\ComTerm}{\ValTerm} \,\mathcal{R}_{\ComType}\, \appt{\ComTermT}{\ValTerm}$.
			\item $\ComTerm \mathcal{R}_{\ProdType{i}{I}{\ComType_i}} \ComTermT \iff \forall j \in I, \,\, \appt{\ComTerm}{j} \,\mathcal{R}_{\ComTerm_j}\, \appt{\ComTermT}{j}$.
		\end{enumerate}
	\end{multicols}
\end{lemma}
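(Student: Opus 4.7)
The plan is to establish each of the six clauses by proving both directions. The forward ($\Rightarrow$) directions exploit a specific test formula at the relevant type, while the backward ($\Leftarrow$) directions proceed by induction on the formula structure over that type.

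For the forward directions, most cases follow by choosing a canonical witness. Item~(1) uses $\EquForm{n}$: if $V = \numeral{n}$, then $\Qmodels{V}{\EquForm{n}} = \True$, which combined with $V \LogPre{\mathcal{L}} W$ forces $\Qmodels{W}{\EquForm{n}} = \True$, hence $W = \numeral{n}$. Item~(2) uses the satisfaction rule $\Qmodels{V}{\ThunkForm{\cphi}} = \Qmodels{\force{V}}{\cphi}$. Items~(4)--(6) use the corresponding projection and application rules for $\FirForm{\phi}$, $\SecForm{\phi}$, $\FunctForm{V}{\cphi}$, and $\FunctForm{j}{\cphi}$, each of which turns the ambient preorder relation into one at a strictly simpler type. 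For item~(3), once the tags agree the inner relation follows by testing with $\PairForm{j}{\phi}$; but we must first force $j = k$. For this we invoke Lemma~\ref{lemma:char_formula} to produce a characteristic formula $\chi^V \in \formulas{\ValType_j}$ with $\Qmodels{V}{\chi^V} = \True$. Then $\Qmodels{\pair{j}{V}}{\PairForm{j}{\chi^V}} = \True$ while $\Qmodels{\pair{k}{W}}{\PairForm{j}{\chi^V}} = \False$ whenever $k \neq j$, so $\pair{j}{V} \LogPre{\mathcal{L}} \pair{k}{W}$ forces $j = k$.

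For the backward directions, fix terms satisfying the type-specific condition and proceed by induction on $\gphi \in \formulas{\GenType}$ to show $\Qmodels{\GenTerm}{\gphi} \fimp \Qmodels{\GenTermT}{\gphi}$. The nontrivial base cases are exactly the type-specific constructors from Fig.~\ref{fig:formula} (e.g.\ $\ThunkForm{\cphi}$ for $\ThunkType{\ComType}$, $\FunctForm{V}{\cphi}$ for $\FunctType{\ValType}{\ComType}$), and their satisfaction clauses reduce the question directly to the induced hypothesis at a simpler type. The inductive cases for the generic connectives $\DisForm{X}$, $\ConForm{X}$, $\StepForm{(-)}{a}$ and $\StabForm{a}$ all preserve $\fimp$ by monotonicity of $\sup$, $\inf$, and the threshold operation; this suffices for the positive logic $\PosLog$. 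For $\GenLog$, the one additional constructor $\NegForm{\,\cdot\,}$ is anti-monotone, so we invoke Lemma~\ref{lemma:symmetric}: the hypothesis in fact produces satisfaction \emph{equality} at each inductive step, and negation preserves equality.

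The main obstacle is the sum-type case of the forward direction, where we cannot isolate $j \neq k$ with any single canonical formula scheme and must appeal to Lemma~\ref{lemma:char_formula} for a tag-detecting witness. Everything else is a systematic traversal of the formula grammar, with the satisfaction rules in Section~\ref{section:logic} doing the heavy lifting.
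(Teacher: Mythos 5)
Your proof is correct and takes essentially the same route as the paper's (much terser) argument: the preorder at each type is determined by its basic formulas, and unfolding their satisfaction clauses yields the stated characterisations, with an induction over the remaining (generic) constructors for the converse directions. You merely make explicit two details the paper leaves implicit, namely the tag-separating test for sum types (where the simpler constant-true witness $\PairForm{j}{\StabForm{\True}}$ would also do in place of Lemma~\ref{lemma:char_formula}) and the appeal to Lemma~\ref{lemma:symmetric} to upgrade the hypotheses to satisfaction equalities so that negation is handled in the backward direction for $\GenLog$.
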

\begin{proof}
	Note that at each type level, the preorder is completely determined by basic formulas. All other formulas depend solely on the satisfaction of basic formulas, by a simple induction.
	As such, the above characterisations are a simple consequence of unfolding the satisfaction relation of basic formulas.
\end{proof}

\subsection{Congruence properties}
A relation on terms is \emph{compatible}, if it is preserved over the typing rules from Fig. \ref{fig:typing}.
We introduce the three properties that we will require in order to establish that (the open extensions of) the behavioural preorders are compatible, hence precongruences.
The space $\Trees{\Quanti}$, which forms the basis of the technical definition of the modalities, plays a fundamental role in this. 

The first property considers the leaf order $\Trees{\fimp}$ on $\Trees{\Quanti}$, where $t \,\Trees{\fimp}\, r$ if $r$ can be created by replacing leaves $a \in \Quanti$ of $t$ by leaves $b \in \Quanti$ of higher value $a \fimp b$.
The $\bot$ leaves can however not be replaced.
\begin{definition}
	A modality $q \in \QObser$ is \emph{leaf-monotone} if $\forall t,r \in \Trees{\Quanti}$, $t \,\Trees{\fimp}\, r \implies {\denote{q}(t) \fimp \denote{q}(r)}$.
\end{definition}
\noindent
This property is useful for establishing a variety of different results, but mainly just shows that modalities preserve the implicit (point-wise) order $\gphi \Rightarrow \gpsi$ on formulas ($\gphi \Rightarrow \gpsi$ iff $\forall \GenTerm, \Qmodels{\GenTerm}{\gphi} \fimp \Qmodels{\GenTermT}{\gpsi}$).

The second property considers the $\omega$-complete tree order $\Tord$ on $\Trees{\Quanti}$, defined just after Definition \ref{def:tree}.
\begin{definition}
	A modality $q \in \QObser$ is \emph{tree Scott continuous} if for any ascending chain $t_0 \Tord t_1 \Tord t_2 \Tord \dots$ it holds that $\denote{q}(\bigsqcup_{n \in \Nat} t_n) = \sup\{\denote{q}(t_n) \mid n \in \Nat\}$.
\end{definition}
\noindent
This is property is necessary in the congruence proof for inductively approximating the satisfaction value of infinite trees generated by the fixpoint operator and infinite arity effect operators.

The third and final property is the most technical one, and considers the preservation of the behavioural preorder over sequence operations such as $\comseq{(-)}{x}{(-)}$.
It considers the monad multiplication map $\mu: \Trees{\Trees{\Quanti}} \to \Trees{\Quanti}$, and requires that the abstract generalisation of the behavioural preorder on $\Trees{\Trees{\Quanti}}$ and $\Trees{\Quanti}$ is preserved by the $\mu$-map.
To formulate this, we need first define these abstract relations.

We write $h: \Quanti \moto \Quanti$ to say that $h$ is a \emph{monotone} function from $\Quanti$ to $\Quanti$, so $a \fimp b \Rightarrow h(a) \fimp h(b)$.
For a function $h: X \to Y$ we write $h^*: \Trees{X} \to \Trees{Y}$ for its (functorial) \emph{lifting} defined by $h^*(t) := t[x \to h(x)]$.
For a function $h: X \to \Quanti$ (a \emph{valuation} on $X$) and a modality $q \in \QObser$, we write $t \in q(h)$ for $\denote{q}(h^*(t))$.
\begin{definition}
	For any two trees $t, t' \in T\Quanti$,
	$t \prebas t' \,\, \iff \,\, \forall q \in \QObser, \forall h: \Quanti \moto \Quanti, ~ (t \in q(h)) \fimp (t' \in q(h))$.
\end{definition}

For any relation $\mathcal{R} \subseteq X \times Y$, and valuation $h: X \to \Quanti$, we define $(\RightSet{\mathcal{R}}{h}): Y \to \Quanti$ to be the function such that $\RightSet{\mathcal{R}}{h}(b) := \sup_{a \in X,a \mathcal{R} b}(h(a))$.
We classify abstract quantitative behavioural properties on $T\Quanti$.
A function $H: \Trees{\Quanti} \to \Quanti$ is called \emph{quantitative behaviourally saturated} if for any two trees $t, t' \in T\Quanti$ such that $t \prebas t'$, it holds that $H(t) \fimp H(t')$.
We write $\QBS$ for the set of quantitative behaviourally saturated functions.
Note that $H \in \QBS$ if and only if there is a function $F: T\Quanti \to \Quanti$ such that $H = \RightSet{\prebas}{F}$.
Moreover, for any $q \in \QObser$, it is easy to see that $\denote{q} \in \QBS$.
We define a relation on \emph{quantitative double trees} $TT\Quanti$.
\begin{definition}
	We define the preorder $\modbel$ on $TT\Quanti$ by: \,for any two quantitative double trees $r,r' \in TT\Quanti$,
	$r \modbel\, r' \quad \iff \quad \forall q \in \QObser, \forall H \in \QBS, ~ r \in q(H) ~ \fimp ~ r' \in q(H)$.
\end{definition}
\begin{lemma}\label{lem:quanmodbel_equivalences}
	If all modalities $q \in \QObser$ are leaf-monotone, then for any two $r,r' \in TT\Quanti$,\\ 
	$r \modbel r' \quad \iff \quad \forall F: \Trees{\Quanti} \to \Quanti, \forall q \in \QObser, ~ r \in q(F) ~ \fimp ~ r' \in q(\RightSet{\prebas}{F})$.
\end{lemma}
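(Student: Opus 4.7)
My plan is to unpack $\modbel$ and exploit two elementary facts about the operator $\RightSet{\prebas}{(-)}$: it always produces a function in $\QBS$, and it acts as the identity on $\QBS$. First I would record that $\prebas$ is reflexive (both sides of the defining implication match trivially), so for any $F : \Trees{\Quanti} \to \Quanti$ one has the pointwise inequality $F(t) \fimp \RightSet{\prebas}{F}(t) = \sup_{s \prebas t} F(s)$, taking $s = t$. A second use of transitivity of $\prebas$ shows $\RightSet{\prebas}{F} \in \QBS$. Finally, if $H \in \QBS$ already, then $H(s) \fimp H(t)$ whenever $s \prebas t$, so $\sup_{s \prebas t} H(s) = H(t)$, meaning $\RightSet{\prebas}{H} = H$. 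Together with the characterisation $H \in \QBS \iff H = \RightSet{\prebas}{F}$ for some $F$ recorded above, these are all the structural facts I need.

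For the forward direction, assume $r \modbel r'$, and fix arbitrary $F : \Trees{\Quanti} \to \Quanti$ and $q \in \QObser$. Put $H := \RightSet{\prebas}{F}$; this $H$ lies in $\QBS$. Because $F(t) \fimp H(t)$ for every tree $t$, the lifted tree $F^*(r)$ differs from $H^*(r)$ only by replacing each leaf by a larger leaf, i.e.\ $F^*(r) \,\Trees{\fimp}\, H^*(r)$. This is exactly where \emph{leaf-monotonicity} of $q$ is used: it gives
\[ r \in q(F) \;=\; \denote{q}(F^*(r)) \;\fimp\; \denote{q}(H^*(r)) \;=\; r \in q(H). \]
Combining with the hypothesis $r \modbel r'$ instantiated at $H$ and $q$, we get $r \in q(F) \fimp r' \in q(H) = r' \in q(\RightSet{\prebas}{F})$, as required.

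For the backward direction, assume the right-hand side and fix $H \in \QBS$ and $q \in \QObser$. Instantiate the hypothesis with $F := H$: we obtain $r \in q(H) \fimp r' \in q(\RightSet{\prebas}{H})$. But by the structural observation above, $\RightSet{\prebas}{H} = H$, so this collapses to $r \in q(H) \fimp r' \in q(H)$. As $q$ and $H$ were arbitrary, $r \modbel r'$ follows.

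The only nontrivial step is the forward direction, and the single ingredient that does real work there is leaf-monotonicity: without it one cannot compare $\denote{q}(F^*(r))$ with $\denote{q}((\RightSet{\prebas}{F})^*(r))$ even though the second tree dominates the first leaf-by-leaf. Everything else is just bookkeeping about how $\RightSet{\prebas}{(-)}$ bridges arbitrary valuations $F$ and behaviourally saturated valuations $H$.
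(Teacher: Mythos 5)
Your proof is correct and follows essentially the same route as the paper's: the forward direction uses the pointwise inequality $F \fimp \RightSet{\prebas}{F}$ together with leaf-monotonicity and the fact that $\RightSet{\prebas}{F} \in \QBS$, and the backward direction uses that $\RightSet{\prebas}{H} = H$ for $H \in \QBS$. You have merely spelled out the details that the paper leaves implicit, and your identification of leaf-monotonicity as the one substantive ingredient is accurate.
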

\begin{proof}
	For `$\Rightarrow$', note that for any $t \in T\Quanti$, $F(t) \fimp \RightSet{\prebas}{F}(t)$ so the result follows from leaf-monotonicity and the fact that $\RightSet{\prebas}{F} \in \QBS$.
	For `$\Leftarrow$', use that for $H \in \QBS$, $\RightSet{\prebas}{H} = H$.
\end{proof}

We can define the third property, decomposability, together with its stronger counterpart, sequentiality\footnote{Sequentiality is one of two properties for $\denote{q}$ to be an Eilenberg-Moore algebra for the monad $\Trees{-}$}.
\begin{definition}
	$\QObser$ is \emph{decomposable} if for all $t,r \in \Trees{\Trees{\Quanti}}$, if $t \modbel r$ then $\mu t \prebas \mu r$. A modality $q \in \QObser$ is \emph{sequential} if for all $t \in \Trees{\Trees{\Quanti}}, \denote{q}(\mu t) = \denote{q}(\denote{q}^*(t))$.
\end{definition}

\begin{lemma}
	If all modalities $q \in \QObser$ are leaf-monotone and sequential, then $\QObser$ is decomposable.
\end{lemma}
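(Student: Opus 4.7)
The plan is to reduce the conclusion $\mu t \prebas \mu r$ to the hypothesis $t \modbel r$ by identifying the right behaviourally-saturated test valuation to feed into the definition of $\modbel$. Unfolding the definition of $\prebas$, I need to show that for every $q \in \QObser$ and every monotone $h : \Quanti \moto \Quanti$, $\denote{q}(h^*(\mu t)) \fimp \denote{q}(h^*(\mu r))$, so I fix such $q$ and $h$ throughout.

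I would first rewrite both sides using naturality of the monad multiplication $\mu$ and the sequentiality of $q$. Naturality gives $h^*(\mu t) = \mu((h^*)^*(t))$, and then sequentiality yields $\denote{q}(h^*(\mu t)) = \denote{q}(\denote{q}^*((h^*)^*(t)))$. Functoriality of the lifting operation collapses the inner composition: setting $f : \Trees{\Quanti} \to \Quanti$ to be the valuation $f(u) := \denote{q}(h^*(u))$, for every non-$\bot$ leaf $u$ of $t$ the composite $\denote{q}^* \circ (h^*)^*$ sends $u$ to $f(u)$, so $\denote{q}^*((h^*)^*(t)) = f^*(t)$. Therefore
$$\denote{q}(h^*(\mu t)) = \denote{q}(f^*(t)), \qquad \denote{q}(h^*(\mu r)) = \denote{q}(f^*(r)).$$

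The pivotal observation is that $f$ lies in $\QBS$. Indeed, if $u \prebas u'$ in $\Trees{\Quanti}$, then the definition of $\prebas$ instantiated at the particular modality $q$ and monotone map $h$ fixed above yields $\denote{q}(h^*(u)) \fimp \denote{q}(h^*(u'))$, which is exactly $f(u) \fimp f(u')$. Armed with $f \in \QBS$, I can apply the hypothesis $t \modbel r$ at this $q$ and $H := f$ to conclude $\denote{q}(f^*(t)) \fimp \denote{q}(f^*(r))$, and the two equalities above turn this into the desired $\denote{q}(h^*(\mu t)) \fimp \denote{q}(h^*(\mu r))$.

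The conceptual obstacle, as I see it, is spotting that the valuation $f := \denote{q} \circ h^*$ is simultaneously the natural outcome of sequentiality's two-stage decomposition of $\mu$ and an element of $\QBS$ \emph{for free}, directly from the definition of $\prebas$; the remainder is bookkeeping with naturality and lifting. Leaf-monotonicity of $q$ does not appear to be strictly necessary for this direct argument, but it would be needed in a variant of the proof that routes the middle step through the alternative characterisation of $\modbel$ in Lemma \ref{lem:quanmodbel_equivalences}.
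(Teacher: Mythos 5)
Your proof is correct and follows essentially the same route as the paper's: rewrite $\denote{q}(h^*(\mu t))$ via naturality of $\mu$ and sequentiality as $(t \in q(\denote{q}\circ h^*))$, observe that $\denote{q}\circ h^* \in \QBS$, and apply the hypothesis $t \modbel r$ at that test. Your closing remark is also accurate — the paper's own argument likewise never invokes leaf-monotonicity, which is only needed if one detours through the characterisation of $\modbel$ in Lemma \ref{lem:quanmodbel_equivalences}.
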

\begin{proof}
	Assume $r \modbel r'$, and let $q \in \QObser$ and $h: \Quanti \moto \Quanti$.
	It is easy to see that $\denote{q} \in \QBS$, so $(\denote{q} \circ h^*) \in \QBS$ too. 
	Since $q$ is sequential, $\denote{q}(h^*(\mu r)) = \denote{q}(\mu(h^{**}(r))) = \denote{q}(\denote{q}^* \circ h^{**}(r)) = (r \in q(\denote{q} \circ h^*)) \fimp (r' \in q(\denote{q} \circ h^*))$ which is by the same steps equal to $\denote{q}(h^*(\mu r'))$.
	Hence $\mu r \prebas \mu r'$, and as such, $\QObser$ is decomposable.
\end{proof}

The three properties defined above allow us to establish compatibility:
\begin{theorem}\label{main_theorem}
	If $\QObser$ is a decomposable set of leaf-monotone and Scott tree continuous modalities, then $\PreGen$ and $\PrePos$ are compatible, hence precongruences.
\end{theorem}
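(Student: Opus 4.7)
The plan is to route compatibility through applicative (bi)similarity and Howe's method, as previewed in the introduction. First I would define a \emph{relator} $\Grelator{-}$ from the modalities $\QObser$: given a relation $\arbsim$ between values of type $\ValType$, the lifted relation $\Grelator{\arbsim}$ on trees in $\Trees{\Terms{\ForceType{\ValType}}}$ requires that for every $q \in \QObser$ and every sufficient supply of ``witness'' valuations built from $\arbsim$, the $\denote{q}$-observations of the two trees agree in the $\fimp$-direction. This bootstraps into applicative similarity $\Bord$, defined coinductively by: equality at $\NatType$; componentwise at sums and pairs; forces related at $\ThunkType{\ComType}$; application-preservation at $\FunctType{\ValType}{\ComType}$ and $\ProdType{i}{I}{\ComType_i}$; and at $\ForceType{\ValType}$ the clause $\ComTerm \Bord \ComTermT \iff |\ComTerm| \,\Grelator{\Bord_{\ValType}}\, |\ComTermT|$.

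Next I would show that applicative similarity coincides with $\PrePos$ (and the induced bisimilarity with $\PreGen$ via Lemma \ref{lemma:symmetric}). The inclusion $(\Bord) \subseteq (\PrePos)$ follows by induction on formulas, using the satisfaction rules and the definition of the relator. The reverse inclusion $(\PrePos) \subseteq (\Bord)$ is established coinductively, and is where \emph{leaf-monotonicity} enters: verifying the $\ForceType{\ValType}$ clause requires translating pointwise-on-leaves comparisons provided by the logic into tree-level comparisons under every $\denote{q}$, for which one also uses the characterising formulas from Lemma \ref{lemma:char_formula} to produce exactly the monotone valuations the relator calls for.

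Compatibility of $\Bord$ is then obtained by Howe's method. Form the Howe closure $\HowClo{\Bord}$, which is compatible by construction and contains $\Bord$, and show $\HowClo{\Bord} \subseteq \Bord$ by verifying the coinductive requirement: whenever $\ComTerm \,\HowClo{\Bord}\, \ComTermT$ at type $\ForceType{\ValType}$, the trees satisfy $|\ComTerm| \,\Grelator{\HowClo{\Bord}_{\ValType}}\, |\ComTermT|$. I would proceed by induction on the $|-|_n$-approximations of $|\ComTerm|$, treating each reduction step, each effect operator, and each terminal term in turn, exploiting standard substitutivity and transitivity properties of the Howe closure.

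The hard case, where all three hypotheses genuinely bite, is sequencing $\comseq{\ComTerm}{x}{\ComTermT}$. Here $|\comseq{\ComTerm}{x}{\ComTermT}|$ equals $\mu$ applied to the double tree obtained by plugging $|\ComTermT[\ValTerm/x]|$ into each leaf $\return{\ValTerm}$ of $|\ComTerm|$. \emph{Decomposability} is exactly the tool needed to conclude that the flattened tree lies in $\Grelator{\HowClo{\Bord}_{-}}$ from Howe-relatedness of the outer computation and of every inner pair $\ComTermT[\ValTerm/x], \ComTermT'[\ValTermT/x]$ with $\ValTerm \,\HowClo{\Bord}\, \ValTermT$. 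The $\fix$ case needs \emph{tree Scott continuity} to reduce the infinite unfolding to the suprema of its finite approximations, and leaf-monotonicity underpins the effect-operator step. The main obstacle I expect is precisely setting up this sequencing case: the premise of decomposability is a $\modbel$-statement about pre-$\mu$ double trees, and extracting such a statement from the inductive Howe hypotheses requires Lemma \ref{lem:quanmodbel_equivalences} to reshape $\modbel$ into a form compatible with the data actually available from the induction.
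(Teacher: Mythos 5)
Your proposal follows essentially the same route as the paper: define the relator $\Qrelator{-}$ from the modalities, characterise $\PrePos$ and $\PreGen$ as applicative $\QObser$-(bi)similarity using leaf-monotonicity together with the characteristic/right-set formulas (Theorems \ref{theorem:logic_simil} and \ref{theorem:logic_bisim}), and then establish compatibility of the (bi)similarity by Howe's method, with decomposability carrying the sequencing case, Scott tree continuity the approximation/fixpoint cases, and leaf-monotonicity the effect-operator cases (Theorem \ref{theorem:bisim} and the appendix). The decomposition into these steps, and the identification of where each hypothesis is used, matches the paper's own argument.
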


All our examples satisfy these three properties. Both leaf-monotonicity and Scott tree continuity are consequences of the inductive and hence continuous definitions of the modalities, while decomposability holds by observing that any modality from the examples is sequential.
We illustrate this in the following lemma.

\begin{lemma}
	In Example \ref{example:probdet} of combined probability and nondeterminism, $\mathsf{E}_{\Box}$ is sequential.
\end{lemma}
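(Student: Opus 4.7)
The plan is to show that $\Phi(t) := \denote{\mathsf{E}_{\Box}}(\mu t)$ and $\Psi(t) := \denote{\mathsf{E}_{\Box}}(\denote{\mathsf{E}_{\Box}}^*(t))$ coincide as functions on $\Trees{\Trees{\Quanti}}$ by extracting common recursive equations on the outer tree structure and then invoking Scott continuity. The first move is to derive closed-form recursions for $\denote{\mathsf{E}_{\Box}}$ itself: from the inductive clauses $\denote{\mathsf{E}_{\Box}}_{n+1}(\textsf{p-or}(u,v)) = (\denote{\mathsf{E}_{\Box}}_n(u)+\denote{\mathsf{E}_{\Box}}_n(v))/2$ and $\denote{\mathsf{E}_{\Box}}_{n+1}(\textsf{n-or}(u,v)) = \min(\denote{\mathsf{E}_{\Box}}_n(u),\denote{\mathsf{E}_{\Box}}_n(v))$, taking $\sup_n$ and using that $+$ and $\min$ commute with ascending sups in $[0,1]$ yields $\denote{\mathsf{E}_{\Box}}(\textsf{p-or}(u,v)) = (\denote{\mathsf{E}_{\Box}}(u)+\denote{\mathsf{E}_{\Box}}(v))/2$ and $\denote{\mathsf{E}_{\Box}}(\textsf{n-or}(u,v)) = \min(\denote{\mathsf{E}_{\Box}}(u),\denote{\mathsf{E}_{\Box}}(v))$, together with $\denote{\mathsf{E}_{\Box}}(\bot) = 0$ and $\denote{\mathsf{E}_{\Box}}(\eta(a)) = a$.

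Since $\mu$ is the tree-monad multiplication and $\denote{\mathsf{E}_{\Box}}^*$ is leaf-substitution, each commutes with the effect operators and preserves $\bot$. Combined with the closed-form recursions above, this shows $\Phi$ and $\Psi$ satisfy identical recursive equations: both send $\bot$ to $0$, both send $\eta(s)$ to $\denote{\mathsf{E}_{\Box}}(s)$, both send $\textsf{p-or}(t_1,t_2)$ to the average of their values on $t_1, t_2$, and both send $\textsf{n-or}(t_1,t_2)$ to the minimum of those values. A straightforward structural induction on trees of finite outer shape (built from $\bot$, $\eta(s)$, $\textsf{p-or}$, and $\textsf{n-or}$, keeping each inner $s \in \Trees{\Quanti}$ opaque as a leaf) then yields $\Phi(t) = \Psi(t)$ for all such $t$.

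To lift the identity to arbitrary $t \in \Trees{\Trees{\Quanti}}$, I would use $t = \bigsqcup_n t^{(n)}$, where $t^{(n)}$ is the depth-$n$ outer truncation of $t$ and therefore has finite outer shape. The maps $\mu$ and $\denote{\mathsf{E}_{\Box}}^*$ act node-wise and are tree Scott continuous, and $\denote{\mathsf{E}_{\Box}}$ is tree Scott continuous by hypothesis, so both $\Phi$ and $\Psi$ are Scott continuous. Hence $\Phi(t) = \sup_n \Phi(t^{(n)}) = \sup_n \Psi(t^{(n)}) = \Psi(t)$. The main obstacle I anticipate is this last step: carefully verifying tree Scott continuity of $\mu$ and $\denote{\mathsf{E}_{\Box}}^*$ on $\Trees{\Trees{\Quanti}}$ and confirming that their composition with the hypothesized tree Scott continuity of $\denote{\mathsf{E}_{\Box}}$ really gives Scott continuity of $\Phi$ and $\Psi$; the structural induction and the elementary continuity of $+$ and $\min$ on $[0,1]$ in the earlier steps are essentially routine.
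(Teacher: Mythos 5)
Your proof is correct, but it takes a genuinely different route from the paper's. You first upgrade the approximation clauses to closed-form recursions for the limit $\denote{\mathsf{E}_{\Box}}$ itself (using that averaging and $\min$ commute with ascending suprema in $[0,1]$), note that $\mu$ and leaf substitution are homomorphic for the effect operators, deduce that $t \mapsto \denote{\mathsf{E}_{\Box}}(\mu t)$ and $t \mapsto \denote{\mathsf{E}_{\Box}}(\denote{\mathsf{E}_{\Box}}^*(t))$ satisfy the same structural recursion and agree on $\bot$ and on $\eta(s)$, and conclude by induction on finite outer truncations followed by a continuity pass. The paper never forms the closed-form recursion for the limit: it proves the two inequalities separately by a threshold argument, picking $a$ with $\denote{\mathsf{E}_{\Box}}(\mu r) > a$, descending to a finite approximant $\denote{\mathsf{E}}_n$, and exploiting the fact that each $\denote{\mathsf{E}}_m$ inspects only finitely many leaves in order to interleave approximation indices (e.g.\ $\denote{\mathsf{E}_{\Box}}_{m+k}(\mu r) \geq \denote{\mathsf{E}}_m(\denote{\mathsf{E}}^{*}_k(r))$). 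Your route is more structural---it is essentially the argument that a continuous algebra morphism out of the free continuous $\Sigma$-algebra is determined by its values on generators, and it would transfer verbatim to any modality defined by leafwise Scott-continuous operations---but it carries the extra obligations you flag at the end: Scott continuity of $\mu$, of $(-)^*$, and of $\denote{\mathsf{E}_{\Box}}$ itself. The last of these is not a hypothesis of the lemma as stated (so you should not invoke it ``by hypothesis''), but it is asserted in the surrounding text for all the paper's examples and does hold here because the signature $\Sigma_{pn}$ is finitary, so this is a presentational slip rather than a gap. The paper's index-juggling avoids these continuity obligations at the price of an argument that is tied to the particular approximation scheme.
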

\begin{proof}
Take $r,r' \in \Trees{\Trees{\Quanti}}$ as above and assume $\denote{\mathsf{E}_{\Box}}(\mu r) > a \in [0,1]$, then since $\denote{\mathsf{E}_{\Box}}(\mu r) = \sup_n(\denote{\mathsf{E}}_n(\mu r))$ there must be an $n \in \mathbb{N}$ such that $\denote{\mathsf{E}}_n(\mu r) > a$. By the recursive definition of $\denote{\mathsf{E}_{\Box}}_{(-)}$ we can see that $\denote{\mathsf{E}}_n(r[t \mapsto F'_n(t)]) \geq \denote{\mathsf{E}}_n(\mu r)$, and hence $\denote{\mathsf{E}_{\Box}}(\denote{\mathsf{E}_{\Box}}^*(r)) > a$. 

Now assume $\denote{\mathsf{E}_{\Box}}(\denote{\mathsf{E}_{\Box}}^*(r)) > a$, then there must be an $m$ such that $\denote{\mathsf{E}}_m(\denote{\mathsf{E}_{\Box}}^*(r)) > a$. Now, $\denote{\mathsf{E}}_m$ only looks at a finite amount of leaves, and hence there must be a $k$ such that $\denote{\mathsf{E}}_m(\denote{\mathsf{E}}^*_k (r)) > a$. Again, studying the recursive definition of $\denote{\mathsf{E}_{\Box}}_{(-)}$ we observe that $\denote{\mathsf{E}_{\Box}}_{m+k}(\mu r) \geq \denote{\mathsf{E}}_m(\denote{\mathsf{E}}^*_k (r'))$, so we conclude that $\denote{\mathsf{E}_{\Box}}(\mu r) > a$. 
This is for all such $a \in \Quanti$, so $\denote{\mathsf{E}_{\Box}}(\mu r) = \denote{\mathsf{E}_{\Box}}(\denote{\mathsf{E}_{\Box}}^*(r))$.
\end{proof}

We end this section with an example of an equivalence and an in-equivalence. It has to be said that the purpose of this paper is to give a widely applicable approach to \emph{defining} equivalence, not to prove equivalence of terms. Moreover, for practical purposes, establishing an in-equivalence is easier than establishing an equivalence, since you only have to find one formula which distinguishes the two.

As an example, we look at the combination of cost and nondeterminism, given by signature $\Sigma_{cn}$, truth space $[0,\infty]$ with reverse order, and modalities $\{\mathsf{C}_{\lozenge}, \mathsf{C}_{\Box}\}$.
Consider the two terms $\ComTerm := \EffCost{1}{\return{\numeral{7}}}$ and $\ComTermT := \textsf{nor}(\return{\numeral{7}}, \EffCost{3}{\return{\numeral{7}}})$, both of which always return $7$, though $\ComTermT$ may return it at a lower or higher cost.
Then $\ComTerm \not\PrePos \ComTermT$, since $\Qmodels{\ComTerm}{\ModalForm{\mathsf{C}_{\lozenge}}{\EquForm{7}}} = 1 \not\leq 0 = \Qmodels{\ComTermT}{\ModalForm{\mathsf{C}_{\lozenge}}{\EquForm{7}}}$, and $\ComTermT \not\PrePos \ComTerm$, since $\Qmodels{\ComTermT}{\ModalForm{\mathsf{C}_{\Box}}{\EquForm{7}}} = 3 \not\leq 1 = \Qmodels{\ComTerm}{\ModalForm{\mathsf{C}_{\Box}}{\EquForm{7}}}$.
However, taking $\ComTermD := \textsf{nor}(\ComTerm, \ComTermT)$, it can be shown that $\ComTermD \equiv \ComTermT$, since for any formula $\phi \in \formulas{\NatType}$, $\Qmodels{\ComTermT}{\ModalForm{\mathsf{C}_{\lozenge}}{\phi}} = \Qmodels{\numeral{7}}{\phi} = \Qmodels{\ComTermD}{\ModalForm{\mathsf{C}_{\lozenge}}{\phi}}$ and $\Qmodels{\ComTermT}{\ModalForm{\mathsf{C}_{\Box}}{\phi}} = \Qmodels{\numeral{7}}{\phi} + 3 = \Qmodels{\ComTermD}{\ModalForm{\mathsf{C}_{\Box}}{\phi}}$.

\section{Applicative Bisimilarity}\label{section:bisimilarity}
We investigate how our quantitative modalities can be used to define a notion of Abramsky's applicative bisimilarity \cite{Abramsky90}, related to the behavioural equivalence (Theorem \ref{theorem:logic_bisim}), starting off by defining a relator \cite{Levy11,Thijs}.

\begin{definition}
	Given $\QObser$, we have a \emph{relator} $\Qrelator{-}: \mathcal{P}(X \times Y) \rightarrow \mathcal{P}(TX \times TY)$ given by:
	$$t \,\Qrelator{\mathcal{R}}\, r \iff \forall q \in \QObser, \forall h:X \rightarrow \Quanti, t \in q(h) \fimp r \in q(\RightSet{\mathcal{R}}{h})$$
\end{definition}

\noindent
We write $x \mathcal{R} y$ for $(x,y) \in \mathcal{R}$. Remember from the previous section that $(t \in q(h)) = \denote{q}(t[x \mapsto h(x)])$ and $(\RightSet{\mathcal{R}}{h})(b) := \sup\{h(a) \mid a \in X,a \mathcal{R} b\}$.
Note that $\Qrelator{\fimp} = \,\prebas$ and $\Qrelator{\prebas} = \,\modbel$ (see Lemma \ref{lem:quanmodbel_equivalences}). The following characterisation of the relator is immediate:
\begin{lemma}\label{lem:alt_rel}
	For any $t \in TX$, $r \in TY$, and $\mathcal{R} \subseteq X \times Y$, \, $t \,\Qrelator{\mathcal{R}}\, r \, \iff \, \forall h:X \rightarrow \Quanti, h^*(t) \prebas (\RightSet{\mathcal{R}}{h})^*(r)$.
\end{lemma}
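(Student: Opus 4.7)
The plan is to unfold both conditions directly from their definitions and observe that they coincide, with one step relying on leaf-monotonicity of the modalities. By definition of $\prebas$, the assertion $h^*(t) \prebas (\RightSet{\mathcal{R}}{h})^*(r)$ expands to: for every $q \in \QObser$ and every monotone $g: \Quanti \moto \Quanti$, $\denote{q}(g^*(h^*(t))) \fimp \denote{q}(g^*((\RightSet{\mathcal{R}}{h})^*(r)))$. Using the functoriality identity $g^* \circ h^* = (g \circ h)^*$, this simplifies to $\denote{q}((g \circ h)^*(t)) \fimp \denote{q}((g \circ \RightSet{\mathcal{R}}{h})^*(r))$, which is the form I will compare against the defining clause of $t \,\Qrelator{\mathcal{R}}\, r$.

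The reverse implication $(\Leftarrow)$ is immediate: instantiate $g$ as the identity on $\Quanti$ (which is monotone), and the above collapses to $\denote{q}(h^*(t)) \fimp \denote{q}((\RightSet{\mathcal{R}}{h})^*(r))$ for every $h$ and every $q$, which is precisely what it means for $t \,\Qrelator{\mathcal{R}}\, r$.

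For the forward implication $(\Rightarrow)$, given a valuation $h: X \to \Quanti$, a modality $q$ and a monotone $g: \Quanti \moto \Quanti$, the natural move is to apply the relator hypothesis to the composite valuation $g \circ h: X \to \Quanti$, which yields $\denote{q}((g \circ h)^*(t)) \fimp \denote{q}((\RightSet{\mathcal{R}}{g \circ h})^*(r))$. The only delicate point is that $\RightSet{\mathcal{R}}{g \circ h}$ and $g \circ \RightSet{\mathcal{R}}{h}$ need not coincide: the former is $b \mapsto \sup\{g(h(a)) \mid a \mathcal{R} b\}$, the latter $b \mapsto g(\sup\{h(a) \mid a \mathcal{R} b\})$, and because $g$ is only monotone (not necessarily sup-preserving) we only obtain $\RightSet{\mathcal{R}}{g \circ h} \fimp g \circ \RightSet{\mathcal{R}}{h}$ pointwise.

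This gap is the only place where work is needed, and it is closed by leaf-monotonicity, which is in force throughout this section (cf.\ Theorem \ref{main_theorem}, and note that the preceding remark $\Qrelator{\fimp} = \,\prebas$ already relies on it). The pointwise inequality upgrades to $(\RightSet{\mathcal{R}}{g \circ h})^*(r) \,\Trees{\fimp}\, g^*((\RightSet{\mathcal{R}}{h})^*(r))$ in the leaf order, and leaf-monotonicity of $q$ gives $\denote{q}((\RightSet{\mathcal{R}}{g \circ h})^*(r)) \fimp \denote{q}(g^*((\RightSet{\mathcal{R}}{h})^*(r)))$. Composing with the bound from the relator hypothesis yields $\denote{q}((g \circ h)^*(t)) \fimp \denote{q}(g^*((\RightSet{\mathcal{R}}{h})^*(r)))$, as required. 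Thus no genuine obstacle arises; the lemma is bookkeeping with the functoriality of $(-)^*$, plus the single application of leaf-monotonicity needed to pull $g$ past $\RightSet{\mathcal{R}}{(-)}$.
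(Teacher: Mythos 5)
Your proof is correct. The paper offers no argument here at all---it simply declares the characterisation ``immediate''---and your unfolding is exactly the computation that justifies that claim: the backward direction by instantiating the monotone map at the identity, the forward direction by applying the relator clause to the composite valuation $g \circ h$ and using $g^* \circ h^* = (g\circ h)^*$. Your one substantive observation is also right and worth recording: the forward direction is not purely formal, since $\RightSet{\mathcal{R}}{(g\circ h)}$ and $g\circ\RightSet{\mathcal{R}}{h}$ coincide only when $g$ preserves suprema, and a merely monotone $g$ yields only the pointwise inequality $\RightSet{\mathcal{R}}{(g\circ h)} \fimp g\circ\RightSet{\mathcal{R}}{h}$, which must then be pushed through $\denote{q}$ by leaf-monotonicity. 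The lemma as stated omits that hypothesis, but it is in force everywhere the lemma is invoked (Lemmas \ref{lemma:relator} and \ref{relat:monad} both assume all modalities are leaf-monotone, and the remark immediately preceding the lemma already appeals to it), so your proof matches the intended reading and in fact makes the paper's ``immediate'' more honest.
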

The following lemma shows that this satisfies the usual properties of monotone relators from \cite{Levy11,Thijs}. The proof is technical yet straightforward, and is left out to preserve space. 

\begin{lemma}\label{lemma:relator}
	If all quantitative modalities from $\QObser$ are leaf-monotone, then $\Qrelator{-}$ has the following properties:
	\begin{enumerate}
		\item If $\mathcal{R}$ is reflexive, then so is $\Qrelator{\mathcal{R}}$.
		\item If $\mathcal{R} \subseteq \mathcal{S}$, then $\Qrelator{\mathcal{R}} \subseteq \Qrelator{\mathcal{S}}$.
		\item For $\mathcal{R} \subseteq X \times Y$ and $\mathcal{S} \subseteq Y \times Z$, $\Qrelator{\mathcal{R}}\Qrelator{\mathcal{S}} \subseteq \Qrelator{\mathcal{RS}}$. Here $\mathcal{R}\mathcal{S}$ is relational concatenation.
		\item $\forall f: X \to Z, g: Y \to W, \mathcal{R} \subseteq Z \times W$ it holds that $\Qrelator{(f \times g)^{-1}\mathcal{R}} = (f^* \times g^*)^{-1} \Qrelator{\mathcal{R}}$
		
		where $(f \times g)^{-1}(\mathcal{R}) = \{(x,y) \in X \times Y \,|\, f(x)\, \mathcal{R} \,g(y)\}$.
	\end{enumerate}
\end{lemma}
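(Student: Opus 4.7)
The plan is to handle each of the four properties by pulling back the relator definition to the pointwise order on leaf-valuations, using Lemma \ref{lem:alt_rel} as the main reformulation and leaf-monotonicity of each $q \in \QObser$ as the workhorse. The basic observation, used throughout, is that whenever $h_1 \fimp h_2$ pointwise as maps into $\Quanti$, we have $t[h_1] \,\Trees{\fimp}\, t[h_2]$ on leaves (leaves labelled $\bot$ are untouched), and hence by leaf-monotonicity $\denote{q}(t[h_1]) \fimp \denote{q}(t[h_2])$ for every $q$. So everything reduces to pointwise comparisons between $\RightSet{\mathcal{R}}{h}$, $\RightSet{\mathcal{S}}{h}$, $\RightSet{\mathcal{R}}{(\RightSet{\mathcal{S}}{h})}$, $\RightSet{\mathcal{R}\mathcal{S}}{h}$, etc.

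For (1), reflexivity of $\mathcal{R}$ gives $\RightSet{\mathcal{R}}{h}(x) = \sup\{h(y) \mid y \mathcal{R} x\} \fpmi h(x)$ because $x \mathcal{R} x$ is always a candidate in the supremum; hence $h \fimp \RightSet{\mathcal{R}}{h}$ pointwise and leaf-monotonicity closes $t \,\Qrelator{\mathcal{R}}\, t$. For (2), $\mathcal{R} \subseteq \mathcal{S}$ gives directly $\RightSet{\mathcal{R}}{h} \fimp \RightSet{\mathcal{S}}{h}$ pointwise (the sup on the right ranges over a superset), and chaining with the hypothesis $t \,\Qrelator{\mathcal{R}}\, r$ yields $t \,\Qrelator{\mathcal{S}}\, r$. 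For (3), assume $t \,\Qrelator{\mathcal{R}}\, s$ and $s \,\Qrelator{\mathcal{S}}\, r$; the crucial algebraic fact is the identity
\[
\RightSet{\mathcal{S}}{(\RightSet{\mathcal{R}}{h})}(z) \;=\; \sup_{y\,\mathcal{S}\,z}\,\sup_{x\,\mathcal{R}\,y} h(x) \;=\; \sup\{h(x) \mid x\,\mathcal{R}\mathcal{S}\,z\} \;=\; \RightSet{\mathcal{R}\mathcal{S}}{h}(z),
\]
which lets us apply the first hypothesis with $h$ and then the second with $h' := \RightSet{\mathcal{R}}{h}$ to obtain $\denote{q}(t[h]) \fimp \denote{q}(r[\RightSet{\mathcal{R}\mathcal{S}}{h}])$.

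Part (4) is the main obstacle and uses both directions of Lemma \ref{lem:alt_rel} together with a careful choice of valuation pullback along $f$ and $g$. First note $f^*(t)[h'] = t[h' \circ f]$, so asking $f^*(t) \,\Qrelator{\mathcal{R}}\, g^*(u)$ means $\denote{q}(t[h' \circ f]) \fimp \denote{q}(u[(\RightSet{\mathcal{R}}{h'}) \circ g])$ for every $h' : Z \to \Quanti$ and every $q$. For $\subseteq$, given $t \,\Qrelator{(f\times g)^{-1}\mathcal{R}}\, u$ and any $h'$, apply the hypothesis to $h := h' \circ f$; the computation
\[
\RightSet{(f\times g)^{-1}\mathcal{R}}{h}(y) \;=\; \sup\{h'(f(x)) \mid f(x)\,\mathcal{R}\,g(y)\} \;\fimp\; \sup\{h'(z) \mid z\,\mathcal{R}\,g(y)\} \;=\; (\RightSet{\mathcal{R}}{h'} \circ g)(y)
\]
combined with leaf-monotonicity closes that direction. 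For $\supseteq$, given $h : X \to \Quanti$, define the pushforward $h' : Z \to \Quanti$ by $h'(z) := \sup\{h(x) \mid f(x) = z\}$ (with $\sup\emptyset = \False$); then one checks $(h' \circ f) \fpmi h$ pointwise, while the sup-exchange
\[
(\RightSet{\mathcal{R}}{h'} \circ g)(y) \;=\; \sup_{z\,\mathcal{R}\,g(y)}\,\sup_{f(x)=z} h(x) \;=\; \sup\{h(x) \mid f(x)\,\mathcal{R}\,g(y)\} \;=\; \RightSet{(f\times g)^{-1}\mathcal{R}}{h}(y)
\]
gives equality on the right-hand side. Two applications of leaf-monotonicity together with the hypothesis $f^*(t) \,\Qrelator{\mathcal{R}}\, g^*(u)$ then deliver $t \,\Qrelator{(f\times g)^{-1}\mathcal{R}}\, u$. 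The only subtle point is that $h' \circ f$ may strictly dominate $h$ when $f$ is non-injective, but leaf-monotonicity absorbs precisely this slack on the left of the chain of inequalities.
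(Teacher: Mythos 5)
Your proof is correct and follows essentially the same route as the paper's: each part is reduced to a pointwise comparison of right-set valuations and then closed off by leaf-monotonicity, with the same pushforward valuation $h'(z) := \sup\{h(x) \mid f(x)=z\}$ in part (4). The only divergence is notational: in part (3) you write the nested right-set as $\RightSet{\mathcal{S}}{(\RightSet{\mathcal{R}}{h})}$, which is the correct composition order under the paper's definition of $\uparrow$ (the paper's appendix writes the two relations in the other order, apparently a typo), so your identity is the one actually needed.
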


Fundamental to the definition of the relator is the notion of the \emph{right-predicate} $\RightSet{\mathcal{R}}{h}$. When the relation in question is our behavioural preorder, these right-predicates can be expressed in the logic.

\begin{lemma}\label{lemma:rightset_formula}
	Take $\mathcal{L} \in \{\GenLog,\PosLog\}$. For any predicate $\SetVar : \Terms{\ValType} \to \Quanti$, there is a formula $\phi^D \in \mathcal{L}$ such that $\Qmodels{\ValTerm}{\phi^D} = (\RightSet{\LogPre{\mathcal{L}}}{D})(\ValTerm)$ for any term $\ValTerm \in \Terms{\ValType}$.
\end{lemma}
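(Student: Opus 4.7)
The idea is to encode, for each $W \in \Terms{\ValType}$, a ``if $W \LogPre{\mathcal{L}} V$ then contribute $D(W)$, else contribute $\False$'' clause, and then take the countable join over all $W$. Two earlier ingredients make this almost automatic: the characteristic formulas of Lemma \ref{lemma:char_formula}, and the constant formulas $\StabForm{a}$ together with meets $\bigwedge$.

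\textbf{Construction.} For each $W \in \Terms{\ValType}$, apply Lemma \ref{lemma:char_formula} to produce a formula $\chi^W \in \mathcal{L}$ satisfying $\Qmodels{V}{\chi^W} = \True$ when $W \LogPre{\mathcal{L}} V$ and $\Qmodels{V}{\chi^W} = \False$ otherwise. (The lemma applies to either choice $\mathcal{L} \in \{\GenLog, \PosLog\}$ since both fragments are closed under countable conjunctions, and the construction of $\chi^W$ uses only $\bigwedge$ and $\StepForm{(-)}{a}$, which are positive.) Then define
\[
\phi^D \;:=\; \DisForm{\{\, \chi^W \wedge \StabForm{D(W)} \,\mid\, W \in \Terms{\ValType}\,\}}.
\]
This is a well-formed formula of $\mathcal{L}$: the join is countable because the finiteness assumption on index sets $I$ keeps $\Terms{\ValType}$ countable, and each $\chi^W \wedge \StabForm{D(W)}$ lies in $\mathcal{L}$.

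\textbf{Verification.} By the semantic clauses for $\wedge$ and $\StabForm{}$, we have $\Qmodels{V}{\chi^W \wedge \StabForm{D(W)}} = \Qmodels{V}{\chi^W} \wedge D(W)$, which equals $D(W)$ when $W \LogPre{\mathcal{L}} V$ (using $\True \wedge a = a$) and equals $\False$ otherwise (using $\False \wedge a = \False$). Taking the join:
\[
\Qmodels{V}{\phi^D} \;=\; \sup_{W \in \Terms{\ValType}}\!\bigl(\Qmodels{V}{\chi^W} \wedge D(W)\bigr) \;=\; \sup\{\,D(W) \mid W \LogPre{\mathcal{L}} V\,\} \;=\; \bigl(\RightSet{\LogPre{\mathcal{L}}}{D}\bigr)(V),
\]
where the $\False$-valued summands from the $W \not\LogPre{\mathcal{L}} V$ case contribute nothing to the supremum (and reflexivity of $\LogPre{\mathcal{L}}$ ensures the indexing set for the sup is nonempty, though this is not strictly needed since $\sup(\emptyset) = \False$).

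\textbf{Difficulty.} There is essentially no technical obstacle; the work is done by Lemma \ref{lemma:char_formula} and by the availability of $\StabForm{(-)}$ as a primitive. The only point worth flagging is that the argument relies on $\Terms{\ValType}$ being countable and on both $\GenLog$ and $\PosLog$ containing characteristic formulas, constants, countable meets, and countable joins -- all of which were built in by design.
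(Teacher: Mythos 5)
Your construction is exactly the paper's: the paper defines $\phi^D := \DisForm\{\ConForm\{\StabForm{D(\ValTerm)}, \chi^{\ValTerm}\} \mid \ValTerm \in \Terms{\ValType}\}$ using the characteristic formulas of Lemma \ref{lemma:char_formula}, which is your formula up to notation. Your verification spells out the supremum computation that the paper leaves implicit, and it is correct.
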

\begin{proof}
	We use Lemma \ref{lemma:char_formula} to define $\phi^D := \DisForm\{\ConForm\{\StabForm{D(\ValTerm)}, \chi^{\ValTerm}\} \mid \ValTerm \in \Trees{\ValType} \}$.
\end{proof} 

In the case that $\mathcal{R}$ is a relation on terms of some value type $\ValType$, we write $\Qrelator{\mathcal{R}}$ for the relation on terms of type $\ForceType{\ValType}$ given by $\Qrelator{\{(\return{\ValTerm},\return{\ValTermT}) \mid \ValTerm \mathcal{R_{\ValType}} \ValTermT\}}$. 
A relation $\mathcal{R}$ on terms is \emph{well-typed}, if it only relates terms of the same type and context, and $\mathcal{R}$ is \emph{closed} if it only relates closed terms.

\begin{definition}\label{def:simulation}
	A well-typed closed relation $\mathcal{R}$ is an \emph{applicative $\QObser$-simulation} if:
	\begin{multicols}{2}
	\begin{enumerate}
		\item $\ValTerm \mathcal{R}_{\NatType} \ValTermT \implies V=W$.
		\item $\ComTerm \mathcal{R}_{\ThunkType{\ComType}} \ComTermT \implies \force{\ComTerm} \,\mathcal{R}_{\ComType}\, \force{\ComTermT}$.
		\item $\pair{j}{\ValTerm} \mathcal{R}_{\SumType{i}{I}{\ValType_i}} \pair{k}{\ValTermT} \implies (j = k) \wedge \ValTerm \mathcal{R}_{\ValType_j} \ValTermT$.
		\item $\pair{\ValTerm}{\ValTerm'} \mathcal{R}_{\PairType{\ValType}{\ValTypeT}} \pair{\ValTermT}{\ValTermT'} \implies \ValTerm \mathcal{R}_{\ValType} \ValTermT \wedge \ValTerm' \mathcal{R}_{\ValTypeT} \ValTermT'$.
		\item $\ComTerm \mathcal{R}_{\FunctType{\ValType}{\ComType}} \ComTermT \implies \forall V : \ValType, \,\,\appt{\ComTerm}{\ValTerm} \,\mathcal{R}_{\ComType}\, \appt{\ComTermT}{\ValTerm}$. 
		\item $\ComTerm \mathcal{R}_{\ProdType{i}{I}{\ComType_i}} \ComTermT \implies \forall j \in I, \,\, \appt{\ComTerm}{j} \,\mathcal{R}_{\ComTerm_j}\, \appt{\ComTermT}{j}$.
		\item $\ComTerm \mathcal{R}_{\ForceType{\ValType}} \ComTermT \implies |\ComTerm| \,\Qrelator{\mathcal{R}_{\ValType}}\, |\ComTermT|$ .
	\end{enumerate}
	\end{multicols}
	The \emph{applicative $\QObser$-similarity} is the largest applicative $\QObser$-simulation, whereas the \emph{applicative $\QObser$-bisimilarity} is the largest \emph{symmetric} applicative $\QObser$-simulation.
\end{definition}

\begin{theorem}\label{theorem:logic_simil}
	If all quantitative modalities from $\QObser$ are leaf-monotone, then the positive behavioural preorder $\PrePos$ is the applicative $\QObser$-similarity.
\end{theorem}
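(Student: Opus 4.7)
The plan is to prove both inclusions. Since the applicative $\QObser$-similarity is by definition the largest applicative $\QObser$-simulation, it suffices to show (i) that $\PrePos$ is an applicative $\QObser$-simulation, and (ii) that every applicative $\QObser$-simulation is contained in $\PrePos$.

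For (i), conditions (1)--(6) of Definition \ref{def:simulation} are just the forward implications of the iff-characterisations of Lemma \ref{lem:pre_char} specialised to $\mathcal{L} = \PosLog$, so they hold automatically. The interesting condition is (7). Suppose $\ComTerm \PrePos \ComTermT$ at type $\ForceType{\ValType}$; we must show $|\ComTerm| \,\Qrelator{\PrePos_\ValType}\, |\ComTermT|$, i.e.\ that for every $q \in \QObser$ and every valuation $D$ on leaves, $|\ComTerm| \in q(D) \fimp |\ComTermT| \in q(\RightSet{\PrePos_\ValType}{D})$. Using Lemma \ref{lemma:rightset_formula} (applied with $\mathcal{L} = \PosLog$), pick a formula $\phi^D \in \PosLog$ such that $\Qmodels{\ValTerm}{\phi^D} = \RightSet{\PrePos_\ValType}{D}(\ValTerm)$. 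By reflexivity of $\PrePos$ we have $D(\ValTerm) \fimp \RightSet{\PrePos_\ValType}{D}(\ValTerm)$ pointwise, so leaf-monotonicity of $q$ yields $|\ComTerm| \in q(D) \fimp \Qmodels{\ComTerm}{\ModalForm{q}{\phi^D}}$; the hypothesis $\ComTerm \PrePos \ComTermT$ then bounds this by $\Qmodels{\ComTermT}{\ModalForm{q}{\phi^D}} = |\ComTermT| \in q(\RightSet{\PrePos_\ValType}{D})$, closing the chain.

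For (ii), let $\mathcal{R}$ be any applicative $\QObser$-simulation. I will prove by structural induction on $\gphi \in \PosLog$ that $\GenTerm \mathcal{R} \GenTermT$ implies $\Qmodels{\GenTerm}{\gphi} \fimp \Qmodels{\GenTermT}{\gphi}$. The basic formulas at each value or computation type ($\EquForm{n}, \ThunkForm{\cphi}, \PairForm{j}{\phi}, \FirForm{\phi}, \SecForm{\phi}, \FunctForm{\ValTerm}{\cphi}, \FunctForm{j}{\cphi}$) unfold through their satisfaction clauses to direct applications of simulation conditions (1)--(6) combined with the induction hypothesis on the sub-formula. The cases $\DisForm{X}, \ConForm{X}, \StabForm{a}, \StepForm{\gphi}{a}$ follow from monotonicity of $\sup$, $\inf$, and the threshold; note that $\NegForm{(-)}$ is absent from $\PosLog$, so we are not required to handle it.

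The main obstacle is the modality case $\ModalForm{q}{\phi}$, which on first reading appears circular. Simulation condition (7) applied with the valuation $D_\phi(\ValTerm) := \Qmodels{\ValTerm}{\phi}$ gives $\Qmodels{\ComTerm}{\ModalForm{q}{\phi}} = |\ComTerm| \in q(D_\phi) \fimp |\ComTermT| \in q(\RightSet{\mathcal{R}_\ValType}{D_\phi})$. The apparent circularity is resolved by invoking the structural induction hypothesis on the strictly smaller formula $\phi$: for every pair $\ValTerm \mathcal{R} \ValTermT$ it gives $\Qmodels{\ValTerm}{\phi} \fimp \Qmodels{\ValTermT}{\phi}$, hence $\RightSet{\mathcal{R}_\ValType}{D_\phi}(\ValTermT) \fimp D_\phi(\ValTermT)$ pointwise. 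A final application of leaf-monotonicity of $q$ collapses $|\ComTermT| \in q(\RightSet{\mathcal{R}_\ValType}{D_\phi})$ down to $|\ComTermT| \in q(D_\phi) = \Qmodels{\ComTermT}{\ModalForm{q}{\phi}}$, completing the induction. This closing use of leaf-monotonicity, together with its role in part (i), is precisely where the hypothesis on $\QObser$ is consumed.
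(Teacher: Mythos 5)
Your proposal is correct and follows essentially the same route as the paper: conditions (1)--(6) via Lemma \ref{lem:pre_char}, condition (7) via Lemma \ref{lemma:rightset_formula} together with reflexivity and leaf-monotonicity, and the reverse inclusion by induction on formulas with the modality case resolved through the relator. Your handling of the modality case is a slightly cleaner write-up of the paper's argument (you bound $\RightSet{\mathcal{R}}{D_\phi}$ by $D_\phi$ directly from the induction hypothesis on $\phi$, rather than first passing to $\Qrelator{\PrePos_\ValType}$), but the substance is identical.
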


\begin{proof}
	Note that $\PrePos$ satisfies the first 6 properties for being a $\QObser$-simulation as a consequence of Lemma \ref{lem:pre_char}.
	We prove the seventh property:
	
	Assume $\ComTerm \PrePos \ComTermT$, $\qmodal \in \QObser$ and $\SetVar : \Terms{\ValType} \to \Quanti$. 
	We use Lemma \ref{lemma:rightset_formula} to find a formula $\phi^D$ such that $\phi^D(\ValTerm) = (\RightSet{\PrePos}{D})(\ValTerm)$.
	By reflexivity of $\PrePos$, we have $D(\ValTerm) \fimp (\RightSet{\PrePos}{D})(\ValTerm)$, so by leaf-monotonicity and $\ComTerm \PrePos \ComTermT$ it holds that: $\denote{\qmodal}(D^*(|\ComTerm|)) \fimp \Qmodels{\ComTerm}{\ModalForm{\qmodal}{\phi^D}} \fimp \Qmodels{\ComTermT}{\ModalForm{\qmodal}{\phi^D}} = \denote{\qmodal}((\RightSet{\PrePos}{D})^*|\ComTermT|)$.
	We can conclude that $|\ComTerm| \Qrelator{\PrePos_{\ValType}} |\ComTermT|$.
	So we proved that $\PrePos$ is a $\QObser$-simulation.
	
	We now need to prove that $\PrePos$ contains any other $\QObser$-simulation $\mathcal{R}$.
	To do that, we show that $\mathcal{R}$ preserves any formula $\gphi$ in the following sense: If $\GenTerm \,\mathcal{R}\, \GenTermT$, then $\Qmodels{\GenTerm}{\gphi} \fimp \Qmodels{\GenTermT}{\gphi}$.
	We do this by induction on formulas, using the fact that any formula is well-founded. 
	Assume $\GenTerm \,\mathcal{R}\, \GenTermT$.
	Suppose $\mathcal{R}$ preserves any formula from $X \subseteq \formulas{\GenTerm}$. Then $\Qmodels{\GenTerm}{\bigvee X} = \sup\{\Qmodels{\GenTerm}{\gphi} \mid \gphi \in X\} \fimp \sup\{\Qmodels{\GenTermT}{\gphi} \mid \gphi \in X\} = \Qmodels{\GenTermT}{\bigvee Y}$ and $\Qmodels{\GenTerm}{\bigwedge X} = \inf\{\Qmodels{\GenTerm}{\gphi} \mid \gphi \in X\} \fimp {\inf\{\Qmodels{\GenTermT}{\gphi} \mid \gphi \in X\}} = \Qmodels{\GenTermT}{\bigwedge Y}$.
	Assume $\mathcal{R}$ preserves formula $\gphi$, and $\Qmodels{\GenTerm}{\StepForm{\gphi}{a}} \neq \False$, then $\Qmodels{\GenTerm}{\gphi} \fpmi a$, so $\Qmodels{\GenTermT}{\gphi} \fpmi a$, hence $\Qmodels{\GenTermT}{\StepForm{\gphi}{a}} = \True$.
	$\mathcal{R}$ obviously preserves constant formulas, since $\Qmodels{\GenTerm}{\StabForm{a}} = a \fimp a = \Qmodels{\GenTermT}{\StabForm{a}}$.
	
	It is not difficult to prove that $\mathcal{R}$ preserves most basic formulas. The only difficult formula to consider is $\ModalForm{q}{\phi} \in \formulas{\ForceType{\ValType}}$. Assume $\ComTerm \mathcal{R} \ComTermT$, so by simulation property $|\ComTerm| \,\Qrelator{\mathcal{R}}\, |\ComTermT|$. By induction hypothesis and relator property 2 in Lemma \ref{lemma:relator}, it holds that $|\ComTerm| \Qrelator{\PrePos_{\ValType}} |\ComTermT|$. 
	Hence $ \Qmodels{\ComTerm}{\ModalForm{\qmodal}{\phi}} \fimp {\denote{\qmodal}((\RightSet{\PrePos}{\ValTerm \mapsto \Qmodels{\ValTerm}{\phi}})^*(|\ComTermT|))}$,
	which since $(\RightSet{\PrePos}{\ValTerm \mapsto \Qmodels{\ValTerm}{\phi}})(\ValTermT) \fimp \Qmodels{\ValTermT}{\phi}$ means with leaf-monotonicity of $\qmodal$ that $\Qmodels{\ComTerm}{\ModalForm{\qmodal}{\phi}} = \denote{\qmodal}(|\ComTerm|[\phi]) \fimp \denote{\qmodal}(|\ComTermT|[\phi]) = \Qmodels{\ComTermT}{\ModalForm{\qmodal}{\phi}}$. We conclude that $\ComTerm \PrePos \ComTermT$. 

	We can conclude that $\PrePos$ is the largest $\QObser$-simulation, hence it is equal to $\QObser$-similarity.
\end{proof}

Note the crucial use of Lemma \ref{lemma:rightset_formula} in the proof, which explains the need of the step-formulas in the logic.

\begin{theorem}\label{theorem:logic_bisim}
	If all quantitative modalities from $\QObser$ are leaf-monotone, then the general behavioural preorder $\PreGen$ is the largest symmetric applicative $\QObser$-simulation, and hence equal to applicative $\QObser$-bisimilarity.
\end{theorem}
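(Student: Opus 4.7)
The plan is to imitate the proof of Theorem \ref{theorem:logic_simil}, adjusting it to $\GenLog$ and making explicit use of the symmetry of $\PreGen$ (Lemma \ref{lemma:symmetric}) at the one place where the positive proof would fail, namely the negation case.

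First I would verify that $\PreGen$ is a symmetric applicative $\QObser$-simulation. Symmetry is exactly Lemma \ref{lemma:symmetric}. Properties (1)--(6) of Definition \ref{def:simulation} hold directly by the characterisation in Lemma \ref{lem:pre_char}, instantiated with $\mathcal{L} = \GenLog$. For property (7), I would repeat the modality argument from the proof of Theorem \ref{theorem:logic_simil} verbatim, but apply Lemma \ref{lemma:rightset_formula} with $\mathcal{L} = \GenLog$ (which is also closed under countable conjunctions) to obtain, for any $D : \Terms{\ValType} \to \Quanti$, a formula $\phi^D \in \GenLog$ with $\Qmodels{\ValTerm}{\phi^D} = (\RightSet{\PreGen}{D})(\ValTerm)$. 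Then, for any $\qmodal \in \QObser$, reflexivity of $\PreGen$, leaf-monotonicity of $\qmodal$ and $\ComTerm \PreGen \ComTermT$ give
$$\denote{\qmodal}(D^*|\ComTerm|) \fimp \Qmodels{\ComTerm}{\ModalForm{\qmodal}{\phi^D}} \fimp \Qmodels{\ComTermT}{\ModalForm{\qmodal}{\phi^D}} = \denote{\qmodal}((\RightSet{\PreGen}{D})^*|\ComTermT|),$$
so $|\ComTerm| \,\Qrelator{\PreGen_{\ValType}}\, |\ComTermT|$.

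Next I would show that any symmetric applicative $\QObser$-simulation $\mathcal{R}$ is contained in $\PreGen$. As in the proof of Theorem \ref{theorem:logic_simil}, I would prove by well-founded induction on $\gphi \in \formulas{\GenType}$ that $\GenTerm \mathcal{R} \GenTermT$ implies $\Qmodels{\GenTerm}{\gphi} \fimp \Qmodels{\GenTermT}{\gphi}$. All the cases from that earlier proof (basic formulas including $\ModalForm{q}{\phi}$, $\bigvee$, $\bigwedge$, $\StepForm{\gphi}{a}$, $\StabForm{a}$) go through unchanged, because $\mathcal{R}$ satisfies the same seven simulation clauses and the logic's inductive machinery is identical.

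The one genuinely new case, and the only place where care is needed, is $\NegForm{\gphi}$. Here $\Qmodels{\GenTerm}{\NegForm{\gphi}} \fimp \Qmodels{\GenTermT}{\NegForm{\gphi}}$ rearranges via the involution axiom $a \fimp b \Leftrightarrow \neg b \fimp \neg a$ to $\Qmodels{\GenTermT}{\gphi} \fimp \Qmodels{\GenTerm}{\gphi}$, i.e.\ the \emph{reverse} direction of the inductive hypothesis. Symmetry of $\mathcal{R}$ gives $\GenTermT \mathcal{R} \GenTerm$, and the inductive hypothesis applied in this direction yields exactly the required inequality. This is the step that fails for non-symmetric simulations and is what forces us to restrict to symmetric ones; it is also the reason why Theorem \ref{theorem:logic_simil} could only identify $\PrePos$ with similarity, not with bisimilarity.

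Combining the two directions, $\PreGen$ is a symmetric $\QObser$-simulation containing every other symmetric $\QObser$-simulation, so it is the largest such relation and hence coincides with applicative $\QObser$-bisimilarity.
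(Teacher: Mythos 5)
Your proposal is correct and follows essentially the same route as the paper: establish symmetry via Lemma \ref{lemma:symmetric}, reuse the proof of Theorem \ref{theorem:logic_simil} to show $\PreGen$ is a $\QObser$-simulation, and show by induction on formulas that any symmetric $\QObser$-simulation is contained in $\PreGen$. The only cosmetic difference is that the paper phrases the inductive invariant as the equality $\Qmodels{\GenTerm}{\gphi} = \Qmodels{\GenTermT}{\gphi}$, whereas you keep the one-sided inequality and invoke symmetry of $\mathcal{R}$ inside the negation case; these are equivalent formulations of the same argument.
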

\begin{proof}
	Firstly, it holds by Lemma \ref{lemma:symmetric} that $\PreGen$ is symmetric.
	Secondly, $\PreGen$ is a $\QObser$-simulation by the same proof as above. Lastly, any symmetric $\QObser$-simulation $\mathcal{R}$ is included in $\PreGen$, using a similar proof as above, proving with induction on formulas $\gphi$ that $\GenTerm \,\mathcal{R}\, \GenTermT$ implies $\Qmodels{\GenTermT}{\gphi} = \Qmodels{\GenTerm}{\gphi}$.
\end{proof}

\subsection{Howe's method}\label{section:Howe}
In this subsection, we briefly outline how the Howe's method \cite{How89,How} can be used to establish compatibility for the open extension of applicative $\QObser$-similarity and $\QObser$-bisimilarity as in \cite{Relational,modal}.
Firstly, we need some properties of the relators in addition to Lemma \ref{lemma:relator}. The proofs are technical and are left out to preserve space.

\begin{lemma}\label{relat:order}
	If all $q \in \QObser$ are leaf-monotone and Scott tree continuous, then the following four properties hold:
	\begin{enumerate}
		\item $t \QObser(\mathcal{R}) r \wedge t' \Tord t \wedge r \Tord r' \implies t' \QObser(\mathcal{R}) r'$.
		\item for any chain of trees $t_0 \Tord t_1 \Tord t_2 \Tord \dots$,
		$\forall n (t_n \QObser(\mathcal{R}) r_n) \Rightarrow (\sqcup_n t_n) \QObser(\mathcal{R}) (\sqcup_n r_n)$.
		\item $\mathcal{R} \subseteq X \times Y$, $\mathcal{S} \subseteq A \times B$ and $(f,g): X \times Y \to A \times B$, $(x \mathcal{R} y \Rightarrow f(x) \mathcal{S} g(y)) \implies (t \QObser(\mathcal{R}) r \Rightarrow f^*(t)  \QObser(\mathcal{S}) g^*(r))$.
		\item $x \mathcal{R} y \implies \eta(x) \Qrelator{\mathcal{R}} \eta(y)$
	\end{enumerate}
\end{lemma}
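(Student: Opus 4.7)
My approach is to unfold the definition of $\QObser(\mathcal{R})$ in each of the four items and reduce everything to three underlying facts: (i) Scott tree continuity plus the chain $t \Tord t \Tord \dots$ already forces $\Tord$-monotonicity of every $\denote{q}$, namely $s \Tord s' \Rightarrow \denote{q}(s) \fimp \denote{q}(s')$; (ii) lifting is functorial, so $(h \circ f)^* = h^* \circ f^*$, and commutes with directed suprema, $h^*(\sqcup_n s_n) = \sqcup_n h^*(s_n)$; and (iii) leaf-monotonicity of $\denote{q}$ propagates any pointwise inequality $g_1(x) \fimp g_2(x)$ on $\Quanti$-valued functions to $\denote{q}(g_1^*(s)) \fimp \denote{q}(g_2^*(s))$. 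The first is an easy corollary of Scott tree continuity applied to the two-element chain $s \Tord s'$; the second is by construction of $h^*$; the third is immediate from leaf-monotonicity since $g_1^*(s) \,\Trees{\fimp}\, g_2^*(s)$ whenever $g_1 \fimp g_2$ pointwise.

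For item 1, I will fix $q \in \QObser$ and $h$, note that $t' \Tord t$ and $r \Tord r'$ lift to $h^*(t') \Tord h^*(t)$ and $(\RightSet{\mathcal{R}}{h})^*(r) \Tord (\RightSet{\mathcal{R}}{h})^*(r')$, and then chain three inequalities: $\denote{q}(h^*(t')) \fimp \denote{q}(h^*(t)) \fimp \denote{q}((\RightSet{\mathcal{R}}{h})^*(r)) \fimp \denote{q}((\RightSet{\mathcal{R}}{h})^*(r'))$, where the middle step uses $t \QObser(\mathcal{R}) r$ and the outer steps use fact (i). For item 2, pick $q, h$ and apply Scott tree continuity on both sides to turn suprema outside of $\denote{q}$ into suprema inside: $\denote{q}(h^*(\sqcup_n t_n)) = \sup_n \denote{q}(h^*(t_n)) \fimp \sup_n \denote{q}((\RightSet{\mathcal{R}}{h})^*(r_n)) = \denote{q}((\RightSet{\mathcal{R}}{h})^*(\sqcup_n r_n))$; this requires knowing the $r_n$ form an ascending chain, which I will either state as part of the hypothesis or derive from item 1 after reindexing.

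For item 3, the key observation is that for $h : A \to \Quanti$ the composite $h \circ f : X \to \Quanti$ is a valuation we can feed into $t \QObser(\mathcal{R}) r$. This gives $\denote{q}((h \circ f)^*(t)) \fimp \denote{q}((\RightSet{\mathcal{R}}{h \circ f})^*(r))$, and the left side is $\denote{q}(h^*(f^*(t)))$ by functoriality. It then suffices to check the pointwise inequality $(\RightSet{\mathcal{R}}{h \circ f})(y) \fimp (\RightSet{\mathcal{S}}{h})(g(y))$, which follows directly from the hypothesis $x \mathcal{R} y \Rightarrow f(x) \mathcal{S} g(y)$ by comparing the defining suprema, and then to promote the inequality under $\denote{q}$ using fact (iii). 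For item 4, I reduce to the base case: $\denote{q}(h^*(\eta(x))) = \denote{q}(\eta(h(x))) = h(x) \fimp (\RightSet{\mathcal{R}}{h})(y) = \denote{q}(\eta((\RightSet{\mathcal{R}}{h})(y))) = \denote{q}((\RightSet{\mathcal{R}}{h})^*(\eta(y)))$, using the leaf rule $\denote{q}_{n+1}(\eta(a)) = a$ from the definition of the modality approximations.

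The only genuine subtlety I expect is in item 2, namely verifying that the suprema $\sqcup_n r_n$ exist and fit into the Scott-continuity argument on the right; if the statement does not tacitly assume $(r_n)$ is a chain, then the cleanest fix is to replace each $r_n$ by its join with the previous terms and invoke item 1 to keep the relator statements true along the way. Everything else is a routine chase through the definitions once facts (i)--(iii) are in place, and no use of negation or of decomposability is needed.
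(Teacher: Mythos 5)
Your proposal matches the paper's proof essentially step for step: items 1 and 2 go through Scott tree continuity (first to get $\Tord$-monotonicity of $\denote{q}$ from a two-element chain, then to commute $\denote{q}$ with directed suprema), item 3 feeds the valuation $h \circ f$ into the hypothesis and compares the defining suprema of $\RightSet{\mathcal{R}}{(h\circ f)}(y)$ and $\RightSet{\mathcal{S}}{h}(g(y))$ before promoting the pointwise inequality under leaf-monotonicity, and your remark that item 2 tacitly needs $(r_n)$ to be a chain is a real (if silent) assumption in the paper's own argument too. The one deviation is item 4, where you invoke $\denote{q}(\eta(a)) = a$: that identity is only stipulated for the example modalities defined via the approximations $\denote{q}_n$, not for an arbitrary leaf-monotone, Scott tree continuous $q$, so you should instead argue as the paper does --- from $x \mathrel{\mathcal{R}} y$ deduce $h(x) \fimp \RightSet{\mathcal{R}}{h}(y)$, hence $\eta(h(x)) \Dord \eta(\RightSet{\mathcal{R}}{h}(y))$ in the leaf order, and conclude by leaf-monotonicity, which is exactly your fact (iii) specialised to single-leaf trees.
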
	

The following lemma necessarily uses the property of decomposability.

\begin{lemma}\label{relat:monad}
	Given a decomposable set $\QObser$ of leaf-monotone modalities, then: $a \QObser(\QObser(\mathcal{R})) b \Rightarrow \mu a \QObser(\mathcal{R}) \mu b$.
\end{lemma}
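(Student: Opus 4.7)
My plan is to reduce the statement to a single invocation of decomposability, which is essentially tailor-made for this kind of claim: decomposability says precisely that $\modbel$ on $TT\Quanti$ is preserved by $\mu$ into $\prebas$ on $T\Quanti$, and the relator $\Qrelator{-}$ is defined in terms of these underlying orderings. So the whole proof boils down to lifting the hypothesis $a \,\Qrelator{\Qrelator{\mathcal{R}}}\, b$ to a $\modbel$-comparison between appropriate double trees of truth values.

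First I would fix arbitrary $q \in \QObser$ and $h: X \to \Quanti$, and aim to show $\mu a \in q(h) \fimp \mu b \in q(\RightSet{\mathcal{R}}{h})$. Using naturality of $\mu$ (the standard monad identity $\mu \circ f^{**} = f^* \circ \mu$), this rewrites as $\denote{q}(\mu(h^{**}(a))) \fimp \denote{q}(\mu((\RightSet{\mathcal{R}}{h})^{**}(b)))$. Hence, once I establish $\mu(h^{**}(a)) \prebas \mu((\RightSet{\mathcal{R}}{h})^{**}(b))$ and apply the definition of $\prebas$ at the identity monotone map, the required inequality falls out. By decomposability, it is enough to prove the lifted relation $h^{**}(a) \modbel (\RightSet{\mathcal{R}}{h})^{**}(b)$ in $TT\Quanti$.

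The heart of the argument is establishing this $\modbel$-relation. I would fix $q' \in \QObser$ and $H \in \QBS$, and instantiate the hypothesis $a \,\Qrelator{\Qrelator{\mathcal{R}}}\, b$ at the modality $q'$ and the composite valuation $H \circ h^* : TX \to \Quanti$. Unfolding definitions, this yields $h^{**}(a) \in q'(H) \fimp b \in q'(\RightSet{\Qrelator{\mathcal{R}}}{H \circ h^*})$, so the remaining step is to bound the right-hand side above by $(\RightSet{\mathcal{R}}{h})^{**}(b) \in q'(H)$. By leaf-monotonicity of $q'$, this reduces to the pointwise bound $\RightSet{\Qrelator{\mathcal{R}}}{H \circ h^*}(t') \fimp H\bigl((\RightSet{\mathcal{R}}{h})^*(t')\bigr)$ for each $t' \in TY$.

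This pointwise bound is the main obstacle, and is where the $\QBS$ hypothesis is essential. For any $t$ with $t \,\Qrelator{\mathcal{R}}\, t'$, Lemma \ref{lem:alt_rel} reformulates the inner relator as $h^*(t) \prebas (\RightSet{\mathcal{R}}{h})^*(t')$; since $H$ is $\prebas$-saturated, this gives $H(h^*(t)) \fimp H((\RightSet{\mathcal{R}}{h})^*(t'))$, and taking the supremum over $t$ delivers the desired inequality. This is the key move: it is what allows the outer instantiation of the relator (at the compound valuation $H \circ h^*$) to be translated back into a $H$-valued statement about the right-set of the inner relator. Once this pointwise bound is in hand, the remaining ingredients — naturality of $\mu$, leaf-monotonicity used to propagate pointwise inequalities through modalities, and a single appeal to decomposability — are bookkeeping, so I anticipate no further difficulty.
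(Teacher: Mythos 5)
Your proposal is correct and follows essentially the same route as the paper's proof: reduce via naturality of $\mu$ and decomposability to showing $h^{**}(a) \modbel (\RightSet{\mathcal{R}}{h})^{**}(b)$, then establish that by instantiating the hypothesis at the compound valuation $H \circ h^*$ and using Lemma \ref{lem:alt_rel} together with $\QBS$-saturation of $H$ to obtain the pointwise bound $\RightSet{\Qrelator{\mathcal{R}}}{H \circ h^*}(t') \fimp H((\RightSet{\mathcal{R}}{h})^*(t'))$, which leaf-monotonicity then propagates through the outer modality. The only cosmetic difference is that you extract the final inequality from $\prebas$ at the identity valuation where the paper cites Lemma \ref{lem:alt_rel}; these are the same step.
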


\begin{proof}
	Assume $a \,\QObser(\QObser(\mathcal{R}))\, b$ and take some $h: X \to \Quanti$. 
	Let $t := h^{**}(a)$ and $r := {(\RightSet{\mathcal{R}}{h})^{**}(b)}$, then $\mu t = h^*(\mu a)$ and  $\mu r = {\RightSet{\mathcal{R}}{h}^*(\mu b)}$.
	We prove that $t \modbel r$ using Lemma \ref{lem:quanmodbel_equivalences}, so we can use decomposability to derive $\mu t \prebas \mu r$, from which we can conclude that $h^*(\mu a) \prebas (\RightSet{\mathcal{R}}{h})^*(\mu b)$, which by Lemma \ref{lem:alt_rel} implies $\mu t \QObser(\mathcal{R}) \mu r$.
	
	Let $q \in \QObser$ and $H \in \QBS$, then $(t \in q(H)) = \denote{q}(H^*(t)) = \denote{q}((H \circ h^*)^*(a)) = (a \in q(H \circ h^*)) \fimp {(b \in q(\RightSet{\QObser(\mathcal{R})}{H \circ h^*}))}$.
	Now $(\RightSet{\QObser(\mathcal{R})}{H \circ h^*})(k) = \sup\{H(h^*(l)) \mid l \QObser(\mathcal{R}) k\} \fimp \sup\{H(h^*(l)) \mid h^*(l) \prebas (\RightSet{\mathcal{R}}{h})^*(k) \}$ by Lemma \ref{lem:alt_rel}.
	Since $H \in \QBS$, the statement is smaller than $H((\RightSet{\mathcal{R}}{h})^*(k))$. 
	Hence using leaf-monotonicity, ${(b \in q(\RightSet{\QObser(\mathcal{R})}{(H \circ h^*)}))} \fimp {(b \in q(H \circ (\RightSet{\mathcal{R}}{h})^*)} = (r \in q(H))$.
	So we can conclude that $(t \in q(H)) \fimp (r \in q(H))$, and we are finished.
\end{proof}

As a consequence of the above lemmas, the following holds.

\begin{corollary}\label{relat:preserve}
	Given that $\QObser$ is a decomposable set of leaf-monotone and Scott tree continuous modalities:
	\begin{enumerate}
		\item $(\forall x,y, x \mathcal{R} y \Rightarrow f(x) \QObser(\mathcal{S}) g(y)) \wedge t \QObser(\mathcal{R}) r \Rightarrow \mu (f^*(t)) \QObser(\mathcal{S}) \mu (g^*(r))$
		\item $(\forall k, u_k \QObser(\mathcal{S}) v_k) \Rightarrow \EfOp(u_0,u_1,...) \QObser(\mathcal{S}) \EfOp(v_0,v_1,...)$
	\end{enumerate}
\end{corollary}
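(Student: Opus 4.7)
Both claims will follow from composing Lemma~\ref{relat:order}(3) with the decomposability result Lemma~\ref{relat:monad}, with part~(2) obtained as a corollary of part~(1) via an identity-relation reduction.

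For part~(1), the plan is to raise the pointwise hypothesis one level through the relator, apply Lemma~\ref{relat:order}(3), and then collapse the resulting double-tree relation back down with decomposability. Concretely, I would apply Lemma~\ref{relat:order}(3) with the role of ``$\mathcal{S}$'' in that lemma taken by $\QObser(\mathcal{S})$: the assumed implication $x \mathcal{R} y \Rightarrow f(x)\,\QObser(\mathcal{S})\,g(y)$ is then exactly the pointwise hypothesis demanded, so together with $t \,\QObser(\mathcal{R})\, r$ the lemma produces $f^*(t) \,\QObser(\QObser(\mathcal{S}))\, g^*(r)$. Feeding this into Lemma~\ref{relat:monad} yields $\mu(f^*(t)) \,\QObser(\mathcal{S})\, \mu(g^*(r))$, which is the required conclusion.

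For part~(2), the idea is to present both $\EfOp(u_0, u_1, \ldots)$ and $\EfOp(v_0, v_1, \ldots)$ as $\mu \circ f^*$ and $\mu \circ g^*$ of a common tree over an index set, and then invoke part~(1). Let $I$ be the index set over which $\EfOp$ branches (finite for arity $\alpha^n \to \alpha$, equal to $\NatType$ for arity $\alpha^{\NatType} \to \alpha$, and finite with the natural-number parameter baked into the root label in the $\NatType \times \alpha^n \to \alpha$ case). Take $\mathcal{R}$ to be the identity relation on $I$, define $f(k) := u_k$ and $g(k) := v_k$, and let $t \in TI$ be the tree whose root is labelled by $\EfOp$ (or $\EfOp_m$) and whose $k$-th leaf is $k$ itself. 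The hypothesis $u_k \,\QObser(\mathcal{S})\, v_k$ translates directly to $f(k) \,\QObser(\mathcal{S})\, g(k)$, and by reflexivity of the identity relation combined with Lemma~\ref{lemma:relator}(1) we have $t \,\QObser(\mathcal{R})\, t$. Part~(1) then delivers $\mu(f^*(t)) \,\QObser(\mathcal{S})\, \mu(g^*(t))$. Unpacking, $f^*(t) \in TTX$ is the tree with root $\EfOp$ whose $k$-th leaf is labelled by the tree $u_k \in TX$, and flattening with $\mu$ produces $\EfOp(u_0, u_1, \ldots) \in TX$; symmetrically for $g$. This is exactly the desired conclusion.

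I do not anticipate a serious obstacle: the heavy lifting (leaf-monotonicity, Scott tree continuity, and especially decomposability) is already packaged into the preceding lemmas. The only real bookkeeping is checking that the identity-indexing construction in part~(2) goes through uniformly for each of the three arity shapes in Definition~\ref{def:tree}, which amounts to choosing the right $I$ and root-label data in each case.
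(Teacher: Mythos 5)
Your proposal is correct and follows essentially the same route as the paper: part~(1) by instantiating Lemma~\ref{relat:order}(3) with $\QObser(\mathcal{S})$ in place of $\mathcal{S}$ to obtain $f^*(t)\,\QObser(\QObser(\mathcal{S}))\,g^*(r)$ and then collapsing with Lemma~\ref{relat:monad}, and part~(2) by applying part~(1) to the one-level tree $\EfOp(0,1,2,\dots)$ with $\mathcal{R}$ the identity on the index set and $f(k)=u_k$, $g(k)=v_k$. Your explicit appeal to reflexivity of the relator to get $t\,\QObser(\mathcal{R})\,t$ is a detail the paper leaves implicit, but it is the right justification.
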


One of the contributions of this paper is identifying the properties on quantitative modalities for which the above relator properties are satisfied, such that we can apply Howe's method.
The application of Howe's method itself is however not novel, and is simply an alteration of the proof used for the call by value case in \cite{Relational,modal} (untyped and simply-typed respectively), using results from \cite{Lassen}.
As such, details of the proof have been omitted.
In short, Howe's method allows us to establish the following theorem.

\begin{theorem}\label{theorem:bisim}
	If $\QObser$ is a decomposable set of leaf-monotone and Scott tree continuous quantitative modalities, then $\QObser$-similarity and $\QObser$-bisimilarity are compatible.
\end{theorem}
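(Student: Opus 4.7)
The plan is to apply Howe's method as in \cite{How,Relational,modal}. Let $\mathcal{S}$ denote the open extension of applicative $\QObser$-similarity. Define the Howe closure $\HowClo{\mathcal{S}}$ as the least open-term relation that contains $\mathcal{S}$ and is closed under all term constructors of Fig.~\ref{fig:typing}, with the absorption clause $\HowClo{\mathcal{S}} \cdot \mathcal{S} \subseteq \HowClo{\mathcal{S}}$ built in. A standard induction on term structure then yields that $\HowClo{\mathcal{S}}$ is reflexive, compatible by construction, and substitutive (stable under simultaneous substitution of $\HowClo{\mathcal{S}}$-related values). Since $\HowClo{\mathcal{S}}$ is compatible and extends $\mathcal{S}$, compatibility of $\mathcal{S}$ reduces to proving $\HowClo{\mathcal{S}} \subseteq \mathcal{S}$; by maximality of $\mathcal{S}$ this reduces further to showing the closed-term restriction of $\HowClo{\mathcal{S}}$ is itself a $\QObser$-simulation in the sense of Definition~\ref{def:simulation}.

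Clauses (1)--(6) of Definition~\ref{def:simulation} are routine: given $\GenTerm \HowClo{\mathcal{S}} \GenTermT$ with $\GenTerm$ of some constructor shape (numeral, thunk, tagged pair, pair, $\lambda$-abstraction, or product-abstraction), unfolding the inductive definition of $\HowClo{\mathcal{S}}$ forces an intermediate $\GenTermD$ of the same shape with $\GenTerm$ compatibly related to $\GenTermD$ and $\GenTermD \mathcal{S} \GenTermT$; invoking the corresponding clause for $\mathcal{S}$ from Lemma~\ref{lem:pre_char} together with substitutivity gives the required relation on the children.

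The main obstacle is clause (7): showing that $\ComTerm \HowClo{\mathcal{S}} \ComTermT$ at some $\ForceType{\ValType}$ implies $|\ComTerm| \,\Qrelator{\HowClo{\mathcal{S}}}\, |\ComTermT|$. The approach is to strengthen this to a statement involving stacks, asserting $|\Stack, \ComTerm|_n \,\Qrelator{\HowClo{\mathcal{S}}}\, |\Stackt, \ComTermT|$ whenever the frames of $\Stack$ and $\Stackt$ are componentwise related by $\HowClo{\mathcal{S}}$, and to argue by induction on $n$. The base case $n = 0$ is immediate because $|\Stack, \ComTerm|_0 = \bot$ and $\denote{q}(\bot) = \False$ is the bottom of $\Quanti$. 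The inductive step proceeds by case analysis on the operational rule fired at $(\Stack, \ComTerm)$: direct reductions from rules~1--7 and the stack steps of $\StackReduct{}{}$ are matched, after unfolding $\HowClo{\mathcal{S}}$ to an intermediate term and appealing to $\mathcal{S}$ on the closed terms produced, by substitutivity of $\HowClo{\mathcal{S}}$; the sequencing step producing a $\mu$-flattened tree is handled by Corollary~\ref{relat:preserve}(1), whose proof in turn invokes Lemma~\ref{relat:monad} and hence uses decomposability in an essential way; the effect-operator cases appeal to Corollary~\ref{relat:preserve}(2) to lift $\Qrelator{\HowClo{\mathcal{S}}}$ through an $\EfOp$-node from the inductive hypotheses on the children. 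Finally, since $|\ComTerm| = \bigsqcup_n |\StackEmpty, \ComTerm|_n$ along $\Tord$, Scott tree continuity together with Lemma~\ref{relat:order}(2) passes $\Qrelator{\HowClo{\mathcal{S}}}$ through the supremum, completing the proof for similarity.

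For $\QObser$-bisimilarity one runs essentially the same construction on the symmetric relation $\mathcal{B}$, forming the Howe closure of $\mathcal{B} \cup \mathcal{B}^{-1}$ or equivalently intersecting $\HowClo{\mathcal{B}}$ with its converse; the simulation argument above then applies in both directions, and the resulting symmetric, compatible relation lies inside $\mathcal{B}$ by maximality, establishing compatibility of $\mathcal{B}$.
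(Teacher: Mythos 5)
Your treatment of the similarity half follows the paper's own argument essentially step for step: the same Howe closure with the absorption clause, the same reduction to showing that the closed restriction of $\HowClo{\mathcal{S}}$ is a $\QObser$-simulation, the same strengthening of clause (7) to stacks whose frames are componentwise $\HowClo{\mathcal{S}}$-related (the paper's ``matching frames''), the same induction on the approximation index with decomposability entering only through the sequencing case (Lemma \ref{relat:monad} via Corollary \ref{relat:preserve}), and the same appeal to Scott tree continuity together with Lemma \ref{relat:order} to pass through the supremum. That part is sound and is the paper's proof.

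The bisimilarity half as you state it does not go through. Since $\mathcal{B}$ is already symmetric, $\mathcal{B}\cup\mathcal{B}^{-1}=\mathcal{B}$, so ``forming the Howe closure of $\mathcal{B}\cup\mathcal{B}^{-1}$'' is just $\HowClo{\mathcal{B}}$ again, and the whole difficulty is precisely that the Howe closure of a symmetric relation is in general \emph{not} symmetric. Your alternative, intersecting $\HowClo{\mathcal{B}}$ with its converse, is not equivalent to that and does not obviously yield a simulation: the relator $\Qrelator{-}$ is monotone but does not commute with intersections, so from $|\ComTerm|\,\Qrelator{\HowClo{\mathcal{B}}}\,|\ComTermT|$ and $|\ComTerm|\,\Qrelator{(\HowClo{\mathcal{B}})^{-1}}\,|\ComTermT|$ you cannot conclude $|\ComTerm|\,\Qrelator{\HowClo{\mathcal{B}}\cap(\HowClo{\mathcal{B}})^{-1}}\,|\ComTermT|$, which is what clause (7) for the intersection demands; nor is it clear that $(\HowClo{\mathcal{B}})^{-1}$ is a simulation at all, since the simulation clauses are directional. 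The repair the paper uses is the \emph{transitive closure trick}: the transitive closure $(\HowClo{\mathcal{B}})^{+}$ is still a compatible simulation (composition of simulations is a simulation by Lemma \ref{lemma:relator}, item 3), and it is symmetric, by an induction showing $(\HowClo{\mathcal{B}})^{-1}\subseteq(\HowClo{\mathcal{B}})^{+}$ from the symmetry of $\mathcal{B}$ and the defining rules of the Howe closure. Maximality then places $(\HowClo{\mathcal{B}})^{+}$ inside $\mathcal{B}$, and compatibility of $\mathcal{B}$ follows since $\ComExt{\mathcal{B}}\subseteq\HowClo{\mathcal{B}}\subseteq(\HowClo{\mathcal{B}})^{+}$.
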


Combining theorems \ref{theorem:logic_simil}, \ref{theorem:logic_bisim}, and \ref{theorem:bisim} we can derive Theorem \ref{main_theorem}, that the general and positive behavioural equivalence/preorder are compatible.

\section{Discussions}\label{section:conclusion}
We have generalised the logic from \cite{modal} to a \emph{quantitative} logic for terms of a call-by-push-value language with general recursion and several (combinations of) algebraic effects.
The quantitative logic is expressive, contains only meaningful behavioural properties, and induces a compatible program equivalence on terms.

In this paper, we consider program properties (or observations) as the primary way of describing program behaviour.
According to this philosophy, the generalisation to quantitative properties is natural.
Alternatively, one could consider relations (or comparisons) as primary, and instead generalise to quantitative relations. The resulting theory is that of metrics, along the lines of \cite{ARNOLD1980,Escardo_metric,MardarePP16}.
Relating the logic from this paper, or a variation thereof, to metrics (e.g. like the ones in \cite{appdis}) is a topic for future research.

The quantitative logic does not however naturally induce a metric on the terms. This is mainly because of the inclusion of step-formulas $\StepForm{\phi}{a}$, which take the quantitative information from $\phi$ and collapses it to a binary value.
These step-formulas are necessary for relating the behavioural equivalence with the applicative bisimilarity.
Their necessity can be seen as a natural consequence of the non-linearity of the language.
E.g., in the case of probability with $\Quanti := [0,\infty]$, the step-formula can be constructed using products of formulas.

The quantitative logic is very expressive, 
allowing one to deal with some awkward combinations of effects that are not amenable to a boolean treatment. 
Despite the many examples of combination of effects, there is no general theory for quantitative modalities of combined effects.
Such a theory is a potential subject for further research.
It would also be interesting to look at other examples of effects which the quantitative logic could describe,
like a the algebraic jump effect described in \cite{jumps},
or some form of concurrency.

The logic and examples from \cite{modal} can be considered as further examples for this paper, where one considers $\Quanti := \{\True,\False\}$. 
The property of Scott openness is the Boolean version of a combination of Scott tree continuity and leaf-monotonicity, and the notion of decomposability is a quantitative generalisation of the notion from \cite{modal} with the same name.
It should be noted, however, that most modalities from \cite{modal} are not sequential.

Along the lines of \cite{modal}, it is possible to define a \emph{pure variation} of the logic. This is a logic independent of the term syntax, using function formulas of the form $\FunctForm{\psi}{\cphi}$ instead of $\FunctForm{\ValTerm}{\cphi}$, where $\Qmodels{\ComTerm}{(\FunctType{\psi}{\cphi})} := \bigwedge\{\Qmodels{\appt{\ComTerm}{\ValTerm}}{\cphi} \mid \Qmodels{\ValTerm}{\psi} = \True\}$.
The logical equivalence of this pure logic will be equal to the behavioural equivalence, if the behavioural equivalence is compatible.

The denotation $\denote{q}: T\Quanti \to \Quanti$ of quantitative modalities are, in the case of the running examples, Eilenberg-Moore algebras. These are algebras $a: TX \to X$ such that $a \circ \eta_X = \textit{id}_X$ and $a \circ Ta = a \circ \mu_X$, the second statement coincides with the property of sequentiality in this paper.
As such, our example modalities potentially fit into the framework of Hasuo \cite{Hasuo:2015}. 
Connections between the two approaches may be explored in the future.

Since the theory has been formulated for call-by-push-value, it is not difficult to extract logics for specific reduction strategies including; call-by-name, call-by-value and lazy PCF \cite{LevyThesis,CBPV}.
The language can also be extended with universal polymorphic and recursive types. 
These extensions of the language are worked out in the author's forthcoming thesis. Further extensions could also be considered in the future.

\paragraph{Acknowledgements:}
I would like to thank Alex Simpson, Francesco Gavazzo and Aliaume Lopez  for all the helpful discussions and comments.

\bibliographystyle{plain}
\bibliography{quan_push_bib}

\appendix

\section{Howe's method proof}
In this appendix we give the proofs for Howe's method outlined in Subsection \ref{section:Howe}, to establish the compatibility of
applicative $\QObser$-(bi)similarity, and hence of the behavioural preorders. We start with the proofs of the relator properties.

\begin{proof}[Proof of Lemma \ref{lemma:relator}]
		1. If $\mathcal{R}$ is reflexive, then $\forall x, \RightSet{\mathcal{R}}{h}(x) = \sup\{h(a) \mid a \mathcal{R} x\} \fpmi h(x)$ since $x \mathcal{R} x$, hence $(h^*(t)) \Dord \RightSet{\mathcal{R}}{h}^*(t)$. So by leaf-monotonicity, for any $\qmodal \in \QObser$, 
		$\denote{\qmodal}(h^*(t)) \fimp \denote{\qmodal}(\RightSet{\mathcal{R}}{h}^*(t))$.
		
		2. If $\mathcal{R} \subseteq \mathcal{S}$, then for any $x$ it holds that $\RightSet{\mathcal{R}}{h}(x) \fimp \RightSet{\mathcal{S}}{h}(x)$. So by leaf-monotonicity, $\denote{\qmodal}(\RightSet{\mathcal{R}}{h}^*(t)) \fimp  \denote{\qmodal}(\RightSet{\mathcal{S}}{h}^*(t))$.
		
		3. $\RightSet{\mathcal{RS}}{h}(z) = \sup\{h(x) \mid \exists x \in X,x \,\mathcal{R}\mathcal{S}\, z\}= \sup\{h(x) \mid \exists x \in X, \exists y \in Y,x \mathcal{R} y,y \mathcal{S}z\} = \RightSet{\mathcal{R}}{\RightSet{\mathcal{S}}{h}}(z)$, hence $\RightSet{\mathcal{RS}}{h}^*(t) = \RightSet{\mathcal{R}}{\RightSet{\mathcal{S}}{h}}^*(t)$.
		
		4. If $t \, \QObser((f \times g)^{-1}\mathcal{R}) \, r$ then for all $h : Z \to \Quanti$, $h^*(f^*(t)) = (h \circ f)^*(t) \Bord (\RightSet{((f \times g)^{-1} \mathcal{R})}{h \circ f})^*(r) \Bord ((\RightSet{\mathcal{R}}{h}) \circ g)^*(r) = (\RightSet{R}{h})(g^*(r))$, so $t \, (f^* \times g^*)^{-1}(\Qrelator{\mathcal{R}}) \, r$.
		
		If $t \, (f^* \times g^*)^{-1}(\Qrelator{\mathcal{R}}) \, r$ then $\forall h : X \to \Quanti$, $h^*(t) \Bord (z \mapsto \sup\{h(x) \mid f(x) = z\})^*(f^*(t)) \Bord (\RightSet{\mathcal{R}}{z \mapsto \sup\{h(x) \mid f(x) = z\}})^*(g^*(r)) = (y \mapsto \sup\{h(x) \mid x \in X, z \,\mathcal{R}\, g(y), f(x) = z\})^*(r) \Bord (\RightSet{((f \times g)^{-1} \mathcal{R})}{h})^*(r)$, hence we conclude that $t \, \QObser((f \times g)^{-1}\mathcal{R}) \, r$.
\end{proof}

\begin{proof}[Proof of Lemma \ref{relat:order}]
	1. Assume $t' \Tord t \QObser(\mathcal{R}) r \Tord r'$ and take $q \in \QObser$ and $h: X \to \Quanti$.
	Since $t' \Tord t$ and $r \Tord r'$, $h^*(t') \Tord h^*(t')$ and $\RightSet{\mathcal{R}}{h}^*(r) \Tord \RightSet{\mathcal{R}}{h}^*(r')$.
	So by Scott tree continuity of $q$ and $t \QObser(\mathcal{R}) r$, it holds that $(t' \in q(h)) \fimp (t \in q(h)) \fimp (r \in q(\RightSet{\mathcal{R}}{h})) \fimp (r' \in q(\RightSet{\mathcal{R}}{h}))$.
	
	2. Take some chain of trees $t_0 \Tord t_1 \Tord t_2 \Tord \dots$ and assume $\forall n, (t_n \QObser(\mathcal{R}) r_n)$. 
	Let $q \in \QObser$ and $h: X \to \Quanti$, then by Scott tree continuity, $(\sqcup_n t_n \in q(h)) = \denote{q}(h^*(\sqcup_n t_n)) = \denote{q}(\sqcup_n h^*(t_n)) = \sup(\denote{q}(h^*(t_n)) \mid n \in \mathbb{N}) = \sup(t_n \in q(h) \mid n \in \mathbb{N}) \fimp\, \sup(r_n \in q(\RightSet{\mathcal{R}}{h}) \mid n \in \mathbb{N})$.
	By the same reasoning in reverse, the last statement is equal to $(\sqcup_n t_n \in q(\RightSet{\mathcal{R}}{h}))$, so we are finished.

	3. Assume $\mathcal{R}, \mathcal{S}, f$ and $g$ as above, and let $t,r$ be such that $t \QObser(\mathcal{R}) r$. 
	Take $h: A \to \Quanti$ and $q \in \QObser$, then $h \circ f: X \to \Quanti$, so $(f^*(t) \in q(h)) = (t \in q(h \circ f)) \fimp (r \in q(\RightSet{\mathcal{R}}{h \circ f})^*(r))$. 
	Now we have that: 
	$(\RightSet{\mathcal{R}}{h \circ f})(y) = \sup\{h(f(x)) \mid x \in X,x \mathcal{R} y\} \fimp \sup\{h(f(x)) \mid x \in X,f(x) \,\mathcal{S}\, g(y)\} \fimp \sup\{h(a) \mid a \in A,a \,\mathcal{S}\, g(y)\} = \RightSet{S}{h}(g(y))$. 
	So  $(\RightSet{\mathcal{R}}{h \circ f})^*(r) \,\Dord\, (\RightSet{S}{h})^*(g^*(r))$ and we use leaf-monotonicity to conclude that $(f^*(t) \in q(h)) \fimp (r \in q(\RightSet{\mathcal{R}}{h \circ f})) \fimp (g^*(r) \in q(\RightSet{S}{h}))$.
	
	4. This follows from the fact that if $x \,\mathcal{R}\, y$ then $\RightSet{\mathcal{R}}{h}(y) \fpmi h(x)$, so we can use leaf-monotonicity. Now for the second property.
\end{proof}	

\begin{proof}[Proof of Corollary \ref{relat:preserve}]
	\begin{enumerate}
		\item Using point (iii) of Lemma \ref{relat:order} on the assumptions we get $f^*(t) \, \mathcal{Q}(\mathcal{Q}(\mathcal{S})) \, g^*(r)$. We can then apply Lemma \ref{relat:monad} to get the correct result.
		\item We apply the previous property to the following data; $t = r = \sigma(0,1,2,\dots) \in T\Nat$, $f(n) = u_n$, $g(n) = v_n$ and $\mathcal{R} = \textit{id}_{\Nat}$. The conclusion follows directly.
	\end{enumerate}
\end{proof}

\subsection{The Howe closure}
For any closed relation $\mathcal{R}$ we can define the open extension $\OpeExt{\mathcal{R}}$ as $\Gamma \vdash \GenTerm \OpeExt{\mathcal{R}} \GenTermT : \GenType \Leftrightarrow \forall \overrightarrow{\ValTerm} : \Gamma, \GenTerm[\overrightarrow{\ValTerm}] \,\mathcal{R}\, \GenTermT[\overrightarrow{\ValTerm}]$. We define two more closure operations.

\begin{definition}\label{def:How}
	Taking an \emph{open} relation $\mathcal{R}$, we define the \emph{compatible refinement} $\ComExt{\mathcal{R}}$ using the derivation rules in Figure \ref{fig:compatible_extension}. For a closed relation $\mathcal{R}$ we define the \emph{Howe closure} $\HowClo{\mathcal{R}}$ as the smallest open relation $\mathcal{S}$ closed under the rules:
	\[
	\frac{\Gamma \vdash \ValTerm \ComExt{\mathcal{S}}_v W \quad \quad \Gamma \vdash W \OpeExt{\mathcal{R}}_v L}{\Gamma \vdash \ValTerm  \mathcal{S}_v L} \textbf{(HV)} \quad \quad \frac{\Gamma \vdash \ComTerm \ComExt{\mathcal{S}}_c \ComTermT \quad \quad \Gamma \vdash \ComTermT \OpeExt{\mathcal{R}}_c K}{\Gamma \vdash \ComTerm  \mathcal{S}_c K} \textbf{(HC)}
	\]
\end{definition}

\begin{figure}
	{\small
		$$
		\derivation{ }
		{\Vjudge{\Context}{\unit \,\ComExt{\mathcal{R}}\, \unit}{\UnitType}}\mathbf{C1}
		\biggap
		\derivation{ }
		{\Vjudge{\Context}{\zero \,\ComExt{\mathcal{R}}\, \zero}{\NatType}}\mathbf{C2}
		\biggap
		\derivation{\Vjudge{\Context}{\ValTerm \,\mathcal{R}\, \ValTerm'}{\NatType}}
		{\Vjudge{\Context}{\succe{\ValTerm} \,\ComExt{\mathcal{R}}\, \succe{\ValTerm'}}{\NatType}}\mathbf{C3}$$
		$$\derivation{\Vjudge{\Context}{\ValTerm \,\mathcal{R}\, \ValTerm'}{\NatType} \biggap \Cjudge{\Context}{\ComTerm \,\mathcal{R}\, \ComTerm'}{\ComType} \biggap 
			\Cjudge{\ContextCat{\Context}{\termvar : \NatType}}{\ComTermT \,\mathcal{R}\, \ComTermT'}{\ComType}}
		{\Cjudge{\Context}{\caseN{\ValTerm}{\ComTerm}{\termvar}{\ComTermT} \,\ComExt{\mathcal{R}}\, \caseN{\ValTerm}{\ComTerm'}{\termvar}{\ComTermT'}}{\ComTypeT}}\mathbf{C4}
		\biggap
		\derivation{ }
		{\Vjudge{\ContextCat{\ContextCat{\Context}{\termvar : \ValType}}{\Context'}}{\termvar \,\ComExt{\mathcal{R}}\, \termvar}{\ValType}}\mathbf{C5}
		$$
		$$
		\derivation{\Vjudge{\Context}{\ValTerm \,\mathcal{R}\, \ValTerm'}{\ValType} \biggap \Cjudge{\ContextCat{\Context}{\termvar : \ValType}}{\ComTerm \,\mathcal{R}\, \ComTerm'}{\ComType}}
		{\Cjudge{\Context}{\valseq{\termvar}{\ValTerm}{\ComTerm} \,\ComExt{\mathcal{R}}\, \valseq{\termvar}{\ValTerm'}{\ComTerm'}}{\ComType}}\mathbf{C6}
		\biggap
		\derivation{\Vjudge{\Context}{\ValTerm \,\mathcal{R}\, \ValTerm'}{\ValType}}
		{\Vjudge{\Context}{\return{\ValTerm} \,\ComExt{\mathcal{R}}\, \return{\ValTerm'}}{\ForceType{\ValType}}}\mathbf{C7}
		$$
		$$
		\derivation{\Cjudge{\Context}{\ComTerm \,\mathcal{R}\, \ComTerm'}{\ForceType{\ValType}} \biggap \Cjudge{\ContextCat{\Context}{\termvar : \ValType}}{\ComTermT \,\mathcal{R}\, \ComTermT'}{\ComType}}
		{\Cjudge{\Context}{\comseq{\ComTerm}{\termvar}{\ComTermT} \,\ComExt{\mathcal{R}}\, \comseq{\ComTerm'}{\termvar}{\ComTermT'}}{\ComType}}\mathbf{C8}
		\biggap
		\derivation{\Cjudge{\Context}{\ComTerm \,\mathcal{R}\, \ComTerm'}{\ComType}}
		{\Vjudge{\Context}{\thunk{\ComTerm} \,\ComExt{\mathcal{R}}\, \thunk{\ComTerm'}}{\ThunkType{\ComType}}}\mathbf{C9}$$
		
		$$\derivation{\Vjudge{\Context}{\ValTerm \,\mathcal{R}\, \ValTerm'}{\ThunkType{\ComType}}}
		{\Cjudge{\Context}{\force{\ValTerm} \,\ComExt{\mathcal{R}}\, \force{\ValTerm'}}{\ComType}}\mathbf{C10}
		\biggap
		\derivation{\Cjudge{\ContextCat{\Context}{\termvar : \ValType}}{\ComTerm \,\mathcal{R}\, \ComTerm'}{\ComType}}
		{\Cjudge{\Context}{\lambt{\termvar}{\ComTerm} \,\ComExt{\mathcal{R}}\, \lambt{\termvar}{\ComTerm'}}{\FunctType{\ValType}{\ComType}}}\mathbf{C11}
		$$
		$$
		\derivation{\Vjudge{\Context}{\ValTerm \,\mathcal{R}\, \ValTerm'}{\ValType} \biggap \Cjudge{\Context}{\ComTerm \,\mathcal{R}\, \ComTerm'}{\FunctType{\ValType}{\ComType}}}
		{\Cjudge{\Context}{\appt{\ComTerm}{\ValTerm} \,\ComExt{\mathcal{R}}\, \appt{\ComTerm'}{\ValTerm'}}{\ComType}}\mathbf{C12}
		\biggap
		\derivation{\Vjudge{\Context}{\ValTerm \,\mathcal{R}\, \ValTerm'}{\ValType_j}}
		{\Vjudge{\Context}{\pair{j}{\ValTerm} \,\ComExt{\mathcal{R}}\, \pair{j}{\ValTerm'}}{\SumType{i}{I}{\ValType_i}}}\mathbf{C13}
		$$
		$$
		\derivation{\Vjudge{\Context}{\ValTerm \,\mathcal{R}\, \ValTerm'}{\SumType{i}{I}{\ValType_i}} \biggap (\Cjudge{\ContextCat{\Context}{\termvar : \ValType_i}}{\ComTerm_i \,\mathcal{R}\, \ComTerm'_i}{\ComType}) \ \ \text{for each} \ \ i \in I}
		{\Cjudge{\Context}{\patmat{\ValTerm}{\{\dots,(i.\termvar)\,.\,\ComTerm_i,\dots\}} \,\ComExt{\mathcal{R}}\, \patmat{\ValTerm'}{\{\dots,(i.\termvar)\,.\,\ComTerm'_i,\dots\}}}{\ComType}}\mathbf{C14}
		$$
		\vspace{2mm}
		$$
		\derivation{\Vjudge{\Context}{\ValTerm \,\mathcal{R}\, \ValTerm'}{\ValType} \biggap \Vjudge{\Context}{\ValTermT \,\mathcal{R}\, \ValTermT'}{\ValTypeT}}
		{\Vjudge{\Context}{\pair{\ValTerm}{\ValTermT} \,\ComExt{\mathcal{R}}\, \pair{\ValTerm'}{\ValTermT'}}{\PairType{\ValType}{\ValTypeT}}}\mathbf{C15}
		\biggap
		\derivation{\Vjudge{\Context}{\ValTerm \,\mathcal{R}\, \ValTerm'}{\PairType{\ValType}{\ValTypeT}} \biggap \Cjudge{\ContextCat{\ContextCat{\Context}{\termvar : \ValType}}{\termvart : \ValTypeT}}{\ComTerm \,\mathcal{R}\, \ComTerm'}{\ComType}}
		{\Cjudge{\Context}{\patmat{\ValTerm}{\pair{\termvar}{\termvart}\,.\,\ComTerm} \,\ComExt{\mathcal{R}}\, \patmat{\ValTerm'}{\pair{\termvar}{\termvart}\,.\,\ComTerm}}{\ComType}}\mathbf{C16}
		$$
		$$
		\derivation{(\Cjudge{\Context}{\ComTerm_i \,\mathcal{R}\, \ComTerm'_i}{\ComType_i}) \ \ \text{for each} \ \ i \in I}
		{\Cjudge{\Context}{\prodlam{\ComTerm_i}{i \in I} \,\ComExt{\mathcal{R}}\, \prodlam{\ComTerm'_i}{i \in I}}{\ProdType{i}{I}{\ComType_i}}}\mathbf{C17}
		\biggap
		\derivation{\Cjudge{\Context}{\ComTerm \,\mathcal{R}\, \ComTerm'}{\ProdType{i}{I}{\ComType_i}}}
		{\Cjudge{\Context}{\appt{\ComTerm}{j} \,\ComExt{\mathcal{R}}\, \appt{\ComTerm'}{j}}{\ComType_j}}\mathbf{C18}
		$$
		$$
		\derivation{\Cjudge{\Context}{\ComTerm \,\mathcal{R}\, \ComTerm'}{\FunctType{\ThunkType{\ComType}}{\ComType}}}
		{\Cjudge{\Context}{\fix{\ComTerm} \,\ComExt{\mathcal{R}}\, \fix{\ComTerm'}}{\ComType}}\mathbf{C19}
		\biggap
		\derivation{\forall 1 \leq i \leq n.(\Cjudge{\Context}
			{\ComTerm_i \, \mathcal{R} \, \ComTerm'_i}{\ComType}) \biggap
			\EfOp : \FunctType{\TypeVar^n}{\TypeVar}}
		{\Cjudge{\Context}{\EfOp (\ComTerm_1,\dots,\ComTerm_n) 				 	
				\, \ComExt{\mathcal{R}} \, 
				\EfOp (\ComTerm'_1,\dots,\ComTerm'_n)}{\ComType}}\mathbf{C20}
		$$
		$$
		\derivation{\Cjudge{\Context, x : \NatType}
			{\ComTerm \, \mathcal{R} \, \ComTerm'}{\ComType} \biggap
			\EfOp : \FunctType{\TypeVar^{\NatType}}{\TypeVar}}
		{\Cjudge{\Context}{\EfOp (x \,.\, \ComTerm) \ComExt{\mathcal{R}} 
				\EfOp (x \,.\, \ComTerm)}{\ComType}}\mathbf{C21}
		$$
		$$
		\derivation{\Vjudge{\Context}{\ValTerm \, \mathcal{R} \, \ValTerm'}{\NatType} \biggap \forall 1 \leq i \leq n.(\Cjudge{\Context}
			{\ComTerm_i \, \mathcal{R} \, \ComTerm'_i}{\ComType}) \biggap 
			\EfOp : \FunctType{\NatType \times \TypeVar^n}{\TypeVar}}
		{\Cjudge{\Context}{\EfOp (\ValTerm, \ComTerm_1,\dots,\ComTerm_n) \, \ComExt{\mathcal{R}} \, \EfOp (\ValTerm', \ComTerm'_1,\dots,\ComTerm'_n)}{\ComType}}\mathbf{C22}
		$$
	}
	\caption{Compatible extension}
	\label{fig:compatible_extension}
\end{figure}

Given this definition, a well-typed relation $\mathcal{R}$ is compatible if and only if $\ComExt{\mathcal{R}} \subseteq \mathcal{R}$. The Howe closure is also the least solution to the equation $\mathcal{S} = \ComExt{\mathcal{S}} \OpeExt{\mathcal{R}}$ and the least solution to the inclusion  $\ComExt{\mathcal{S}} \OpeExt{\mathcal{R}} \subseteq \mathcal{S}$. We look at some preliminary results, mostly from Lassen \cite{Lassen}:

\begin{lemma}\label{Hprop1}
	If $\mathcal{R}$ is reflexive, then:
	\begin{enumerate}
		\item $\HowClo{\mathcal{R}}$ is compatible, hence reflexive.
		\item $\OpeExt{\mathcal{R}} \subseteq \HowClo{\mathcal{R}}$.
		\item If $\,\Context, x:\ValTypeT \vdash \GenTerm \, \HowClo{\mathcal{R}} \GenTermT$ and $\Vjudge{\Context}{\ValTerm, \ValTermT}{\ValTypeT}$ such that $\,\Context \vdash \ValTerm \HowClo{\mathcal{R}} W$, then $\,\Context \vdash \GenTerm[\ValTerm/x] \HowClo{\mathcal{R}} \GenTermT[W/x]$
	\end{enumerate}
\end{lemma}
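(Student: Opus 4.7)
The overall plan is to handle the three parts in order, with part 1 depending on reflexivity of $\HowClo{\mathcal{R}}$ being established first, which in turn is proved by structural induction on terms rather than directly from the inclusion $\OpeExt{\mathcal{R}} \subseteq \HowClo{\mathcal{R}}$. So the first goal is to show that $\GenTerm \HowClo{\mathcal{R}} \GenTerm$ for every well-typed $\GenTerm$, proceeding by induction on the structure of $\GenTerm$. For each term former I would apply the appropriate compatibility rule $\mathbf{C1}$--$\mathbf{C22}$ to the induction hypotheses to derive $\GenTerm \,\ComExt{\HowClo{\mathcal{R}}}\, \GenTerm$, and then combine this with $\GenTerm \,\OpeExt{\mathcal{R}}\, \GenTerm$ (which follows because reflexivity of $\mathcal{R}$ on closed terms lifts to reflexivity of $\OpeExt{\mathcal{R}}$ on open terms via identity-like substitutions) through rule $\textbf{(HV)}$ or $\textbf{(HC)}$. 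Compatibility of $\HowClo{\mathcal{R}}$ then drops out cheaply: from $\GenTerm \,\ComExt{\HowClo{\mathcal{R}}}\, \GenTermT$, pair with the reflexive instance $\GenTermT \,\OpeExt{\mathcal{R}}\, \GenTermT$ and apply $\textbf{(HV)}/\textbf{(HC)}$ to conclude $\GenTerm \HowClo{\mathcal{R}} \GenTermT$.

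For part 2 the argument is symmetric: given $\GenTerm \,\OpeExt{\mathcal{R}}\, \GenTermT$, use reflexivity of $\HowClo{\mathcal{R}}$ (already proved in part 1), together with the structural inductive proof that in fact $\GenTerm \,\ComExt{\HowClo{\mathcal{R}}}\, \GenTerm$ holds, and then invoke $\textbf{(HV)}/\textbf{(HC)}$ one more time to conclude $\GenTerm \HowClo{\mathcal{R}} \GenTermT$.

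For part 3 I would induct on the derivation of $\Context, x : \ValTypeT \vdash \GenTerm \HowClo{\mathcal{R}} \GenTermT$, that is, on the application of rule $\textbf{(HV)}$ or $\textbf{(HC)}$, unfolding the compatible-refinement premise $\GenTerm \,\ComExt{\HowClo{\mathcal{R}}}\, \GenTermD$ together with $\GenTermD \,\OpeExt{\mathcal{R}}\, \GenTermT$. The non-variable cases are routine: after applying the induction hypothesis to each subterm of $\GenTerm$ and $\GenTermD$ one reassembles using the matching compatibility rule, and the premise $\GenTermD \,\OpeExt{\mathcal{R}}\, \GenTermT$ survives substitution because $\OpeExt{(-)}$ is defined via closing substitutions, so $\GenTermD[W/x] \,\OpeExt{\mathcal{R}}\, \GenTermT[W/x]$ holds directly. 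A final $\textbf{(HV)}/\textbf{(HC)}$ then delivers $\GenTerm[\ValTerm/x] \HowClo{\mathcal{R}} \GenTermT[W/x]$.

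The main obstacle, as usual for Howe's method, will be the variable case $\GenTerm = x$. Here $\textbf{(HV)}$ forces the compatible-refinement witness to be $x$ itself via $\mathbf{C5}$, so we land on $x \,\OpeExt{\mathcal{R}}\, \GenTermT$ in context $\Context, x : \ValTypeT$; specialising the defining substitutions gives $W \,\OpeExt{\mathcal{R}}\, \GenTermT[W/x]$ in context $\Context$. What remains is $\ValTerm \HowClo{\mathcal{R}} W$ composed with this $\OpeExt{\mathcal{R}}$-step, which is exactly the closure property $\HowClo{\mathcal{R}} \cdot \OpeExt{\mathcal{R}} \subseteq \HowClo{\mathcal{R}}$. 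This is where transitivity of $\OpeExt{\mathcal{R}}$ (equivalently of $\mathcal{R}$) is normally invoked to concatenate the two $\OpeExt{\mathcal{R}}$-premises after unfolding $\ValTerm \HowClo{\mathcal{R}} W$ through $\textbf{(HV)}$; in the intended applications $\mathcal{R}$ is applicative similarity, which is a preorder, so the concatenation goes through. Subcases for variables other than $x$ are trivial because neither side is affected by the substitution, and the open extension of the derivation simply restricts to context $\Context$.
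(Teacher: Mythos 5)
Your proposal follows the paper's proof essentially step for step. Part 1 is the paper's one-line relational computation $\ComExt{\HowClo{\mathcal{R}}} = \ComExt{\HowClo{\mathcal{R}}}\,\textit{id} \subseteq \ComExt{\HowClo{\mathcal{R}}}\,\OpeExt{\mathcal{R}} = \HowClo{\mathcal{R}}$ unfolded pointwise (you establish reflexivity before compatibility where the paper derives it afterwards, but the content is the same); part 2 is identical; part 3 is the same induction that unfolds one \textbf{(HV)}/\textbf{(HC)} step, applies the induction hypothesis to the premises of the relevant compatibility rule, and reassembles. The one genuine divergence is the variable case of part 3, where you are more careful than the source: the paper lists $\mathbf{C5}$ among the ``trivially true'' base cases, which fails when the variable encountered is $x$ itself, since $\mathbf{C5}$ cannot produce $\ValTerm\;\ComExt{\HowClo{\mathcal{R}}}\;\ValTermT$ from the hypothesis $\ValTerm\;\HowClo{\mathcal{R}}\;\ValTermT$. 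Your repair --- specialise $x\;\OpeExt{\mathcal{R}}\;\GenTermT$ to $\ValTermT\;\OpeExt{\mathcal{R}}\;\GenTermT[\ValTermT/x]$ and absorb it using $\HowClo{\mathcal{R}}\,\OpeExt{\mathcal{R}} \subseteq \HowClo{\mathcal{R}}$ --- is the standard one, and you rightly flag that it needs transitivity of $\mathcal{R}$, a hypothesis absent from the lemma as stated (reflexivity only) but present in every use the paper makes of part 3, where $\mathcal{R}$ is the similarity preorder; the composition you need is exactly Lemma \ref{Hprop2}(1). So your argument is correct under the mild additional assumption that $\mathcal{R}$ is a preorder, and in passing it exposes a small gap in the paper's own handling of the variable case.
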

\begin{proof} We prove the properties separately.
	\begin{enumerate}
		\item Since $\mathcal{R}$ is reflexive, so is $\OpeExt{\mathcal{R}}$. Hence $\HowClo{\ComExt{\mathcal{R}}} = \HowClo{\ComExt{\mathcal{R}}} \textit{id} \subseteq \HowClo{\ComExt{\mathcal{R}}} \OpeExt{\mathcal{R}} = \HowClo{\mathcal{R}}$.
		
		\item Note that the compatible refinement of a reflexive relation is reflexive. Hence $\ComExt{\HowClo{\mathcal{R}}}$ is reflexive, since $\HowClo{\mathcal{R}}$ is. So $\OpeExt{\mathcal{R}} = \textit{id} ~ \OpeExt{\mathcal{R}} \subseteq \ComExt{\HowClo{\mathcal{R}}} \OpeExt{\mathcal{R}} = \HowClo{\mathcal{R}}$.
		
		\item This requires an induction on the shape of $\GenTerm$ (which may be a value or a computation). We leave out $\Context$ in the proof for ease of notation. If $\{x\} \vdash \GenTerm \HowClo{\mathcal{R}} \GenTermT$ then by \textbf{HC} or \textbf{HV} we have $\{x\} \vdash \GenTerm \ComExt{\HowClo{\mathcal{R}}} \GenTermD$ and $\{x\} \vdash \GenTermD \OpeExt{\mathcal{R}} \GenTermT$. So we know $\GenTermD[\ValTermT/x] \OpeExt{\mathcal{R}} \GenTermT[\ValTermT/x]$. We need to prove, $\GenTerm[\ValTerm/x] \ComExt{\HowClo{\mathcal{R}}} \GenTermD[\ValTermT/x]$. In each of the cases of $\GenTerm$, $\{x\} \vdash \GenTerm \ComExt{\HowClo{\mathcal{R}}} \GenTermD$ is derived from rule \textbf{Cn} for some number $n$. This rule has as its premise some sequence of relations $\GenTerm_i \HowClo{\mathcal{R}} \GenTermD_i$. By induction hypothesis we have $\GenTerm_i[\ValTerm/x] \HowClo{\mathcal{R}} \GenTermD_i[\ValTermT/x]$, this is also trivially true in the base cases $n \in \{1,2,5\}$ since then the sequence is empty. Using \textbf{Cn} we can then derive that $\GenTerm[\ValTerm/x] \ComExt{\HowClo{\mathcal{R}}} \GenTermD[\ValTermT/x]$. Hence by \textbf{HV} or \textbf{HC} we get $\GenTerm[\ValTerm/x] \HowClo{\mathcal{R}} \GenTermT[\ValTermT/x]$. One can verify that this argument works for each of the cases of \textbf{Cn}.
	\end{enumerate}
\end{proof}

We can also say something about composing the Howe closure with the original relation, which by Lemma \ref{lemma:relator} works well with our relator.
\begin{lemma}\label{Hprop2}
	If $\mathcal{R}$ is a preorder relation on closed terms, then we have:
	\begin{enumerate}
		\item If $\GenTerm \HowClo{\mathcal{R}} \GenTermT$ and $\GenTermT \OpeExt{\mathcal{R}} \GenTermD$, then $\GenTerm \HowClo{\mathcal{R}} C$.
		\item For closed terms $\ComTerm,\ComTermT,\ComTermD$ of type $\ForceType{\ValType}$ such that $|\ComTerm| \QObser(\HowClo{\mathcal{R}}) |\ComTermT|$ and $|\ComTermT| \QObser(\OpeExt{\mathcal{R}}) |\ComTermD|$ we have $|\ComTerm| \QObser(\HowClo{\mathcal{R}}) |\ComTermD|$.
	\end{enumerate}
\end{lemma}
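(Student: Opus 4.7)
The plan is to prove the two statements in succession, using Part 1 to prove Part 2 via the abstract relator properties from Lemma \ref{lemma:relator}.

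For Part 1, I will exploit the fixed-point characterisation noted immediately after Definition \ref{def:How}: $\HowClo{\mathcal{R}}$ is the least solution of $\mathcal{S} = \ComExt{\mathcal{S}} \, \OpeExt{\mathcal{R}}$, so in particular $\HowClo{\mathcal{R}} \subseteq \ComExt{\HowClo{\mathcal{R}}} \, \OpeExt{\mathcal{R}}$. Thus the hypothesis $\GenTerm \HowClo{\mathcal{R}} \GenTermT$ furnishes an intermediate term $\GenTermD'$ with $\GenTerm \ComExt{\HowClo{\mathcal{R}}} \GenTermD'$ and $\GenTermD' \OpeExt{\mathcal{R}} \GenTermT$. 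Since $\mathcal{R}$ is a preorder on closed terms, its open extension $\OpeExt{\mathcal{R}}$ is a preorder on open terms (reflexivity and transitivity being inherited pointwise through substitution of closing values). Composing $\GenTermD' \OpeExt{\mathcal{R}} \GenTermT$ with the hypothesis $\GenTermT \OpeExt{\mathcal{R}} \GenTermD$ therefore yields $\GenTermD' \OpeExt{\mathcal{R}} \GenTermD$, so that $\GenTerm \ComExt{\HowClo{\mathcal{R}}} \GenTermD' \OpeExt{\mathcal{R}} \GenTermD$, and one final application of rule \textbf{HV} or \textbf{HC} delivers $\GenTerm \HowClo{\mathcal{R}} \GenTermD$.

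For Part 2, I will string together three facts about the quantitative relator. By property (3) of Lemma \ref{lemma:relator}, the relator is subcompositional, so
\[
\QObser(\HowClo{\mathcal{R}}) \, \QObser(\OpeExt{\mathcal{R}}) \;\subseteq\; \QObser\bigl(\HowClo{\mathcal{R}} \cdot \OpeExt{\mathcal{R}}\bigr).
\]
Part 1 gives exactly the relational inclusion $\HowClo{\mathcal{R}} \cdot \OpeExt{\mathcal{R}} \subseteq \HowClo{\mathcal{R}}$ (read on the closed-term restriction, which is what matters for the leaves of $|\ComTerm|, |\ComTermT|, |\ComTermD|$ in $\Trees{\Terms{\ForceType{\ValType}}}$ under the convention of Definition \ref{def:simulation}). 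Hence by monotonicity of the relator, property (2) of Lemma \ref{lemma:relator}, $\QObser(\HowClo{\mathcal{R}} \cdot \OpeExt{\mathcal{R}}) \subseteq \QObser(\HowClo{\mathcal{R}})$. Combining these yields $\QObser(\HowClo{\mathcal{R}}) \, \QObser(\OpeExt{\mathcal{R}}) \subseteq \QObser(\HowClo{\mathcal{R}})$, so that the two hypotheses $|\ComTerm| \QObser(\HowClo{\mathcal{R}}) |\ComTermT|$ and $|\ComTermT| \QObser(\OpeExt{\mathcal{R}}) |\ComTermD|$ immediately compose to the required $|\ComTerm| \QObser(\HowClo{\mathcal{R}}) |\ComTermD|$.

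The only place where something non-routine happens is Part 1, where one must carefully separate the \emph{outer} step (\textbf{HV}/\textbf{HC}) from the \emph{inner} compatible-refinement step. This is not hard, but one needs transitivity of $\OpeExt{\mathcal{R}}$, which is why the statement assumes that $\mathcal{R}$ is a preorder. Once Part 1 is in hand, Part 2 is essentially an algebraic calculation in the lattice of relations using the closure properties of $\QObser(-)$.
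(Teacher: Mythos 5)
Your proof is correct and follows essentially the same route as the paper: Part 1 uses the fixed-point equation $\HowClo{\mathcal{R}} = \ComExt{\HowClo{\mathcal{R}}}\,\OpeExt{\mathcal{R}}$ together with transitivity of $\OpeExt{\mathcal{R}}$ (the paper does this as a one-line relational calculation, you unfold it pointwise), and Part 2 combines Part 1 with the relator's closure properties. If anything, your Part 2 is slightly more careful than the paper's, which cites only monotonicity (property 2 of Lemma \ref{lemma:relator}) whereas, as you note, the composition property 3 is also needed to reduce $\QObser(\HowClo{\mathcal{R}})\,\QObser(\OpeExt{\mathcal{R}})$ to $\QObser(\HowClo{\mathcal{R}}\cdot\OpeExt{\mathcal{R}})$.
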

\begin{proof} We proof the properties individually. 
	\begin{enumerate}
		\item We use that $\mathcal{R}$ is transitive, hence $\OpeExt{\mathcal{R}}$ is transitive meaning $\OpeExt{\mathcal{R}} \OpeExt{\mathcal{R}} \subseteq \OpeExt{\mathcal{R}}$. Hence with $\HowClo{\mathcal{R}} = \HowClo{\ComExt{\mathcal{R}}} \OpeExt{\mathcal{R}}$ we have $\HowClo{\mathcal{R}} \OpeExt{\mathcal{R}} = (\HowClo{\ComExt{\mathcal{R}}} \OpeExt{\mathcal{R}}) \OpeExt{\mathcal{R}} \subseteq \HowClo{\ComExt{\mathcal{R}}} \OpeExt{\mathcal{R}} = \HowClo{\mathcal{R}}$.
		
		\item This follows from applying property 2 of Lemma \ref{lemma:relator} to the previous statement.
	\end{enumerate}
\end{proof}

\subsection{The Howe closure of an applicative $\QObser$-simulation}\label{section:closure}
We look at the Howe closure of a $\QObser$-simulation preorder $\arbsim$. We assume that $\QObser$ is a decomposable set of leaf-monotone and Scott tree continuous modalities. The lemmas proven in the previous two subsections are satisfied, hence we know that $(\arbsim) \subseteq \HowClo{\arbsim}$ by Lemma \ref{Hprop1}. We prove that $\HowClo{\arbsim}$ is a $\QObser$-simulation by explicitly checking the seven conditions from Definition \ref{def:simulation}.

\begin{lemma}\label{Nequ}
	If for $\ValTerm, \ValTermT:\NatType$ we have $\ValTerm \HowClo{\arbsim} \ValTermT$, then $\ValTerm = \ValTermT$.
\end{lemma}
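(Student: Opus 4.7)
The plan is to prove this by induction on the structure of the closed value $\ValTerm : \NatType$, which must be a numeral: either $\ValTerm = \zero$ or $\ValTerm = \succe{\ValTerm'}$ for some strictly smaller numeral $\ValTerm' : \NatType$. First I would unfold $\ValTerm \HowClo{\arbsim} \ValTermT$ using rule \textbf{HV} from Definition \ref{def:How}: there exists a closed value $\ValTermD : \NatType$ such that $\ValTerm \ComExt{\HowClo{\arbsim}} \ValTermD$ and $\ValTermD \OpeExt{\arbsim} \ValTermT$. Since $\ValTermT$ is already closed, the last relation is just $\ValTermD \arbsim \ValTermT$.

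Next I would inspect Fig.\,\ref{fig:compatible_extension} to see which rules can derive $\ValTerm \ComExt{\HowClo{\arbsim}} \ValTermD$ given the shape of $\ValTerm$. If $\ValTerm = \zero$, only rule \textbf{C2} applies, forcing $\ValTermD = \zero$; then $\zero \arbsim \ValTermT$ and simulation property~(1) of Definition \ref{def:simulation} yields $\ValTermT = \zero = \ValTerm$. If $\ValTerm = \succe{\ValTerm'}$, only rule \textbf{C3} applies, forcing $\ValTermD = \succe{\ValTermD'}$ for some $\ValTermD'$ with $\ValTerm' \HowClo{\arbsim} \ValTermD'$. The induction hypothesis applied to the subterm gives $\ValTerm' = \ValTermD'$, so $\ValTermD = \ValTerm$, and from $\ValTerm \arbsim \ValTermT$ simulation property~(1) again gives $\ValTerm = \ValTermT$.

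There is no real obstacle here; the only thing to verify carefully is the claim that in the compatible refinement no other rule from \textbf{C1}--\textbf{C22} can conclude $\zero \ComExt{\HowClo{\arbsim}} \ValTermD$ or $\succe{\ValTerm'} \ComExt{\HowClo{\arbsim}} \ValTermD$, which follows by straightforward syntactic inspection (each \textbf{Cn} rule fixes the top-level constructor and type on both sides). Once that is observed, the proof reduces to the induction sketched above together with a single appeal to the defining clause~(1) of a $\QObser$-simulation.
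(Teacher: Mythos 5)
Your proposal is correct and follows essentially the same route as the paper: unfold the Howe closure via \textbf{HV} into a compatible-refinement step followed by a $\arbsim$ step, determine the refinement rule from the shape of $\ValTerm$, apply simulation property (1), and recurse on the successor case by structural induction. (Incidentally, your rule labels \textbf{C2}/\textbf{C3} are the correct ones from Fig.\,\ref{fig:compatible_extension}; the paper's proof text cites \textbf{C3}/\textbf{C4} by an off-by-one slip.)
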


\begin{proof}
	Using the inductive definition of $\HowClo{\arbsim}$ there must be an $L:\NatType$ such that $\ValTerm \hat{\HowClo{\arbsim}} L$ and $L \arbsim \ValTermT$, the latter meaning $L = \ValTermT$ because of the simulation property. The fact that $\ValTerm \hat{\HowClo{\arbsim}} L$ must have come as a conclusion from either \textbf{C3} or \textbf{C4}. In the first case, $\ValTerm = \zero= L$ and hence $\ValTerm = L = \ValTermT$. In the second case, $\ValTerm = \succe{\ValTerm'}$ and $L = \succe{L'}$ with $\ValTerm' \HowClo{\arbsim} L'$, and the proof has been reduced to showing $\ValTerm' = L'$, since then $\ValTerm = \succe{\ValTerm'} = \succe{L'} = L = \ValTermT$. We do induction on the structure of $\ValTerm$, which cannot go on forever since $\ValTerm$ is a syntactically finite term. So eventually we get to $\zero$ and we can make a conclusion of the form $\ValTerm = n\succe{\ValTerm''} = n\succe{\zero} = n\succe{L''} = L = \ValTermT$ for some $n \in \NatType$. That concludes the proof.
\end{proof}

The following lemma is evident from the compatibility properties.
\begin{lemma}\label{lemma:How_simul_var}
	By compatibility of $\HowClo{\arbsim}$ it holds that:
	\begin{enumerate}
		\item For $\ValTerm \HowClo{\arbsim}_{\ThunkType{\ComType}} \ValTermT$ we have $\force{\ValTerm} \HowClo{\arbsim}_{\ComType} \force{\ValTermT}$.
		\item For $\ComTerm \HowClo{\arbsim}_{\FunctType{\ValType}{\ComType}} \ComTermT$ we have $\forall \ValTerm \in \Terms{\ValType}, \appt{\ComTerm}{\ValTerm} \HowClo{\arbsim}_{\ComType} \appt{\ComTermT}{\ValTerm}$.
		\item For $\ComTerm \HowClo{\arbsim}_{\ProdType{i}{I}{\ComType_i}} \ComTermT$ we have $\forall j \in I, \appt{\ComTerm}{j} \HowClo{\arbsim}_{\ComType_j} \appt{\ComTermT}{j}$.
	\end{enumerate}
\end{lemma}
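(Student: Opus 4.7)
The plan is to derive each of the three implications directly from the compatibility of $\HowClo{\arbsim}$ together with the appropriate rule from Figure \ref{fig:compatible_extension}. Recall that a relation $\mathcal{R}$ being compatible means $\ComExt{\mathcal{R}} \subseteq \mathcal{R}$, so to conclude $\GenTerm \HowClo{\arbsim} \GenTermT$ for compound terms, it suffices to exhibit $\GenTerm \ComExt{\HowClo{\arbsim}} \GenTermT$ via the corresponding derivation rule. Compatibility of $\HowClo{\arbsim}$ is supplied by Lemma \ref{Hprop1}(1), which applies since $\arbsim$, being a $\QObser$-similarity preorder, is reflexive.

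For item (1), from $\ValTerm \HowClo{\arbsim}_{\ThunkType{\ComType}} \ValTermT$ rule \textbf{C10} yields $\force{\ValTerm} \ComExt{\HowClo{\arbsim}} \force{\ValTermT}$, so compatibility gives $\force{\ValTerm} \HowClo{\arbsim}_{\ComType} \force{\ValTermT}$. For item (3), from $\ComTerm \HowClo{\arbsim}_{\ProdType{i}{I}{\ComType_i}} \ComTermT$ and any $j \in I$, rule \textbf{C18} yields $\appt{\ComTerm}{j} \ComExt{\HowClo{\arbsim}} \appt{\ComTermT}{j}$, and again compatibility gives the conclusion.

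For item (2), the argument needs one extra ingredient: rule \textbf{C12} requires both $\ComTerm \HowClo{\arbsim} \ComTermT$ (the hypothesis) and $\ValTerm \HowClo{\arbsim} \ValTerm$. The latter is reflexivity of $\HowClo{\arbsim}$, which follows from Lemma \ref{Hprop1}(1) (since $\HowClo{\arbsim}$ is compatible, hence reflexive). With both premises in hand, \textbf{C12} gives $\appt{\ComTerm}{\ValTerm} \ComExt{\HowClo{\arbsim}} \appt{\ComTermT}{\ValTerm}$, and compatibility concludes $\appt{\ComTerm}{\ValTerm} \HowClo{\arbsim}_{\ComType} \appt{\ComTermT}{\ValTerm}$.

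There is no real obstacle here: the proof is a triple application of the scheme ``use a C-rule to move into $\ComExt{\HowClo{\arbsim}}$, then invoke compatibility to return to $\HowClo{\arbsim}$.'' The only subtlety is noticing that case (2) silently uses reflexivity of $\HowClo{\arbsim}$ on the argument $\ValTerm$, which is why the lemma is stated as a consequence of compatibility rather than of the defining inclusion $\ComExt{\HowClo{\arbsim}} \OpeExt{\arbsim} \subseteq \HowClo{\arbsim}$ alone.
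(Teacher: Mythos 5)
Your proof is correct and matches the paper's intent exactly: the paper omits the argument as ``evident from the compatibility properties,'' and your spelling-out via rules \textbf{C10}, \textbf{C12}, \textbf{C18} followed by the inclusion $\ComExt{\HowClo{\arbsim}} \subseteq \HowClo{\arbsim}$ is precisely that argument. Your observation that case (2) additionally needs reflexivity of $\HowClo{\arbsim}$ on the argument $\ValTerm$ (supplied by Lemma \ref{Hprop1}) is a correct and worthwhile point to make explicit.
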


We can easily prove two more simulation properties.
\begin{lemma}\label{lemma:How_simul_sum}
	If $\pair{j}{\ValTerm} \HowClo{\arbsim}_{\SumType{i}{I}{\ValType_i}} \pair{k}{\ValTermT}$ then $j=k$ and $\ValTerm \HowClo{\arbsim}_{\ValType_j} \ValTermT$.
\end{lemma}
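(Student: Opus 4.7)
My plan is to mirror the structure of Lemma \ref{Nequ}, unfolding one layer of the Howe closure using rule \textbf{HV} and then using the fact that $\arbsim$ is itself a $\QObser$-simulation.

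First, since $\pair{j}{\ValTerm}$ and $\pair{k}{\ValTermT}$ are closed value terms related by $\HowClo{\arbsim}$, the rule \textbf{HV} (which is the unique way to produce a value-typed entry in $\HowClo{\arbsim}$) yields an intermediate closed value $L : \SumType{i}{I}{\ValType_i}$ with
\[
\pair{j}{\ValTerm}\;\ComExt{\HowClo{\arbsim}}\;L \quad\text{and}\quad L\;\arbsim\;\pair{k}{\ValTermT}.
\]
Next, I would observe that the only rule in Fig.\,\ref{fig:compatible_extension} whose conclusion is of the form $\pair{j}{\ValTerm}\,\ComExt{(-)}\,L$ is \textbf{C13}. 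Hence $L$ must itself be of the form $\pair{j}{\ValTerm'}$ for some $\ValTerm' : \ValType_j$ with $\ValTerm\;\HowClo{\arbsim}_{\ValType_j}\;\ValTerm'$.

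Now I apply the third simulation clause of Definition \ref{def:simulation} to the relation $\pair{j}{\ValTerm'}\;\arbsim\;\pair{k}{\ValTermT}$, which gives $j = k$ and $\ValTerm'\;\arbsim_{\ValType_j}\;\ValTermT$. Finally, to chain $\ValTerm\;\HowClo{\arbsim}\;\ValTerm'$ with $\ValTerm'\;\arbsim\;\ValTermT$ into $\ValTerm\;\HowClo{\arbsim}\;\ValTermT$, I invoke Lemma \ref{Hprop2}.1 (recalling that $\arbsim$ is assumed to be a preorder, and on closed terms $\arbsim$ coincides with $\OpeExt{\arbsim}$).

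There is no real obstacle here: the argument is a straightforward case analysis on which compatible-refinement rule could have produced the pair, combined with the simulation property of $\arbsim$ and the composition lemma for the Howe closure. The only subtle point to mention explicitly is that rule \textbf{C13} is indeed the sole applicable compatible-refinement rule, so that we can insist $L = \pair{j}{\ValTerm'}$ with matching index $j$; this parallels the inductive case analysis that was used in Lemma \ref{Nequ}.
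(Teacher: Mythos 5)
Your proposal is correct and follows essentially the same route as the paper's proof: unfold one layer of the Howe closure via \textbf{HV}, note that \textbf{C13} is the only compatible-refinement rule that can produce a sum-tagged pair (forcing the intermediate term to carry the same tag $j$), apply the simulation property of $\arbsim$ at sum type, and compose via Lemma \ref{Hprop2}. The only difference is the order in which you invoke \textbf{C13} and the simulation clause, which is immaterial.
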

\begin{proof}
	There is a pair $\pair{l}{\ValTermD}$ such that $\pair{j}{\ValTerm} \ComExt{\HowClo{\arbsim}_{\SumType{i}{I}{\ValType_i}}} \pair{l}{\ValTermD} \OpeExt{\arbsim} \pair{k}{\ValTermT}$. The latter implies $l=k$ and $\ValTermD \arbsim \ValTermT$ by simulation property. The former statement can only have come from compatible extension rule $\mathbf{C13}$, so $j=l$ and $\ValTerm \HowClo{\arbsim} \ValTermD$. We can now use Lemma \ref{Hprop2} to conclude that $\ValTerm \HowClo{\arbsim} \ValTermT$.
\end{proof}

\begin{lemma}\label{lemma:How_simul_pair}
	If $\pair{\ValTerm}{\ValTerm'} \HowClo{\arbsim}_{\PairType{\ValType}{\ValTypeT}} \pair{\ValTermT}{\ValTermT'}$ then $\ValTerm \HowClo{\arbsim}_{\ValType} \ValTermT$ and $\ValTerm' \HowClo{\arbsim}_{\ValTypeT} \ValTermT'$.
\end{lemma}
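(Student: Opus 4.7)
The plan is to mirror the proof of Lemma \ref{lemma:How_simul_sum} exactly, since the argument for product types is structurally identical to the one for sum types: we unfold the Howe closure at a pair, inspect which compatible refinement rule could have produced it, and then use the simulation property of $\arbsim$ together with Lemma \ref{Hprop2} to stitch the pieces back together.

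First, using the inductive definition $\HowClo{\arbsim} = \ComExt{\HowClo{\arbsim}} \OpeExt{\arbsim}$, I would find a closed value $\ValTermD$ of type $\PairType{\ValType}{\ValTypeT}$ such that $\pair{\ValTerm}{\ValTerm'} \ComExt{\HowClo{\arbsim}} \ValTermD$ and $\ValTermD \,\OpeExt{\arbsim}\, \pair{\ValTermT}{\ValTermT'}$. Inspecting the rules in Figure \ref{fig:compatible_extension}, the only rule whose conclusion matches a pair on the left is $\mathbf{C15}$, so $\ValTermD$ must itself be a pair, say $\ValTermD = \pair{\ValTermD_1}{\ValTermD_2}$, with premises $\ValTerm \HowClo{\arbsim}_{\ValType} \ValTermD_1$ and $\ValTerm' \HowClo{\arbsim}_{\ValTypeT} \ValTermD_2$.

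Next, from $\pair{\ValTermD_1}{\ValTermD_2} \arbsim \pair{\ValTermT}{\ValTermT'}$ on closed terms I would apply the fourth clause of Definition \ref{def:simulation} (the simulation property for product types) to conclude $\ValTermD_1 \arbsim \ValTermT$ and $\ValTermD_2 \arbsim \ValTermT'$. Finally, chaining these with the previous step gives $\ValTerm \HowClo{\arbsim} \ValTermD_1 \,\OpeExt{\arbsim}\, \ValTermT$ and $\ValTerm' \HowClo{\arbsim} \ValTermD_2 \,\OpeExt{\arbsim}\, \ValTermT'$, and invoking Lemma \ref{Hprop2}(1) (which absorbs an $\OpeExt{\arbsim}$ step on the right into $\HowClo{\arbsim}$) yields the desired $\ValTerm \HowClo{\arbsim}_{\ValType} \ValTermT$ and $\ValTerm' \HowClo{\arbsim}_{\ValTypeT} \ValTermT'$.

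There is no real obstacle here; the proof is a routine case analysis on the one compatible-refinement rule applicable to pair values, and it relies on no properties of the modalities beyond those already assumed of a $\QObser$-simulation. The only subtlety worth double-checking is that $\arbsim$ is taken to be a preorder so that Lemma \ref{Hprop2} applies, which is the same standing assumption used throughout Subsection \ref{section:closure}.
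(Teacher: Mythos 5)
Your proposal matches the paper's own proof essentially step for step: unfold the Howe closure as $\ComExt{\HowClo{\arbsim}}$ followed by $\OpeExt{\arbsim}$, observe that only rule $\mathbf{C15}$ can produce a pair under compatible refinement, apply the simulation property for $\PairType{\ValType}{\ValTypeT}$ to the $\arbsim$ step, and absorb the trailing $\OpeExt{\arbsim}$ via Lemma \ref{Hprop2}. The argument is correct and no different in substance from the paper's.
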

\begin{proof}
	There is a pair $\pair{\ValTermD}{\ValTermD'}$ such that $\pair{\ValTerm}{\ValTerm'} \ComExt{\HowClo{\arbsim}_{\PairType{\ValType}{\ValTypeT}}} \pair{\ValTermD}{\ValTermD'} \OpeExt{\arbsim} \pair{\ValTermT}{\ValTermT'}$. The latter implies $\ValTermD \arbsim \ValTermT$ and $\ValTermD' \arbsim \ValTermT'$ by simulation property. The former statement can only have come from compatible extension rule $\mathbf{C15}$, so $\ValTerm \HowClo{\arbsim} \ValTermD$ and $\ValTerm' \HowClo{\arbsim} \ValTermD'$. We can now use Lemma \ref{Hprop2} to conclude that $\ValTerm \HowClo{\arbsim} \ValTermT$ and $\ValTerm' \HowClo{\arbsim} \ValTermT'$.
\end{proof}

So all conditions except 6 of being a $\QObser$-simulation are satisfied. Condition 6. is the most difficult to prove and requires an induction on the reduction relation of terms.

It requires us to look at terms $\GenTerm$, $\GenTermT$ of type $\
\ForceType{\ValType}$ such that $\GenTerm \HowClo{\arbsim} \GenTermT$, and prove that $|\GenTerm| \Qrelator{\HowClo{\arbsim}} |\GenTermT|$. Using Lemma \ref{relat:order}, this can be reduced to asking $|\GenTerm|_n \Qrelator{\HowClo{\arbsim}} |\GenTermT|$ for all $n$. This allows us to do an induction on the denotation map $|\GenTerm|$. In general, one would look at the shape of $\GenTerm$ and see what it reduces to after one step, so one can use the induction hypothesis. This is a relatively straightforward investigation in the fine-grained call by value case.

For call-by-push-value, we have the problem that effects may occur in any computation type, which is particularly problematic when considering non-producer type.  Concretely, it may be that our $\GenTerm:\ForceType{\ValType}$ is of the form $\GenTerm = \appt{\GenTerm'}{\ValTerm}$ where $\GenTerm':\ValTypeT \to F\ValType$ and $\ValTerm : \ValTypeT$. Now, because $\GenTerm'$ is of a computation type, it may not be of the form $\lambt{x}{\ComTerm}$. This is problematic, as we still do not have any clue to what $\GenTerm$ might reduce to. To investigate that, we would require another case analysis on $\GenTerm'$, which results in a bureaucratic nightmare. We can say that the application case is \emph{uninformative}, and we need to continue doing case analysis until we find a term that is not of the form of an application, which we call \emph{informative}. 
\begin{definition}
	A stack $\Stack$ is called a \emph{frame} if it does not contain any $\comseq{(-)}{\termvar}{\ComTerm}$ parts.
	A computation term is \emph{uninformative} if it is of the form $\Stack\{\ComTerm\}$ where $\Stack$ is a non-empty frame. Else it is called \emph{informative}.
\end{definition}

\begin{lemma}\label{lemma:simul_stack}
	For any frame $\Stack$, if $\ComTerm \arbsim \ComTermT$ then $\Stack\{\ComTerm\} \arbsim \Stack\{\ComTermT\}$.
\end{lemma}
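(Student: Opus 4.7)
The plan is to proceed by induction on the structure of the frame $\Stack$, using the fact that a frame, being free of sequencing components, is built from the empty stack by appending argument or projection continuations only. At each inductive step the relevant applicative simulation clause (property 5 or property 6 of Definition \ref{def:simulation}) supplies exactly the one-step preservation we need; the induction hypothesis then handles the remaining outer context.

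\textbf{Base case.} If $\Stack = \StackEmpty$, then $\Stack\{\ComTerm\} = \ComTerm$ and $\Stack\{\ComTermT\} = \ComTermT$, so the hypothesis $\ComTerm \arbsim \ComTermT$ directly yields the conclusion.

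\textbf{Inductive step.} Suppose $\Stack = \Stack' \circ \ValTerm$. For the stack to be well formed and applicable to $\ComTerm$, the type of $\ComTerm$ must be a function type $\FunctType{\ValType}{\ComType}$ with $\ValTerm : \ValType$, and by hypothesis $\ComTermT$ has the same type. By simulation property 5 we obtain $\appt{\ComTerm}{\ValTerm} \arbsim \appt{\ComTermT}{\ValTerm}$. Since $\Stack'$ is itself a frame (being $\Stack$ with one layer peeled off), the induction hypothesis applied to $\Stack'$ gives $\Stack'\{\appt{\ComTerm}{\ValTerm}\} \arbsim \Stack'\{\appt{\ComTermT}{\ValTerm}\}$. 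Unfolding the definition of $(\Stack' \circ \ValTerm)\{-\}$ this is exactly $\Stack\{\ComTerm\} \arbsim \Stack\{\ComTermT\}$.

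The case $\Stack = \Stack' \circ j$ is analogous: here $\ComTerm$ and $\ComTermT$ have a product type $\ProdType{i}{I}{\ComType_i}$ with $j \in I$, and simulation property 6 gives $\appt{\ComTerm}{j} \arbsim \appt{\ComTermT}{j}$, after which the induction hypothesis finishes the argument. Note that the clause $\Stack' \circ \comseq{(-)}{x}{\ComTermD}$ never arises, because by assumption $\Stack$ contains no sequencing components; this is the key reason the argument goes through using only the two straightforward simulation clauses, rather than requiring anything about the modalities or the tree semantics. There is no real obstacle: the whole proof is a structural induction on frames, and the framehood assumption is precisely what rules out the one case (sequencing) that would otherwise demand appeal to the relator.
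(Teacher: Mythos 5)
Your proof is correct: the paper states Lemma \ref{lemma:simul_stack} without giving a proof, and your structural induction on the frame, peeling off each $\Stack' \circ \ValTerm$ or $\Stack' \circ j$ layer via clauses 5 and 6 of Definition \ref{def:simulation} and then invoking the induction hypothesis on $\Stack'$, is exactly the evident argument the authors are relying on. Your closing observation -- that framehood rules out the sequencing clause, which is the only case that would require the relator or the tree semantics -- correctly identifies why the induction closes with nothing more than the two applicative simulation clauses.
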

Doing structural induction on $\GenTerm$, we observe the following result.
\begin{lemma}\label{case_frame}
	Any computation term $\GenTerm$ is of the form $\Stack\{\GenTerm'\}$ where $\Stack$ is a frame and $\GenTerm'$ is an informative term.
\end{lemma}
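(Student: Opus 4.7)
The plan is a straightforward structural induction on $\GenTerm$, with the main work being the careful construction of the frame in the application case. First observe what the hypothesis classifies: since a non-empty frame is built only using $(-)\circ\ValTerm$ and $(-)\circ j$ constructors, unfolding the definitions shows that $\Stack\{\ComTerm\}$ for a non-empty frame $\Stack$ is always of the outer shape $\appt{\_}{\_}$. Hence a term is informative precisely when it is not of the form $\appt{\ComTerm}{\ValTerm}$ or $\appt{\ComTerm}{j}$. The base case of the induction is then immediate: if $\GenTerm$ is not a top-level application, take $\Stack := \StackEmpty$ and $\GenTerm' := \GenTerm$.

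For the inductive step, suppose $\GenTerm = \appt{\ComTerm}{\ValTerm}$ (the case $\appt{\ComTerm}{j}$ is symmetric). The induction hypothesis applied to $\ComTerm$ yields a frame $\Stack_0$ and an informative $\GenTerm'$ with $\ComTerm = \Stack_0\{\GenTerm'\}$, so $\GenTerm = \appt{\Stack_0\{\GenTerm'\}}{\ValTerm}$. One might hope to simply take $\Stack_0 \circ \ValTerm$, but unfolding gives $(\Stack_0 \circ \ValTerm)\{\GenTerm'\} = \Stack_0\{\appt{\GenTerm'}{\ValTerm}\}$, which places $\ValTerm$ at the \emph{innermost} application, not at the outermost as required. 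The fix is a secondary induction on $\Stack_0$ defining a frame-level operation $\Stack_0 \oplus \ValTerm$ satisfying $(\Stack_0 \oplus \ValTerm)\{\ComTermT\} = \appt{\Stack_0\{\ComTermT\}}{\ValTerm}$: set $\StackEmpty \oplus \ValTerm := \StackEmpty \circ \ValTerm$, and $(\Stackt \circ W) \oplus \ValTerm := (\Stackt \oplus \ValTerm) \circ W$. A one-line unfolding verifies both clauses, and then taking $\Stack := \Stack_0 \oplus \ValTerm$ gives $\GenTerm = \Stack\{\GenTerm'\}$ as needed.

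The main obstacle is thus not conceptual but notational: the stack constructor $-\circ\ValTerm$ records application at the innermost hole, whereas peeling off an outer application peels from the outermost position, so a direct recursive identification of the frame is not possible and the auxiliary composition $\oplus$ (equivalently, a frame-concatenation operation) is needed. Once this is in place, the rest of the proof is entirely mechanical, just tracking the definitions. Conceptually, the lemma is exactly the statement that any computation term factors uniquely as a nested sequence of outer applications applied to a non-application head, which is clear by inspection of the grammar.
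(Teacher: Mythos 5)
Your proof is correct and takes essentially the same approach as the paper, which simply asserts that the lemma follows by structural induction on $\GenTerm$; your write-up is that induction carried out in detail, correctly identifying that $-\circ\ValTerm$ records the \emph{innermost} application and repairing this with the frame-concatenation $\oplus$. The only (trivial) gap is that $\oplus$ also needs the clause $(\Stackt \circ j)\oplus\ValTerm := (\Stackt\oplus\ValTerm)\circ j$ so that it is defined on all frames.
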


\begin{definition}
	Two frames $\Stack$ and $\Stackt$ \emph{match} when the following statements hold.
	\begin{enumerate}
		\item If $\Stack = \StackEmpty$ then $\Stackt = \StackEmpty$.
		\item If $\Stack = \Stack' \circ \ValTerm$ then $\Stackt = \Stackt' \circ \ValTerm'$ where $\ValTerm \HowClo{\arbsim} \ValTerm'$ and $\Stack'$ matches with $\Stackt'$.
		\item If $\Stack = \Stack' \circ i$ then $\Stackt = \Stackt' \circ i$ where $\Stack'$ matches with $\Stackt'$.
	\end{enumerate}
\end{definition}

We have the following property.

\begin{lemma}\label{match_frame}
	If $\Stack\{\GenTerm'\} \HowClo{\arbsim} \GenTermT$ then there is a frame $\Stackt$ and a term $\GenTermT'$ such that; $\Stack$ matches $\Stackt$, $\GenTerm' \ComExt{\HowClo{\arbsim}} \GenTermT'$ and $\Stackt\{\GenTermT'\} \arbsim \GenTermT$. 
\end{lemma}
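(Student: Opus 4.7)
The plan is to proceed by induction on the structure of the frame $\Stack$.

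For the base case $\Stack = \StackEmpty$ we have $\GenTerm' \HowClo{\arbsim} \GenTermT$, and unfolding the definition of $\HowClo{\arbsim}$ via rules \textbf{HV}/\textbf{HC} of Definition \ref{def:How} directly produces a term $\GenTermT'$ with $\GenTerm' \ComExt{\HowClo{\arbsim}} \GenTermT'$ and $\GenTermT' \arbsim \GenTermT$ (we are in the closed setting, so $\OpeExt{\arbsim}$ coincides with $\arbsim$). Taking $\Stackt := \StackEmpty$, which trivially matches $\StackEmpty$, closes the base case.

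For the inductive step consider $\Stack = \Stack_0 \circ \ValTerm$ (the case $\Stack = \Stack_0 \circ j$ is entirely analogous, using rule $\mathbf{C18}$ and simulation clause 6 in place of $\mathbf{C12}$ and clause 5). Then $\Stack\{\GenTerm'\} = \Stack_0\{\appt{\GenTerm'}{\ValTerm}\}$, so applying the induction hypothesis to the smaller frame $\Stack_0$ with inner term $\appt{\GenTerm'}{\ValTerm}$ yields a matching frame $\Stackt_0$ and a term $N$ with $\appt{\GenTerm'}{\ValTerm} \ComExt{\HowClo{\arbsim}} N$ and $\Stackt_0\{N\} \arbsim \GenTermT$. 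Because the compatible-refinement rules are syntax-directed, this step on an application can only have been derived by $\mathbf{C12}$, which forces $N = \appt{\ComTermT_0}{\ValTermT}$ with $\GenTerm' \HowClo{\arbsim} \ComTermT_0$ and $\ValTerm \HowClo{\arbsim} \ValTermT$.

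Unfolding $\GenTerm' \HowClo{\arbsim} \ComTermT_0$ one step yields $\GenTermT'$ with $\GenTerm' \ComExt{\HowClo{\arbsim}} \GenTermT'$ and $\GenTermT' \arbsim \ComTermT_0$. Set $\Stackt := \Stackt_0 \circ \ValTermT$; it matches $\Stack$ since $\Stackt_0$ matches $\Stack_0$ and $\ValTerm \HowClo{\arbsim} \ValTermT$. It then remains to verify $\Stackt\{\GenTermT'\} = \Stackt_0\{\appt{\GenTermT'}{\ValTermT}\} \arbsim \GenTermT$: simulation clause 5 gives $\appt{\GenTermT'}{\ValTermT} \arbsim \appt{\ComTermT_0}{\ValTermT} = N$, Lemma \ref{lemma:simul_stack} propagates this through $\Stackt_0$ to produce $\Stackt_0\{\appt{\GenTermT'}{\ValTermT}\} \arbsim \Stackt_0\{N\}$, and transitivity of $\arbsim$ with $\Stackt_0\{N\} \arbsim \GenTermT$ closes the diagram.

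The main obstacle is primarily bookkeeping: one must keep straight which of $\HowClo{\arbsim}$, $\ComExt{\HowClo{\arbsim}}$, and $\arbsim$ holds at each intermediate term, and exploit the fact that a compatible-refinement step on an application (or product application) can only be generated by the corresponding syntax-directed rule. That rigidity is exactly what lets us strip off the outermost frame-item and its $\HowClo{\arbsim}$-partner together at each inductive step.
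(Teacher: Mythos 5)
Your proof is correct and follows essentially the same route as the paper's: induction on the frame, using the factorization $(\HowClo{\arbsim}) = (\ComExt{\HowClo{\arbsim}})\,(\OpeExt{\arbsim})$ in the base case and the syntax-directedness of $\mathbf{C12}$/$\mathbf{C18}$ plus the simulation clauses and transitivity of $\arbsim$ in the inductive step. The only (harmless) difference is that you peel the frame at the inner end, applying the induction hypothesis to the outer frame before inverting $\mathbf{C12}$ and unfolding the Howe closure once more, whereas the paper inverts $\mathbf{C12}$ on the outermost application first and then recurses inward; this is why you additionally invoke Lemma \ref{lemma:simul_stack} where the paper uses a single simulation clause.
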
	
\begin{proof}
	We do this by induction on the frame. If $\Stack = \StackEmpty$, then the statements hold by taking $\Stackt = \StackEmpty$ and $\GenTermT'$ such that $\GenTerm' \ComExt{\HowClo{\arbsim}} \GenTermT' \arbsim \GenTermT$, which is possible since $(\HowClo{\arbsim}) = (\ComExt{\HowClo{\arbsim}}) \circ (\arbsim)$. Now for the induction step, assume the statement holds for any smaller frames $\Stack'$.
	\begin{enumerate}
		\item $\Stack = \ValTerm \circ \Stack'$, then $\Stack\{\GenTerm'\} = (\appt{\Stack'\{\GenTerm'\}}{\ValTerm})$. Now, there is a term $\GenTermD$ such that $\Stack\{\GenTerm'\} \ComExt{\HowClo{\arbsim}} \GenTermD \arbsim \GenTermT$. The statement $(\appt{\Stack'\{\GenTerm'\}}{\ValTerm}) \ComExt{\HowClo{\arbsim}} \GenTermD$ could  only have been derived from rule $\mathbf{C12}$, so we know there are $\GenTermD'$ and $\ValTermT$ such that $\GenTermD = (\appt{\GenTermD'}{\ValTermT})$, $\Stack'\{\GenTerm'\} \HowClo{\arbsim} \GenTermD'$, and $\ValTerm \HowClo{\arbsim} \ValTermT$.
		
		We use induction hypothesis on $\Stack'\{\GenTerm'\} \HowClo{\arbsim} \GenTermD'$ to find a term $\GenTermT'$ and frame $\Stackt'$ such that $\Stack'$ matches $\Stackt'$, $\GenTerm' \ComExt{\HowClo{\arbsim}} \GenTermT'$ and $\Stackt'\{\GenTermT'\} \arbsim \GenTermD'$.
		Let $\Stackt := \ValTermT \circ \Stackt'$, then $\Stack$ and $\Stackt$ match. From $\Stackt'\{\GenTermT'\} \arbsim \GenTermD'$ and simulation rule we have $\Stackt\{\GenTermT'\} = (\appt{\Stackt'\{\GenTermT'\}}{\ValTermT}) \arbsim (\appt{\GenTermD'}{\ValTermT}) = \GenTermD$. With $\GenTermD \arbsim \GenTermT$ we can conclude that $\Stackt\{\GenTermT'\} \arbsim \GenTermT$ and from earlier $\GenTerm' \ComExt{\HowClo{\arbsim}} \GenTermT'$. So $\Stackt$ and $\GenTermT'$ have the desired properties.

		\item $\Stack = i \circ \Stack'$, then $\Stack\{\GenTerm'\} = (\appt{\Stack'\{\GenTerm'\}}{i})$. Now, there is a term $\GenTermD$ such that $\Stack\{\GenTerm'\} \ComExt{\HowClo{\arbsim}} \GenTermD \arbsim \GenTermT$. The statement $(\appt{\Stack'\{\GenTerm'\}}{i}) \ComExt{\HowClo{\arbsim}} \GenTermD$ could  only have been derived from rule $\mathbf{C18}$, so we know there is a $\GenTermD'$ such that $\GenTermD = (\appt{\GenTermD'}{i})$ and $\Stack'\{\GenTerm'\} \HowClo{\arbsim} \GenTermD'$.
		
		We use induction hypothesis on $\Stack'\{\GenTerm'\} \HowClo{\arbsim} \GenTermD'$ to find a term $\GenTermT'$ and frame $\Stackt'$ such that $\Stack'$ matches $\Stackt'$, $\GenTerm' \ComExt{\HowClo{\arbsim}} \GenTermT'$ and $\Stackt'\{\GenTermT'\} \arbsim \GenTermD'$.
		Let $\Stackt := i \circ \Stackt'$, then $\Stack$ and $\Stackt$ match. From $\Stackt'\{\GenTermT'\} \arbsim \GenTermD'$ and simulation rule we have $\Stackt\{\GenTermT'\} = (\appt{\Stackt'\{\GenTermT'\}}{i}) \arbsim (\appt{\GenTermD'}{i}) = \GenTermD$. With $\GenTermD \arbsim \GenTermT$ we can conclude that $\Stackt\{\GenTermT'\} \arbsim \GenTermT$ and from earlier $\GenTerm' \ComExt{\HowClo{\arbsim}} \GenTermT'$. So $\Stackt$ and $\GenTermT'$ have the desired properties.
	\end{enumerate}
\end{proof}
Matching frames are very handy, since they can make use of compatibility:

\begin{lemma}
	If $\Stack$ matches $\Stackt$, then for all $\ComTerm \HowClo{\arbsim} \ComTermT$ we have $\Stack\{\ComTerm\} \HowClo{\arbsim} \Stackt\{\ComTermT\}$.
\end{lemma}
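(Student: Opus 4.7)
The plan is to proceed by induction on the structure of the frame $\Stack$, which by definition has one of only three shapes: $\StackEmpty$, $\Stack' \circ \ValTerm$, or $\Stack' \circ i$ (the $\comseq{(-)}{x}{(-)}$ case is excluded precisely because $\Stack$ is a frame).

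For the base case $\Stack = \StackEmpty$, the matching condition forces $\Stackt = \StackEmpty$, so $\Stack\{\ComTerm\} = \ComTerm$ and $\Stackt\{\ComTermT\} = \ComTermT$, and the result is immediate from the hypothesis $\ComTerm \HowClo{\arbsim} \ComTermT$.

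For $\Stack = \Stack' \circ \ValTerm$, matching gives $\Stackt = \Stackt' \circ \ValTermT$ with $\ValTerm \HowClo{\arbsim} \ValTermT$ and $\Stack'$ matching $\Stackt'$. Unfolding the definition of frame application, $\Stack\{\ComTerm\} = \Stack'\{\appt{\ComTerm}{\ValTerm}\}$ and analogously on the other side. The compatibility rule $\mathbf{C12}$ applied to $\ComTerm \HowClo{\arbsim} \ComTermT$ and $\ValTerm \HowClo{\arbsim} \ValTermT$ yields $\appt{\ComTerm}{\ValTerm} \ComExt{\HowClo{\arbsim}} \appt{\ComTermT}{\ValTermT}$; since $\HowClo{\arbsim}$ is compatible by Lemma \ref{Hprop1}(1), this gives $\appt{\ComTerm}{\ValTerm} \HowClo{\arbsim} \appt{\ComTermT}{\ValTermT}$. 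Applying the induction hypothesis to the smaller matching pair $\Stack'$ and $\Stackt'$ with this new Howe-related pair then yields $\Stack'\{\appt{\ComTerm}{\ValTerm}\} \HowClo{\arbsim} \Stackt'\{\appt{\ComTermT}{\ValTermT}\} = \Stackt\{\ComTermT\}$, as required. The case $\Stack = \Stack' \circ i$ is analogous but uses rule $\mathbf{C18}$ in place of $\mathbf{C12}$ (no value-side hypothesis is needed since the projection index is the same on both sides by the matching definition).

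There is no genuine obstacle: the proof is a straightforward structural induction, with compatibility of $\HowClo{\arbsim}$ (Lemma \ref{Hprop1}) doing all the real work at the inductive step. The only care needed is to recognise that the absence of $\comseq{(-)}{x}{(-)}$ contexts in frames is what makes the induction go through cleanly, since that case would require appealing to a more delicate operational argument.
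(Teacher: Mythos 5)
Your proof is correct and is precisely the argument the paper intends: the paper states this lemma without proof, remarking only that matching frames ``can make use of compatibility'', and your structural induction on the frame, discharging each $\circ\,\ValTerm$ and $\circ\, i$ layer via rules $\mathbf{C12}$/$\mathbf{C18}$ together with compatibility of $\HowClo{\arbsim}$ (Lemma \ref{Hprop1}), is exactly that intended use. No issues.
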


The last important property of frames is that it works nicely with respect to the reduction relation.

\begin{lemma}\label{lemma:frame_red}
	If $\reduct{\ComTerm}{\ComTermT}$, then $|\Stack\{\ComTerm\}| = |\Stack\{\ComTermT\}|$ and for any $k \in \Nat$ and any frame $\Stack$ we have $|\Stack\{\ComTerm\}|_{k+1} = |\Stack\{\ComTermT\}|_k$.
	
	\noindent
	Moreover, $|\Stack\{\EfOp(\ComTerm_0,\ComTerm_1,\dots)\}|_{n+1} = \EfOp\{|\Stack\{\ComTerm_0\}|_n,|\Stack\{\ComTerm_1\}|_n,\dots\}$ (same for other effects).
\end{lemma}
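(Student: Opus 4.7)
The plan is to reduce every claim to a single indexing lemma for frames, and then read off the conclusions by direct calculation. Write $\ell$ for the length of the frame $\Stack$ (the total number of value/index slots pushed in). The key auxiliary claim I would introduce is: for every frame $\Stack$ of length $\ell$ and every computation term $\ComTerm$, we have $|\Stack\{\ComTerm\}|_n = |\Stack, \ComTerm|_{n-\ell}$ whenever $n \geq \ell$, and $|\Stack\{\ComTerm\}|_n = \bot$ whenever $n < \ell$. I would prove this by induction on $\ell$. The base case $\Stack = \StackEmpty$ is immediate from the convention $|\ComTerm|_n = |\StackEmpty, \ComTerm|_n$. For the inductive case $\Stack = \Stack' \circ \ValTerm$ of length $\ell'+1$, note $\Stack\{\ComTerm\} = \Stack'\{\appt{\ComTerm}{\ValTerm}\}$, apply the induction hypothesis to $\Stack'$ and $\appt{\ComTerm}{\ValTerm}$, and then peel one unpacking step via stack-reduction rule 4 and approximation rule 3, obtaining $|\Stack', \appt{\ComTerm}{\ValTerm}|_{m+1} = |\Stack' \circ \ValTerm, \ComTerm|_m = |\Stack, \ComTerm|_m$ for $m+1 \geq 1$. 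The case $\Stack = \Stack' \circ j$ is analogous, using stack-reduction rule 6.

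Given the helper, the first half of the lemma is quick. The hypothesis $\reduct{\ComTerm}{\ComTermT}$ lifts to $\StackReduct{(\Stack, \ComTerm)}{(\Stack, \ComTermT)}$ by stack-reduction rule 1, so approximation rule 3 yields $|\Stack, \ComTerm|_{m+1} = |\Stack, \ComTermT|_m$ for every stack $\Stack$ and every $m \geq 0$. For $k \geq \ell$ I chain together two applications of the helper with this equality at $m = k - \ell$; for $k < \ell$ both sides collapse to $\bot$ by the helper (in particular at $k = \ell - 1$, where $|\Stack\{\ComTerm\}|_\ell = |\Stack, \ComTerm|_0 = \bot$). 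Taking the supremum over $k$ then yields $|\Stack\{\ComTerm\}| = |\Stack\{\ComTermT\}|$.

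The ``moreover'' clause runs through the same template. For an operator $\EfOp$ of finite arity and $n \geq \ell$, the helper rewrites $|\Stack\{\EfOp(\ComTerm_0,\ldots)\}|_{n+1}$ as $|\Stack, \EfOp(\ComTerm_0,\ldots)|_{n+1-\ell}$; the appropriate rule among approximation rules 4--6 (applicable since $n+1-\ell \geq 1$) expands this to $\EfOp\{|\Stack, \ComTerm_i|_{n-\ell}\}_i$; a final use of the helper on each subterm converts this to $\EfOp\{|\Stack\{\ComTerm_i\}|_n\}_i$. The infinite-arity case $\EfOp : \TypeVar^{\NatType} \to \TypeVar$ is handled identically, except that the children there are already evaluated in full and the helper is applied after taking suprema $|\Stack\{\ComTerm[\numeral{m}/x]\}| = |\Stack, \ComTerm[\numeral{m}/x]|$.

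The only real obstacle is bookkeeping: the index shift by the frame length $\ell$ has to be tracked carefully through every step, and the low-$n$ regime deserves a separate glance, since it is exactly where both sides of the first half degenerate to $\bot$ and where the effect-distribution equation is literally valid only for $n \geq \ell$, becoming unambiguous once one passes to the supremum.
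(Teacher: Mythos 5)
The paper states this lemma without proof, so there is nothing to compare against; judged on its own, your argument is correct and is the natural way to fill the gap. Your indexing claim $|\Stack\{\ComTerm\}|_n = |\Stack,\ComTerm|_{n-\ell}$ for $n \geq \ell$ (and $\bot$ otherwise) is exactly right: the induction on the frame length goes through because each constructor $\Stack' \circ \ValTerm$ or $\Stack' \circ j$ costs precisely one application of stack-reduction rule 4 or 6 via approximation rule 3, and from it both halves of the first claim follow by one more application of rule 3 (using rule 1 of $\StackReduct{}{}$ to lift $\reduct{\ComTerm}{\ComTermT}$), with the low-index cases collapsing to $\bot$ on both sides. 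Your handling of the supremum and of the infinite-arity rule (whose children are full evaluations $|\Stack,\ComTerm[\numeral{m}/x]|$ rather than $n$-th approximations) is also correct. The one substantive point you raise is real and worth stating explicitly: the ``moreover'' equation as printed is literally false for $n < \ell$ and non-empty frames, since the left-hand side is $\bot$ while the right-hand side is the one-node tree $\EfOp\{\bot,\bot,\dots\}$; it holds as an equality only for $n \geq \ell$, and for all $n$ only as the inequality $|\Stack\{\EfOp(\ComTerm_0,\dots)\}|_{n+1} \Tord \EfOp\{|\Stack\{\ComTerm_i\}|_n\}$. This weaker form is in fact all that the Key Lemma needs, since Lemma \ref{relat:order}(1) permits replacing the left argument of $\Qrelator{-}$ by anything $\Tord$-below it, so your caveat repairs the statement without disturbing any of its uses.
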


We have the necessary tools to prove the following lemma.

\begin{lemma}[Key Lemma]
	For any $n \in \Nat$ we have: $\GenTerm \HowClo{\arbsim} \GenTermT \implies |\GenTerm|_n \Qrelator{\HowClo{\arbsim}} |\GenTermT|$ for any two closed terms $\GenTerm$, $\GenTermT$ of type $\ForceType{\ValType}$.
\end{lemma}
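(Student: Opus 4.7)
My plan is to prove the Key Lemma by induction on $n$, using the canonical frame decomposition (Lemma \ref{case_frame}) and its compatibility with matching (Lemma \ref{match_frame}) to reduce to a finite case analysis on the informative head of $\GenTerm$.

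For the base case $n=0$, we have $|\GenTerm|_0 = \bot$, and reflexivity of $\HowClo{\arbsim}$ together with the pruning property (Lemma \ref{relat:order}(1)) gives $\bot \Qrelator{\HowClo{\arbsim}} |\GenTermT|$ from any trivially reflexive relation. For the inductive step, I would decompose $\GenTerm = \Stack\{\GenTerm'\}$ with $\Stack$ a frame and $\GenTerm'$ informative. By Lemma \ref{match_frame}, there exist a matching frame $\Stackt$ and a term $\GenTermT'$ with $\GenTerm' \,\ComExt{\HowClo{\arbsim}}\, \GenTermT'$ and $\Stackt\{\GenTermT'\} \arbsim \GenTermT$. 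For each shape of $\GenTerm'$, I would inspect the compatible-refinement rule used to derive $\GenTerm' \,\ComExt{\HowClo{\arbsim}}\, \GenTermT'$ (so that I know the shape of $\GenTermT'$), invoke Lemma \ref{lemma:frame_red} to see that $|\Stack\{\GenTerm'\}|_{n+1}$ is built from $|\Stack\{-\}|_n$ applied to smaller components, apply the induction hypothesis componentwise, and then compose with the simulation property of $\arbsim$ on $\Stackt\{\GenTermT'\} \arbsim \GenTermT$ via Lemma \ref{Hprop2}(2).

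Concretely, the easy cases are the pure reduction rules (e.g.\ $\valseq$, $\caseN$, $\force{\thunk{\ComTerm}}$, pattern match, $\fix$), where the reduction either substitutes a value or unrolls a thunk; here Lemma \ref{Hprop1}(3) guarantees that $\HowClo{\arbsim}$ is preserved under substitution of related values, so after one reduction step we get a term to which the induction hypothesis applies, and Lemma \ref{lemma:simul_stack} carries the outer $\arbsim$ through the matching frame. The effect operator case is handled by reading off $\GenTermT' = \EfOp(\ComTermT_1, \dots)$ from rule $\mathbf{C20}$ (respectively $\mathbf{C21}$, $\mathbf{C22}$), applying IH to each child $\Stack\{\ComTerm_i\} \HowClo{\arbsim} \Stackt\{\ComTermT_i\}$, and then packaging the results with Corollary \ref{relat:preserve}(2) to conclude that the $\EfOp$-nodes are related by $\Qrelator{\HowClo{\arbsim}}$. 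The terminal cases $\return{\ValTerm}$, $\lambt{x}{\ComTerm}$, and $\prodlam{\ComTerm_i}{i \in I}$ use Lemma \ref{relat:order}(4) to lift the compatibility of $\HowClo{\arbsim}$ at leaves into a $\Qrelator{\HowClo{\arbsim}}$-relation on $\eta$-trees; only $\return{\ValTerm}$ actually contributes to the leaf relation on $\ForceType{\ValType}$, and compatibility ensures the pair of returned values is related.

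The main obstacle will be the sequence case $\GenTerm' = \comseq{\ComTerm}{x}{\ComTermT}$, where after one reduction step the stack is no longer a frame and the tree takes the monadic $\mu$-form: the evaluation of $\ComTerm$ is built, and at each $\return{\ValTerm}$ leaf one substitutes and continues with $\ComTermT[\ValTerm/x]$. To handle this cleanly I would invoke the induction hypothesis twice: first on $\Stack\{\ComTerm\}$ to get a $\Qrelator{\HowClo{\arbsim}}$-relation on the outer tree, then on $\Stack\{\ComTermT[\ValTerm/x]\}$ paired with $\Stackt\{\ComTermT'[\ValTermT/x]\}$ (using Lemma \ref{Hprop1}(3) to substitute $\HowClo{\arbsim}$-related values), and combine via Corollary \ref{relat:preserve}(1); this is precisely where decomposability (through Lemma \ref{relat:monad}) does its work. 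The whole proof closes by composing the resulting $\Qrelator{\HowClo{\arbsim}}$-relation with the $\Qrelator{\arbsim}$-relation produced by the simulation property on $\Stackt\{\GenTermT'\} \arbsim \GenTermT$, using Lemma \ref{Hprop2}(2).
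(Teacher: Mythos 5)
Your proposal follows essentially the same route as the paper's proof: induction on $n$, canonical frame decomposition with matching frames (Lemmas \ref{case_frame} and \ref{match_frame}), case analysis on the informative head driven by the compatible-refinement rules, with the sequencing case discharged via decomposability (Corollary \ref{relat:preserve}(1)) and the effect cases via Corollary \ref{relat:preserve}(2), finishing by composing with $\arbsim$ through Lemma \ref{Hprop2}. The only small imprecision is that the $\lambt{x}{\ComTerm}$ and $\prodlam{\ComTerm_i}{i \in I}$ cases are not handled as $\eta$-trees: since the frame there is necessarily of the form $\Stack' \circ \ValTerm$ (resp.\ $\Stack' \circ j$), the paper performs one reduction step inside the frame via Lemma \ref{lemma:frame_red} and then applies the induction hypothesis, exactly as in your ``easy'' reduction cases, with only $\return{\ValTerm}$ (empty frame) actually using the $\eta$-tree property.
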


\begin{proof}
	We do an induction on $n$. 
	
	\textbf{Base case}, $n = 0$. Here $|\ComTerm|_0 = \bot$ which is below any other tree. So $|\ComTerm|_0 \Qrelator{\HowClo{\arbsim}} |\ComTermT|$ can be derived by using both reflexivity of $\Qrelator{\HowClo{\arbsim}}$ (Lemma \ref{lemma:relator}) and Lemma \ref{relat:order}.
	
	\textbf{Induction step} $(n+1)$. We assume as the \emph{induction hypothesis} that for any $\GenTerm' \HowClo{\arbsim} \GenTermT'$ and $k \leq n$ we have $|\GenTerm'|_k \Qrelator{\HowClo{\arbsim}} |\GenTermT'|$. To prove, for any $\GenTerm \HowClo{\arbsim} \GenTermT$ we have $|\GenTerm|_{n+1} \Qrelator{\HowClo{\arbsim}} |\GenTermT|$. 
	
	\noindent
	Assume $\GenTerm \HowClo{\arbsim} \GenTermT$. 
	We use Lemma \ref{case_frame} to find a frame $\Stack$ and an informative term $\GenTerm'$ such that $\GenTerm = \Stack\{\GenTerm'\}$. 
	We then use Lemma \ref{match_frame} to find matching frame $\Stackt$ together with a term $\GenTermT'$ such that: 
	$\Stackt\{\GenTermT'\} \arbsim \GenTermT$ and $\GenTerm'\ComExt{\HowClo{\arbsim}} \GenTermT'$. 
	We want to prove that $|\GenTerm|_{n+1} = |\Stack\{\GenTerm'\}|_{n+1} \Qrelator{\HowClo{\arbsim}} |\Stackt\{\GenTermT'\}|$, since then with $\Stackt\{\GenTermT'\}  \arbsim \GenTermT$ we can conclude via Lemma \ref{Hprop2} and simulation property that $|\GenTerm|_{n+1} \Qrelator{\HowClo{\arbsim}} |\GenTermT|$.
	$$\text{To Prove: } |\GenTerm|_{n+1} = |\Stack\{\GenTerm'\}|_{n+1} \, \Qrelator{\HowClo{\arbsim}} \, |\Stackt\{\GenTermT'\}|$$
	We do a case distinction on $\GenTerm' : \ComType$, which is informative, so not of the form $\appt{\ComTerm}{\ValTerm}$ or $\appt{\ComTerm}{i}$.
	
	\noindent
	We start with the three unfold cases, where the $\Stack$ frame is actively used.
	
	\begin{enumerate}
		\item If $\GenTerm' = \return{\ValTerm} : \ComType$, which can only be of $\ForceType{}$-type, so the frame $\Stack$ must be $\StackEmpty$ as no other frames accept a term of this type. Hence $\GenTerm' = \GenTerm$ and $\ComType = \ForceType{\ValType}$. 
		The matching frame $\Stackt$ must be $\StackEmpty$ too, $|\GenTerm| = |\GenTerm'| = \eta(\ValTerm)$, and $\return{\ValTerm} \ComExt{\HowClo{\arbsim}} \GenTermT' = \Stackt\{\GenTermT'\}$. 
		This is only possible from the return compatibility rule $\mathbf{C7}$, meaning $\GenTermT' = \return{\ValTermT}$ for some $\ValTermT$ and $\ValTerm \HowClo{\arbsim} \ValTermT$. By Lemma \ref{relat:monad} we have $\eta(\ValTerm) \Qrelator{\HowClo{\arbsim}} \eta(\ValTermT)$, hence $|\Stack\{\GenTerm\}|_{n+1} = |\GenTerm|_{n+1} = |\return{\ValTerm}|_{n+1} = \eta(\ValTerm) \Qrelator{\HowClo{\arbsim}} \eta(\ValTermT) = |\return{\ValTermT}| = |\GenTermT'| = |\Stackt\{\GenTermT'\}|$.
		
		\item If $\GenTerm' = \lambt{x}{\ComTerm}$, then for $\Stack\{\GenTerm'\}$ to be of type $\ForceType{\ValType}$, $\Stack$ must be of the form $\Stack' \circ \ValTerm$. 
		Since $\Stack$ matches $\Stackt$, $\Stackt = \Stackt' \circ \ValTermT$ where $\Stack'$ matches $\Stackt'$ and $\ValTerm \HowClo{\arbsim} \ValTermT$. The statement $\GenTerm' = \lambt{x}{\ComTerm} \ComExt{\HowClo{\arbsim}} \GenTermT'$ could only have been derived via the lambda compatibility rule $\mathbf{C11}$, so $\GenTermT' = \lambt{x}{\ComTermT}$ for some $\ComTermT$ where $\Cjudge{\{x\}}{\ComTerm \HowClo{\arbsim} \ComTermT}{\ComTypeT}$. By Lemma \ref{Hprop1} we have that $\ComTerm[\ValTerm/x] \HowClo{\arbsim} \ComTermT[\ValTermT/x]$, so we can do the following derivation using the Lemma \ref{lemma:frame_red} and the induction hypothesis:
		
		$|\Stack\{\GenTerm'\}|_{n+1} = |\Stack'\{\appt{\lambt{x}{\ComTerm}}{\ValTerm}\}|_{n+1} = |\Stack'\{\ComTerm[\ValTerm/x]\}|_n \, \Qrelator{\HowClo{\arbsim}} \, |\Stackt'\{\ComTermT[\ValTermT/x]\}| = |\Stackt\{\GenTermT'\}|$.
		
		\item $\GenTerm' = \prodlam{\ComTerm_i}{i \in I}$, then for $\Stack\{\GenTerm'\}$ to be of type $\ForceType{\ValType}$, $\Stack$ must be of the form $\Stack' \circ j$. 
		Since $\Stack$ matches $\Stackt$, $\Stackt = \Stackt' \circ j$ where $\Stack'$ matches $\Stackt'$. The statement $\GenTerm' = \prodlam{\ComTerm_i}{i \in I} \ComExt{\HowClo{\arbsim}} \GenTermT'$ could only have been derived via the lambda compatibility rule $\mathbf{C17}$, so $\GenTermT' = \prodlam{\ComTermT_i}{i \in I}$ for some sequence of $\ComTermT_i$-s, where $\forall i. \ComTerm_i \HowClo{\arbsim} \ComTermT_i$. We can do the following derivation using Lemma \ref{lemma:frame_red} and the induction hypothesis:
		
		$|\Stack\{\GenTerm'\}|_{n+1} = |\Stack'\{\appt{\prodlam{\ComTerm_i}{i \in I}}{j}|_{n+1} = |\Stack'\{\ComTerm_j\}|_n \, \Qrelator{\HowClo{\arbsim}} \, |\Stackt'\{\ComTermT_j\}| = |\Stackt\{\GenTermT'\}|$.
		
	\enumeratext{Computation sequencing is special, and needs decomposability.}
		\item $\GenTerm' = \comseq{\ComTerm}{x}{\ComTermT}$, then $|\GenTerm'|_{n+1} \leq |\ComTerm|_n[\ValTerm \mapsto |\ComTermT[\ValTerm/x]|_n]$.
		Hence $|\GenTerm|_{n+1} \leq |\ComTerm|_n[\ValTerm \mapsto |\Stack\{\ComTermT[\ValTerm/x]\}|_n]$. Now $\GenTerm' \ComExt{\HowClo{\arbsim}} \GenTermT'$ results only from compatibility rule \textbf{C8}, hence $\GenTermT' = \comseq{\ComTerm'}{x}{\ComTermT'}$, $\ComTerm \HowClo{\arbsim} \ComTerm'$ and $x \vdash \ComTermT \HowClo{\arbsim} \ComTermT'$. 
		So $|\Stackt\{\GenTermT'\}| = |\ComTerm'|[\ValTermT \mapsto |\Stackt\{\ComTermT'[\ValTermT/x]\}|]$. 
		By induction hypothesis we have $|\ComTerm|_n \Qrelator{\HowClo{\arbsim}} |\ComTerm'|$ so define 
		
		$t := |\ComTerm|_n$ and $r = |\ComTerm'|$. 
		
		From $\ValTerm \HowClo{\arbsim} \ValTermT$ we get via Lemma \ref{Hprop1} that $\ComTermT[\ValTerm/x] \HowClo{\arbsim} \ComTermT'[\ValTermT/x]$, and since $\Stack$ and $\Stackt$ match we have $\Stack\{\ComTermT[\ValTerm/x]\} \HowClo{\arbsim} \Stackt\{\ComTermT'[\ValTermT/x]\}$ and hence by induction hypothesis we get $|\Stack\{\ComTermT[\ValTerm/x]\}|_n \Qrelator{\HowClo{\arbsim}} |\Stackt\{\ComTermT'[\ValTermT/x]\}|$. 
		So define 
		
		$f: \ValTerm \mapsto |\Stack\{\ComTermT[\ValTerm/x]\}|_n$ and $g: \ValTermT \mapsto |\Stackt\{\ComTermT'[\ValTermT/x]\}|$. 
		
		Now $t,r,f,g$ satisfy the conditions of Cor. \ref{relat:preserve}, so we conclude that:
		
		$|\GenTerm|_{n+1} \leq |\ComTerm|_n[\ValTerm \mapsto |\Stack\{\ComTermT[\ValTerm/x]\}|_n] \, \Qrelator{\HowClo{\arbsim}} \, |\ComTerm'|[\ValTermT \mapsto |\Stackt\{\ComTermT'[\ValTermT/x]\}|] = |\Stackt\{\GenTermT'\}|$.
		
	\enumeratext{Next come the reduction cases.}
		
		\item $\GenTerm' = \force{\ValTerm}$, 
		then since $\ValTerm$ is a closed value it must be of the form $\thunk{\ComTerm}$ for some term $\ComTerm$, 
		so $\reduct{\GenTerm'}{\ComTerm}$. 
		Now $\GenTerm' \ComExt{\HowClo{\arbsim}} \GenTermT'$ can only come from compatibility rule \textbf{C10}, 
		hence $\GenTermT' = \force{\ValTermT}$ where $\ValTerm \HowClo{\arbsim} \ValTermT$. 
		From \textbf{HV} we get $\thunk{\ComTerm} = \ValTerm \ComExt{\HowClo{\arbsim}} \ValTermD \arbsim \ValTermT$ for some $\ValTermD$, the relation can only come from \textbf{C9} so $\ValTermD = \thunk{\ComTermD}$ where $\ComTerm \HowClo{\arbsim} \ComTermD$.
		By simulation property of $\arbsim$ we get $\force{\thunk{\ComTermD}} \arbsim \force{\ValTermT} = \GenTermT'$.
		By Lemma \ref{lemma:simul_stack} we have $\Stackt\{\force{\thunk{\ComTermD}}\} \arbsim \Stackt\{\GenTermT'\}$, which then by $\ForceType{\ValType}$-simulation property implies that $|\Stackt\{\ComTermD\}| = |\Stackt\{\force{\thunk{\ComTermD}}\}| \Qrelator{\arbsim_{\ValType}} |\Stackt\{\GenTermT'\}|$.
		Using the induction hypothesis on $\Stack\{\ComTerm\} \HowClo{\arbsim} \Stackt\{\ComTermD\}$ we get $|\Stack\{\ComTerm\}|_n \QObser(\HowClo{\arbsim}) |\Stackt\{\ComTermD\}|$. Combining this with $|\Stackt\{\ComTermD\}| = |\Stackt\{\force{\thunk{\ComTermD}}\}| \Qrelator{\arbsim} |\Stackt\{\GenTermT'\}|$ using Lemma \ref{lemma:frame_red} and \ref{Hprop2} we get
		$|\GenTerm|_{n+1} = |\Stack\{\force{\thunk{\ComTerm}}\}|_{n+1} = |\Stack\{\ComTerm\}|_n \Qrelator{\HowClo{\arbsim}} |\Stackt\{\GenTermT'\}|$.
		
		\item $\GenTerm' = \caseN{\ValTerm}{\ComTerm}{x}{\ComTermT} : \ComType$, then $\GenTerm' \ComExt{\HowClo{\arbsim}} \GenTermT'$ is
		only concluded from \textbf{C4}, 
		hence $\GenTermT'$ is of the form $\caseN{\ValTerm'}{\ComTerm'}{x}{\ComTermT'}$ 
		for which we have $\ValTerm \HowClo{\arbsim} \ValTerm'$, $\ComTerm \HowClo{\arbsim} \ComTerm'$ and $\{x\} \vdash^c \ComTermT \HowClo{\arbsim} \ComTermT'$. 
		Since $\ValTerm \HowClo{\arbsim} \ValTerm'$ we have $\ValTerm = \ValTerm'$ by Lemma \ref{Nequ}. We do a case distinction on $\ValTerm$. 
		\begin{enumerate}
			\item If $\ValTerm = \zero = \ValTerm'$, then $\reduct{\GenTerm'}{\ComTerm}$ 
			so by Lemma \ref{lemma:frame_red} $|\GenTerm|_{n+1} = |\Stack\{\ComTerm\}|_n$ where since $\Stack$ matches $\Stackt$, 
			by induction hypothesis and $\ComTerm \HowClo{\arbsim} \ComTerm'$ we have 
			
			$|\Stack\{\ComTerm\}|_n \Qrelator{\HowClo{\arbsim}} |\Stackt\{\ComTerm'\}| = |\Stackt\{\caseN{\zero}{\ComTerm'}{x}{\ComTermT'}\}| = |\Stackt\{\GenTermT'\}|$. 
		
			\item If $\ValTerm = \succe{\ValTermT} = \ValTerm'$, 
			then $\reduct{\GenTerm'}{\ComTermT[\ValTermT/x]}$ 
			so by Lemma \ref{lemma:frame_red} $|\GenTerm|_{n+1} = |\Stack\{\ComTermT[\ValTermT/x]\}|_n$ 
			where by  $\{x\} \vdash^c \ComTermT \HowClo{\arbsim} \ComTermT'$ we have $\ComTermT[\ValTermT/x] \HowClo{\arbsim} \ComTermT'[\ValTermT/x]$ (Lemma \ref{Hprop1}). 
			By matching frames and induction hypothesis, 
			$|\Stack\{\ComTermT[\ValTermT/x]\}|_n \Qrelator{\HowClo{\arbsim}} |\Stackt\{\ComTermT'[\ValTermT/x]\}| = |\Stackt\{\GenTermT'\}|$.
		\end{enumerate}
	
		\item $\GenTerm' = \valseq{\ValTerm}{x}{\ComTerm} : \ComType$. 
		The only premises for $\GenTerm' \ComExt{\HowClo{\arbsim}} \GenTermT'$ are given by \textbf{C6}, 
		from which we know $\GenTermT' = \valseq{\ValTermT}{x}{\ComTermT}$, where $\ValTerm \HowClo{\arbsim} \ValTermT$ and $\{x\} \vdash \ComTerm \HowClo{\arbsim} \ComTermT$. 
		From this and Lemma \ref{Hprop1} we have $\ComTerm[\ValTerm/x] \HowClo{\arbsim} \ComTermT[\ValTermT/x]$ hence by matching frames and induction hypothesis we get $|\Stack\{\ComTerm[\ValTerm/x]\}|_n \Qrelator{\HowClo{\arbsim}} |\Stackt\{\ComTermT[\ValTermT/x]\}|$. 
		We have $\reduct{\GenTerm'}{\ComTerm[\ValTerm/x]}$ hence by Lemma \ref{lemma:frame_red} $|\GenTerm|_{n+1} = |\Stack\{\ComTerm[\ValTerm/x]\}|_n$, similarly $|\Stackt\{\GenTermT'\}| = |\Stackt\{\ComTermT[\ValTermT/x]\}|$ so $|\GenTerm|_{n+1}  \Qrelator{\HowClo{\arbsim}} |\Stackt\{\GenTermT'\}|$.
		
		\item $\GenTerm' = \fix{\ComTerm} : \ValTypeT \rightarrow \ComType$, we have $\fix{\ComTerm} \ComExt{\HowClo{\arbsim}} \GenTermT'$ which can only be concluded from \textbf{C19}, hence $\GenTermT' = \fix{\ComTerm'}$ where $\ComTerm \HowClo{\arbsim} \ComTerm'$. Look at the term $\{x\} \vdash^c \ComTermT = \appt{x}{\thunk{\fix{x}}}$, 
		by reflexivity we have $\{x\} \vdash^c \ComTermT \HowClo{\arbsim} \ComTermT$ and hence by Lemma \ref{Hprop1} we have $\ComTermT[\ComTerm/x] \HowClo{\arbsim} \ComTermT[\ComTerm'/x]$. 
		Using matching frames and induction hypothesis we can derive $|\Stack\{\ComTermT[\ComTerm/x]\}|_n \Qrelator{\HowClo{\arbsim}} |\Stackt\{\ComTermT[\ComTerm'/x]\}|$. Since $\reduct{\GenTerm}{\Stack\{\ComTermT[\ComTerm/x]\}}$ we use Lemma \ref{lemma:frame_red} to get
		$|\GenTerm|_{n+1} = |\Stack\{\ComTermT[\ComTerm/x]\}|_n$. Combining this with $|\Stackt\{\GenTermT'\}| = |\Stackt\{\ComTermT[\ComTerm'/x]\}|$ we get the desired conclusion.
	
	\enumeratext{Pattern match cases.}
	
		\item $\GenTerm' = \patmat{\ValTerm}{\{\dots,(i.x)\,.\,\ComTerm_i,\dots\}}$, then $\GenTerm' \ComExt{\HowClo{\arbsim}} \GenTermT'$ can only be from rule \textbf{C14}, hence $\GenTermT' = \patmat{\ValTermT}{\{\dots,(i.x)\,.\,\ComTermT_i,\dots\}}$ where $\ValTerm \HowClo{\arbsim} \ValTermT$ and for each $i$ we have $\{x\} \vdash^c \ComTerm_i \HowClo{\arbsim} \ComTermT_i$.
		As a value from a sum-type, $\ValTerm$ and $\ValTermT$ must be of the shape $\pair{j}{\ValTerm'}$ and $\pair{k}{\ValTermT'}$ respectively, where by Lemma \ref{lemma:How_simul_sum} we have $j=k$ and $\ValTerm' \HowClo{\arbsim} \ValTermT'$.
		Hence $\reduct{\GenTerm'}{\ComTerm_j[\ValTerm'/x]}$ and $\reduct{\GenTermT'}{\ComTermT_j[\ValTermT'/x]}$, where by Lemma \ref{Hprop1} $\ComTerm_j[\ValTerm'/x] \HowClo{\arbsim} \ComTermT_j[\ValTermT'/x]$.
		We conclude that by induction hypothesis that
		$|\Stack\{\GenTerm'\}|_{n+1} = |\Stack\{\ComTerm_j[\ValTerm'/x]\}|_n \Qrelator{\HowClo{\arbsim}} |\Stackt\{\ComTermT_j[\ValTermT'/x]\}| = |\Stackt\{\GenTermT'\}|$.
		
		\item $\GenTerm' = \patmat{\ValTerm}{\pair{x}{y}.\ComTerm}$, then $\GenTerm' \ComExt{\HowClo{\arbsim}} \GenTermT'$ can only be from rule \textbf{C16}, hence\\ $\GenTermT' = \patmat{\ValTermT}{\pair{x}{y}.\ComTermT}$ where $\ValTerm \HowClo{\arbsim} \ValTermT$ and $\{x,y\} \vdash^c \ComTerm \HowClo{\arbsim} \ComTermT$.
		As a value of pair-type, $\ValTerm$ and $\ValTermT$ must be of the shape $\pair{\ValTerm_0}{\ValTerm_1}$ and $\pair{\ValTermT_0}{\ValTermT_1}$ respectively, where by Lemma \ref{lemma:How_simul_pair} we have $\ValTerm_0 \HowClo{\arbsim} \ValTermT_0$ and $\ValTerm_1 \HowClo{\arbsim} \ValTermT_1$.
		Hence $\reduct{\GenTerm'}{\ComTerm[\ValTerm_0/x,\ValTerm_1/y]}$ and $\reduct{\GenTermT'}{\ComTermT[\ValTermT_0/x,\ValTermT_1/y]}$, and by Lemma \ref{Hprop1} $\ComTerm[\ValTerm_0/x,\ValTerm_1/y] \HowClo{\arbsim} \ComTermT[\ValTermT_0/x,\ValTermT_1/y]$.
		We conclude by Lemma \ref{lemma:frame_red} and the induction hypothesis that:
		
		$|\Stack\{\GenTerm'\}|_{n+1} = |\Stack\{\ComTerm[\ValTerm_0/x,\ValTerm_1/y]\}|_n \Qrelator{\HowClo{\arbsim}} |\Stackt\{\ComTermT[\ValTermT_0/x,\ValTermT_1/y]\}| = |\Stackt\{\GenTermT'\}|$.
			
	\enumeratext{Lastly the effect constructor cases.}
		\item $\GenTerm' = \EfOp(\ComTerm_0,\ComTerm_1,...) : \ComType$. Then $\GenTerm' \ComExt{\HowClo{\arbsim}} \GenTermT'$ can only be from \textbf{C20}, hence\\ $\GenTermT' = \EfOp(\ComTerm_0',\ComTerm_1',...)$ where for all $i$, $\ComTerm_i \HowClo{\arbsim} \ComTerm_i'$. 
		By matching frames and induction hypothesis, $|\Stack\{\ComTerm_i\}|_n \Qrelator{\HowClo{\arbsim}} |\Stackt\{\ComTerm_i'\}|$, 
		hence by Lemma \ref{lemma:frame_red}:
		
		$|\GenTerm|_{n+1} = |\Stack\{\GenTerm'\}|_{n+1} = |\EfOp(\Stack\{\ComTerm_0\},\Stack\{\ComTerm_1\},...)|_{n+1} = \EfOp \{i \mapsto |\Stack\{\ComTerm_i\}|_n\} \Qrelator{\HowClo{\arbsim}} \EfOp \{i \mapsto |\Stackt\{\ComTerm_i'\}|\} = |\Stackt\{\GenTermT'\}|$ by Cor. \ref{relat:preserve} and Lemma \ref{lemma:frame_red}. 
		
		\item $\GenTerm' = \EfOp(x \,.\, \ComTerm) : \ComType$ where $x : \NatType \vdash^c \ComTerm : \ComType$. Then $\GenTerm' \ComExt{\HowClo{\arbsim}} \GenTermT'$ can only be from \textbf{C21}, hence $\GenTermT' = \EfOp(x \,.\, \ComTermT)$ where $\ComTerm \HowClo{\arbsim} \ComTermT$. 
		Hence for all $\numeral{k} : \NatType$ we have $\ComTerm[\numeral{k}/x] \HowClo{\arbsim} \ComTermT[\numeral{k}/x]$ 
		and hence by matching frames and induction hypothesis $|\Stack\{\ComTerm[\numeral{k}/x]\}|_n \Qrelator{\HowClo{\arbsim}} |\Stackt\{\ComTermT[\numeral{k}/x]\}|$. Hence by Lemma \ref{lemma:frame_red} and Cor. \ref{relat:preserve} it holds that
		
		$|\GenTerm|_{n+1} = \EfOp(k \mapsto |\Stack\{\ComTerm[\numeral{k}/x]\}|_n) \Qrelator{\HowClo{\arbsim}} \EfOp(k \mapsto |\Stackt\{\ComTermT[\numeral{k}/x]\}|) = |\Stackt\{\GenTermT'\}|$.
		
		\item In the cases where $\GenTerm'$ is $\EfOp(\ValTerm,...)$, from compatibility rules \textbf{C22} we know $\GenTermT' = \EfOp(\ValTerm',...)$ where $\ValTerm \HowClo{\arbsim} \ValTerm'$ hence $\ValTerm = \ValTerm'$. The rest of the proof is similar to case (xi).
	\end{enumerate}
	That finishes the case distinction, so we know that for any shape of $\GenTerm'$ it holds that $|\Stack\{\GenTerm'\}|_{n+1} \Qrelator{\HowClo{\arbsim}} |\Stackt\{\GenTermT'\}|$. As was discussed before, this is sufficient in establishing that $|\GenTerm|_{n+1} \Qrelator{\HowClo{\arbsim}} |\GenTermT|$, and hence this finishes the induction step. So the proof by induction has been finished.
\end{proof}

We can conclude that $\ComTerm \HowClo{\arbsim} \ComTermT \Rightarrow |\ComTerm| \Qrelator{\HowClo{\arbsim}} |\ComTermT|$ for closed terms of type $\ForceType{\ValType}$. As such, we can conclude:

\begin{lemma}[Main Lemma]
	Given a Scott open and decompositional $\QObser$, then the Howe closure of a $\QObser$-simulation is a $\QObser$ simulation.
\end{lemma}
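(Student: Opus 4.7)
The plan is to check directly that the Howe closure $\HowClo{\arbsim}$ satisfies each of the seven clauses of Definition \ref{def:simulation}. The first six clauses are already prepared by the preliminary lemmas just proved, while the seventh (the interesting one, about effect trees) reduces to the Key Lemma together with $\omega$-continuity of the relator.

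First I would dispatch clauses (1)--(6). Clause (1), for $\NatType$, is exactly Lemma \ref{Nequ}. Clauses (2), (5), (6) for $\ThunkType{\ComType}$, $\FunctType{\ValType}{\ComType}$, and $\ProdType{i}{I}{\ComType_i}$ respectively follow from compatibility of $\HowClo{\arbsim}$ (Lemma \ref{Hprop1}(1)) via the unpacking already recorded in Lemma \ref{lemma:How_simul_var}: e.g.\ if $\ValTerm \HowClo{\arbsim}_{\ThunkType{\ComType}} \ValTermT$ then $\force{\ValTerm} \ComExt{\HowClo{\arbsim}} \force{\ValTermT}$ by rule \textbf{C10}, and a compatible refinement is contained in the relation itself. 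Clauses (3) and (4) for $\SumType{i}{I}{\ValType_i}$ and $\PairType{\ValType}{\ValTypeT}$ are restated directly in Lemmas \ref{lemma:How_simul_sum} and \ref{lemma:How_simul_pair}.

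The remaining clause is (7): if $\ComTerm \HowClo{\arbsim}_{\ForceType{\ValType}} \ComTermT$ then $|\ComTerm| \,\Qrelator{\HowClo{\arbsim}_{\ValType}}\, |\ComTermT|$. Here I would invoke the Key Lemma to get $|\ComTerm|_n \,\Qrelator{\HowClo{\arbsim}_{\ValType}}\, |\ComTermT|$ for every $n \in \Nat$. To pass from these finite approximations to $|\ComTerm| = \bigsqcup_n |\ComTerm|_n$, I would apply Lemma \ref{relat:order}(2) to the chain $|\ComTerm|_0 \Tord |\ComTerm|_1 \Tord \dots$ (matched with the constant chain at $|\ComTermT|$), which is precisely the point where Scott tree continuity of the modalities in $\QObser$ is used. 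This yields $|\ComTerm| \,\Qrelator{\HowClo{\arbsim}_{\ValType}}\, |\ComTermT|$, completing clause (7).

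The main obstacle in the overall argument is of course the Key Lemma itself, which has already been established; at this final step, the only subtlety is making sure that the hypotheses needed to invoke Lemma \ref{relat:order}(2) (namely Scott tree continuity and leaf-monotonicity) are in force, which is exactly what ``Scott open and decompositional'' guarantees. Assembling these ingredients then immediately concludes that $\HowClo{\arbsim}$ is a $\QObser$-simulation.
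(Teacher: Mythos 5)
Your proposal is correct and follows essentially the same route as the paper: the first six simulation clauses are discharged by Lemmas \ref{Nequ}, \ref{lemma:How_simul_var}, \ref{lemma:How_simul_sum} and \ref{lemma:How_simul_pair}, and clause (7) is obtained from the Key Lemma by passing to the supremum of the approximation chain via Lemma \ref{relat:order}(2), exactly as the paper does. Nothing is missing.
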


In particular, the Howe's closure of $\QObser$-similarity is a $\QObser$-simulation, and hence the $\QObser$-similarity itself.
Since the Howe's closure of a preorder is itself compatible, we can can conclude that the $\QObser$-similarity is compatible.
We can now derive Theorem \ref{theorem:bisim} as stated in Section \ref{section:Howe}, with the same method as in \cite{modal}.
The bisimilarity part of this result is established using what is known as the \emph{transitive closure trick} (see e.g. \cite{modal}).

\end{document}